\newcommand\del[1]{}
\newtheorem{theorem}{Theorem}[section]
\newtheorem{corollary}[theorem]{Corollary}
\newtheorem{proposition}[theorem]{Proposition}
\theoremstyle{definition}
\newtheorem{definition}[theorem]{Definition}
\newtheorem{lemma}[theorem]{Lemma}
\newtheorem{example}[theorem]{Example}
\newtheorem{remark}[theorem]{Remark}
\newtheorem{claim}[theorem]{Claim}
\numberwithin{equation}{section}
\newcommand\N{{\mathbb{N}}}
\newcommand\R{{\mathbb{R}}}
\newcommand\TT{{\mathbb{T}}}
\newcommand\pd{{\partial}}
\renewcommand\L{{\mathcal{L}}}
\newcommand{\hk}{\mathbin{\! \hbox{\vrule height0.3pt width5pt depth 0.2pt \vrule height5pt width0.4pt depth 0.2pt}}}
\newcommand\dt{\frac{d}{dt}}
\newcommand\pdx{\frac{\pd}{\pd x^i}}
\begin{document}


\title{Noether's Theorem}
\author{Nat Leung}
\begin{titlepage}
\begin{center}
\setlength{\parindent}{0pt}
\setlength{\parskip}{17pt}

\vspace*{10pt}

{\Large \bf{Noether's Theorem Under the Legendre Transform}\rm\\ 
\vspace{0.2cm}

\Large{by}
\vspace{-0.2cm}

\Large{Jonathan Herman}}

\vspace*{10pt}

{\small A research paper\\  
\vspace{0.1cm}
presented to the University of Waterloo\\
\vspace{0.1cm}in fulfilment of the\\
\vspace{0.1cm}research paper requirement for the degree of\\
\vspace{0.1cm}Master of Mathematics\\
\vspace{0.1cm}in\\
\vspace{0.1cm}Pure Mathematics

\vspace*{\fill}

Waterloo, Ontario, Canada, 2014

\copyright \  Jonathan Herman, 2014
}
\end{center}
\end{titlepage}

 \del{
\cleardoublepage

 \pagenumbering{roman}
 \setcounter{page}{2}
 
\vspace*{0.5in}

\noindent
{   \setlength{\parindent}{0pt}
   \setlength{\parskip}{24pt}
   \setlength{\textwidth}{7in}

   {\sffamily\bfseries \index{copyright!author's declaration}
   AUTHOR'S DECLARATION}

   I hereby declare that I am the sole author of this research paper.  This is a true
   copy of the research paper, including any required final revisions, as accepted by
   my examiners.

   I understand that my research paper may be made electronically available to the
   public. 
   
   \vspace*{1.0in}
   
Jonathan Herman
}

}

\newpage

\begin{abstract}

In this paper we demonstrate how the Legendre transform connects the statements of Noether's theorem in Hamiltonian and Lagrangian mechanics. We give precise definitions of symmetries and conserved quantities in both the Hamiltonian and Lagrangian frameworks and discuss why these notions in the Hamiltonian framework are somewhat less rigid. We explore conditions which, when put on these definitions, allow the Legendre transform to set up a one-to-one correspondence between them. We also discuss how to preserve this correspondence when the definitions of symmetries and conserved quantities are less restrictive.

\end{abstract}

\newpage


\onehalfspacing
\cleardoublepage
\del{
\section*{Acknowledgements}
First and foremost, a very sincere thank you goes to my supervisor Dr. Spiro Karigiannis. I am extremely grateful for his patience and the tremendous amount of time he spent teaching and helping me this summer. I would also like to thank Dr. Shengda Hu, my second reader, for the very useful comments and corrections that he provided. I need to acknowledge two of my good friends; Janis Lazovksis and Cameron Williams. Janis is a LaTeX machine, and I am very appreciative of the time he spent helping me this summer. Cam was a great support this year, always there to help me work through any problem. I also would like to thank my family for their constant love and interest, it means very much to me. Last, but not least (I'd say least goes to Cameron Williams), a thank you goes to my cousin Matt Rappoport, for without our discussions some of the contents in this paper would not exist.
}


\del{

\cleardoublepage
\vspace*{70pt}
\begin{center}
\itshape OPTIONAL DEDICATION CAN GO HERE.
\end{center}

}

\newpage

\tableofcontents

\newpage

\pagenumbering{arabic}
\setcounter{page}{1}

\section{Introduction}

This paper studies the theorem that Emmy Noether published in 1918, which provides a mathematical way to see connections between `symmetries' and `conserved quantities'.   As we shall see, Noether's theorem can be stated in both the Lagrangian and Hamiltonian frameworks. In Section 6 we demonstrate how the Legendre transform relates these statements and furthermore how, under specific requirements, it gives a one-to-one correspondence between the respective notions of symmetry and conserved quantity. 
\vspace{0.3cm}

Section 2 is dedicated to introducing the tools needed from symplectic geometry to formulate Hamiltonian mechanics. In Section 1.6 we will see how geodesic flow on a Riemannian manifold arises as a symplectomorphism generated by a specific diffeomorphism. In Section 5.3 we apply the Legendre transform to this setup and recover an equivalent way to define geodesic flow in the Hamiltonian framework.
\vspace{0.3cm}

Section 3 gives an introduction to Lagrangian mechanics. In particular, we derive the Euler-Lagrange equations using tools from the calculus of variations. A Lagrangian is just a smooth function on the tangent bundle and we will see that when this function is a `natural Lagrangian,' the Euler-Lagrange equations are equivalent to Newton's second law. We also demonstrate how the Euler-Lagrange equations are a generalization of Newton's second law; in particular, the Euler-Lagrange equations hold in non-inertial reference frames. We give many examples of Lagrangian systems and then translate these systems to the Hamiltonian framework in Section 5.
\vspace{0.3cm}

In Section 4 we use the tools introduced in Section 2 to study some basic notions in Hamiltonian mechanics. As mentioned above,  the main object of study in Lagrangian systems is the Lagrangian, which is just a smooth function on the tangent bundle $TM$. In Hamiltonian mechanics the main object is the Hamiltonian, which is just a smooth function on the cotangent bundle $T^\ast M$. In Section 2.4 we show how the cotangent bundle always has a canonical symplectic structure and so we see that an advantage of Hamiltonian mechanics is that it incorporates the use of tools from symplectic geometry.
\vspace{0.3cm}

After introducing Lagrangian and Hamiltonian mechanics, Section 5 demonstrates how the two formulations are equivalent under the Legendre transform. Given a Lagrangian $L\in C^\infty(TM)$ we get an induced map called the Legendre transform, which we denote by $\Phi_L$, from $TM$ to $T^\ast M$. Similarily, given a Hamiltonian $H\in C^\infty(T^\ast M)$ we get the induced Legendre transform $\Phi_H:T^\ast M\to TM$. Under certain conditions, which we discuss, the Legendre transform is a diffeomorphism. We use the Legendre transform to translate examples given in Section 3 and Sections 4 into the opposing frameworks. In particular, we will see how the Legendre transform takes motions in one framework to motions in the other. 
\vspace{0.3cm}

In Section 6 we study Noether's theorem in both the Lagrangian and Hamiltonian frameworks. We give physical examples in both settings to demonstrate the power of this theorem. We then show how the statements of Noether's theorem can be translated, under the Legendre transform, from one framework to the other. We give examples of how the Legendre transform takes symmetries to symmetries and conserved quantities to conserved quantities. With the definitions given, we use the Laplace-Runge-Lenz vector to show how the notions of symmetry and conserved quantity are not in one-to-one correspondence. However, we fix this problem by putting restrictions on the symmetries and conserved quantities. Lastly, we discuss the problem of how to make the correspondence one-to-one when the definitions are more general. 
\vspace{0.3cm}

Throughout this paper we will use the Einstein summation convention.

\del{

We will see how Hamiltonian mechanics

Once we give the statement and proof of Noether's theorem in both Lagrangian and Hamiltonian mechanics, we will see how these theorems can be obtained from each other, via the Legendre transform, and 
The goal of this paper is to demonstrate how the statements of Noether's theorem in Lagrangian and Hamiltonian mechanics can be translated 

, the notion of symmetry and conserved quantityMore precisely, we will give precise definitions of symmetries and conserved quantities in both the Lagrangian and Hamiltonian frameworks and show how the Legendre transform gives a one-to-one correspondence between these definitions.  how the notions of symmetries and conserved quantities in the Hamiltonian and Lagrangian frameworks are in

 \\

Before putting Noether's theorem under the Legendre transform, we first demonstrate other ways in which this function relates the two formulations of mechanics. In particular, we will see how the Legendre transform `translates' the concepts of geodesic flow and the defined motions in the two formulations. We start by introducing 

}
\newpage

\section{Symplectic Geometry}
We discuss here the concepts in symplectic geometry which will be needed to formulate Hamiltonian mechanics in Section 4. In Riemannian geometry, manifolds are equipped with a non-degenerate symmetric quadratic form, whereas  in symplectic geometry the non-degenerate quadratic form is required to be skew-symmetric. Although there are some similarities between symplectic and Riemannian geometry, as we shall see there are also some vast differences. 


\subsection{Symplectic Vector Spaces}

Let $V$ be an $m$ dimensional real vector space and $\Omega: V\times V\to\R$ a skew-symmetric bilinear map. Let $U=\left\{u\in V \ ; \ \Omega(u,v)=0\text{ for all } v\in V\right\}$. Suppose that $\dim U=k$ and that $\{u_1,\dots, u_k\}$ is a basis. Recall the standard form theorem for skew-symmetric bilinear maps:

\begin{theorem} \bf{(Standard Form for Skew-Symmetric Bilinear Maps)} \rm With $U, V$ and $\Omega$ as above, we can find $n\in \N$ and a basis $u_1,\dots,u_k,e_1,\dots e_n,f_1\dots,f_n$ of $V$ such that
\[
\begin{array}{l c l}
\Omega(u_i,v)=0&\ &\text{for all $i$ and for all $v\in V$}\\
\Omega(e_i,e_j)=0=\Omega(f_i,f_j)&\ &\text{for all $i,j$}\\
\Omega(e_i,f_j)=\delta_i^j& \ & \text{for all $i,j$}
\end{array}
\]
\end{theorem}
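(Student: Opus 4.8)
The plan is to prove this by a Gram–Schmidt-style induction that builds the symplectic basis pair by pair, after first quotienting out the radical $U$.

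First I would observe that $U$ is exactly the radical (kernel) of $\Omega$, and since $\{u_1,\dots,u_k\}$ is a basis for $U$, the first line of conditions is automatic for any completion of this basis: if $u \in U$ then $\Omega(u,v)=0$ for all $v\in V$ by definition. So the real content is to produce the $e_i, f_i$ spanning a complement $W$ of $U$ on which $\Omega$ is nondegenerate. I would pick any complementary subspace $W$ with $V = U \oplus W$ and note that the restriction $\Omega|_W$ is nondegenerate: if $w \in W$ satisfied $\Omega(w,v)=0$ for all $v\in W$, then combined with $\Omega(w,u)=0$ for $u\in U$ (true since $u$ is in the radical of all of $V$) we would get $w \in U \cap W = \{0\}$. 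This reduces the problem to the nondegenerate case on $W$, and the $U$-directions decouple completely.

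Next, on the nondegenerate space $W$, I would run induction on $\dim W$. If $\dim W = 0$ we are done with $n=0$. Otherwise pick any nonzero $e_1 \in W$. By nondegeneracy there is some vector with $\Omega(e_1, \cdot)$ nonzero, so I can find $f_1 \in W$ with $\Omega(e_1,f_1) \neq 0$, and after rescaling $f_1$ arrange $\Omega(e_1,f_1)=1$. Skew-symmetry gives $\Omega(e_1,e_1)=\Omega(f_1,f_1)=0$ automatically, so the pair $\{e_1,f_1\}$ already satisfies the required relations among themselves. Let $W_1 = \operatorname{span}\{e_1,f_1\}$ and consider its symplectic orthogonal complement $W_1^{\Omega} = \{w \in W : \Omega(w,e_1)=\Omega(w,f_1)=0\}$. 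The key linear-algebra step is to check $W = W_1 \oplus W_1^{\Omega}$: given any $w\in W$, the explicit vector $w' = w - \Omega(w,f_1)e_1 + \Omega(w,e_1)f_1$ lies in $W_1^{\Omega}$ (a direct computation using $\Omega(e_1,f_1)=1$), giving the decomposition, and $W_1 \cap W_1^{\Omega} = \{0\}$ follows because $\Omega|_{W_1}$ is itself nondegenerate.

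The main obstacle, and the step deserving the most care, is verifying that $\Omega$ restricted to $W_1^{\Omega}$ is again nondegenerate, so that the inductive hypothesis applies to this strictly smaller space. This follows because $W_1^{\Omega}$ inherits nondegeneracy from $W$: any radical vector of $\Omega|_{W_1^{\Omega}}$ would, together with the fact that it is already $\Omega$-orthogonal to $W_1$, be a radical vector of $\Omega|_W$, contradicting nondegeneracy on $W$. Applying the induction hypothesis to $W_1^{\Omega}$ yields a symplectic basis $e_2,\dots,e_n,f_2,\dots,f_n$ satisfying all three relations among themselves, and since every $e_i,f_i$ with $i\ge 2$ lies in $W_1^{\Omega}$ they are $\Omega$-orthogonal to $e_1$ and $f_1$, so the cross relations $\Omega(e_1,e_j)=\Omega(e_1,f_j)=\cdots=0$ hold as well. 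Assembling $u_1,\dots,u_k,e_1,\dots,e_n,f_1,\dots,f_n$ then gives a basis of $V$ satisfying all the stated conditions, completing the proof.
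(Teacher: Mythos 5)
Your argument is correct and is essentially the standard induction that the paper itself defers to (the paper gives no proof beyond citing \cite{Da Silva}, where this same pair-by-pair construction is carried out). Splitting off the radical $U$, extracting a pair $e_1,f_1$ with $\Omega(e_1,f_1)=1$, decomposing $W=W_1\oplus W_1^{\Omega}$ via the explicit projection $w\mapsto w-\Omega(w,f_1)e_1+\Omega(w,e_1)f_1$, and inducting on the nondegenerate complement are all verified correctly.
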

\begin{proof}
This is a fairly straightforward induction proof. See \cite{Da Silva}, page 3 for details.
\end{proof}
It follows that, with respect to this basis, the matrix representation of $\Omega$ is 
\[
\left[\begin{matrix}
0 & 0&0\\
0&0&\mathrm{Id}\\
0&-\mathrm{Id} & 0\\
\end{matrix}\right]
\]


\begin{definition}The bilinear map $\Omega$ is said to be \bf{symplectic (or non-degenerate)} \rm if $U=\{0\}$. If this is the case then the pair $(V,\Omega)$ is called a \bf{symplectic vector space} \rm and $\left\{e_1,\dots,e_n,f_1,\dots,f_n\right\}$ is called the corresponding \bf{symplectic basis}\rm.
\end{definition} 

It follows from the theorem that any symplectic vector space is necessarily even dimensional and the corresponding skew-symmetric bilinear map is of the form
\[
\left[\begin{matrix}
0&\text{Id}\\
-\text{Id} & 0\\
\end{matrix}\right]
\]
\begin{example}\bf(Symplectic Vector Space Prototype)\rm

The simplest example of a symplectic vector space is $(\R^{2n},\Omega_0)$ where $\Omega_0$ is defined such that $e_1=(1,0,\dots,0),\dots, e_n=(0,\dots, 0,1,0\dots, 0)$ together with $f_1=(0,\dots,0,1,0\dots,0),\dots , f_n=(0,\dots, 0,1)$ form a symplectic basis. The reason this symplectic vector space is referred to as a prototype is given by the Darboux theorem, which is stated in section $2.2$.
\end{example}


\begin{definition} Let $(V_1,\Omega_1)$ and $(V_2,\Omega_2)$ be symplectic vector spaces. A linear isomorphism $\varphi:V_1\to V_2$ is called a \bf{symplectomorphism} \rm if $\varphi^\ast\Omega_2=\Omega_1$. Here $\varphi^\ast$ is the pullback of $\varphi$ meaning that $(\varphi^\ast\Omega_2)(u,v)=\Omega_2(\varphi(u),\varphi(v))$.
\end{definition}

In the same way that a Riemannian metric  induces the musical isomorphism between $TM$ and $T^\ast M$, where $M$ is some Riemannian manifold, so the skew-symmetric bilinear form $\Omega$ induces a natural isomorphism between $V$ and $V^\ast$.

\begin{proposition}
Given a symplectic vector space $(V,\Omega)$, the non-degenerate bilinear form $\Omega$ induces an isomorphism between $V$ and $V^\ast$ through the map \[V\to V^\ast \ \ \ \ \ \ \ v\mapsto \Omega(v,\cdot)\]
\end{proposition}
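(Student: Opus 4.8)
The plan is to exhibit the stated assignment $\Psi \colon V \to V^\ast$, $\Psi(v) = \Omega(v,\cdot)$, as a linear isomorphism. First I would confirm that the map is well defined, i.e. that $\Omega(v,\cdot)$ really is an element of $V^\ast$: this is immediate from the linearity of $\Omega$ in its second argument. Linearity of $\Psi$ itself, namely $\Psi(av + bw) = a\Psi(v) + b\Psi(w)$, follows in the same way from the bilinearity of $\Omega$ in its first argument.

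Next, since $V$ is finite dimensional with $\dim V = \dim V^\ast$ (both equal to $2n$ by the standard form theorem), it suffices to prove that $\Psi$ is injective; injectivity together with equality of dimensions forces surjectivity by rank--nullity, and hence $\Psi$ is an isomorphism. To check injectivity I would compute the kernel of $\Psi$. By definition $v \in \ker\Psi$ exactly when $\Omega(v,\cdot)$ is the zero functional, that is, when $\Omega(v,w) = 0$ for every $w \in V$. But this is precisely the condition defining the subspace $U = \{u \in V : \Omega(u,v) = 0 \text{ for all } v \in V\}$ from the setup, and non-degeneracy of $\Omega$ means $U = \{0\}$. Therefore $\ker\Psi = \{0\}$ and $\Psi$ is injective.

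The argument is essentially routine, and the only point requiring any care is the recognition that $\ker\Psi$ coincides exactly with the subspace $U$ --- this is the hinge that converts the hypothesis of non-degeneracy directly into injectivity. I would also be mindful that the implication ``injective $\Rightarrow$ isomorphism'' relies on finite dimensionality, which holds here since $\dim V = 2n$; in infinite dimensions this step would fail and a separate surjectivity argument would be required.
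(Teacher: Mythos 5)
Your proposal is correct and follows essentially the same route as the paper: non-degeneracy gives injectivity (via identifying $\ker\Psi$ with the subspace $U$), and equality of dimensions $\dim V = \dim V^\ast$ upgrades this to an isomorphism. Your version simply spells out the details that the paper leaves implicit.
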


\begin{proof}
The non-degeneracy of $\Omega$ shows this map is injective, while we know that $\dim V=
\dim V^\ast$. Hence this is indeed an isomorphism.
\end{proof}

\begin{definition}
Let $(V,\Omega)$ be a finite dimensional symplectic vector space and $Y\subset V$ a subspace. The \bf{symplectic complement} \rm of $Y$ is defined to be the subspace \[Y^\Omega:=\left\{v\in V\ ; \ \Omega(v,u)=0\text{ for all } u\in Y\right\}.\]
\end{definition} 

For a subspace $Y\subset V$, consider the map \[\Phi:V\to Y^\ast \ \ \ \ \ \ \ v\mapsto \Omega(v,\cdot)|_Y\]It's clear that $\ker\Phi= Y^\Omega$. The surjectivity of $\Phi$ follows by combining Proposition  2.5 together with the fact that any element of $\alpha\in Y^\ast$ can be extended to an element of $\widetilde\alpha\in V^\ast$ such that $\left.\widetilde\alpha\right|_Y=\alpha$. 
It follows, by the first isomorphism theorem, $V/Y^\Omega\cong Y^\ast$. Since $\dim Y=\dim Y^\ast$, we have that  $\dim V=\dim Y+\dim Y^\Omega$. Moreover, by definition, $\Omega|_{Y\times Y}$ is non-degenerate if and only if $Y\cap Y^\Omega=0$. That is $\Omega|_{Y\times Y}$ is non-degenerate if and only if $V=Y\oplus Y^\Omega$. This leads to the following definition.

\begin{definition}
 If $\Omega|_{Y\times Y}\equiv 0$ then $Y$ is called an \bf{isotropic} \rm subspace of $V$. If $Y$ is isotropic and $\dim Y=\frac{1}{2}\dim V$ then $Y$ is called a \bf{Lagrangian subspace} \rm of $V$.
 \end{definition} 

The above remarks give us 

\begin{proposition} A subspace $Y\subset V$ is Lagrangian if and only if $Y=Y^\Omega$.
\end{proposition}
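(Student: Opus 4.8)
The plan is to prove both implications directly, leaning on the two facts already assembled in the remarks preceding Definition 2.8: the dimension formula $\dim V = \dim Y + \dim Y^\Omega$, and the characterization of non-degeneracy via $Y \cap Y^\Omega$. Before splitting into cases, I would record the key reformulation of isotropy. By the definition of the symplectic complement, $v \in Y^\Omega$ means $\Omega(v,u) = 0$ for every $u \in Y$. Hence the containment $Y \subseteq Y^\Omega$ holds if and only if $\Omega(v,u) = 0$ for all $v, u \in Y$, i.e.\ if and only if $\Omega|_{Y\times Y} \equiv 0$, which is precisely the condition that $Y$ be isotropic. This translation between the complement and the isotropy condition is the hinge of the whole argument.

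For the forward direction, suppose $Y$ is Lagrangian. Then $Y$ is isotropic, so by the reformulation $Y \subseteq Y^\Omega$. Moreover $\dim Y = \frac{1}{2}\dim V$, and the dimension formula gives $\dim Y^\Omega = \dim V - \dim Y = \frac{1}{2}\dim V = \dim Y$. A subspace contained in another subspace of the same finite dimension must coincide with it, so $Y = Y^\Omega$.

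For the converse, suppose $Y = Y^\Omega$. In particular $Y \subseteq Y^\Omega$, so by the reformulation $Y$ is isotropic. Substituting $\dim Y = \dim Y^\Omega$ into the dimension formula yields $\dim V = 2\dim Y$, hence $\dim Y = \frac{1}{2}\dim V$. Thus $Y$ is isotropic and of half-dimension, i.e.\ Lagrangian, completing the equivalence. I do not expect a genuine obstacle here, since every ingredient is supplied by the preceding discussion; the only point demanding care is the equivalence between isotropy and the containment $Y \subseteq Y^\Omega$, and the only mildly substantive step is the elementary fact that containment together with equal finite dimension forces equality.
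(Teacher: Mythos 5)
Your proof is correct and follows exactly the route the paper intends: the paper simply asserts that the proposition follows from the preceding remarks (the dimension formula $\dim V=\dim Y+\dim Y^{\Omega}$ and the identification of isotropy with the containment $Y\subseteq Y^{\Omega}$), and you have spelled out precisely that argument. No differences worth noting.
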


\subsection{Symplectic Manifolds}
Let $M$ be a manifold and let $\omega\in\Omega^2(M)$ be a $2$-form. By definition, for each $p\in M$ we have that $\omega(p):=\omega_p$ is a skew-symmetric bilinear map $\omega_p:T_pM\times T_pM\to\R$.
\begin{definition} A $2$-form $\omega\in\Omega^2(M)$ is said to be \bf {symplectic} \rm if it is closed and if $\omega_p$ is symplectic (non-degenerate) for each $p\in M$. In such a case, the pair $(M,\omega)$ is called a \bf{symplectic manifold.} \rm By the standard form theorem, a symplectic manifold is necessarily even dimensional.
\end{definition}
\vspace{0.1cm}

\begin{definition}
Given a $1$-form $\mu\in T^\ast M$, the unique vector field $V_\mu$ in $T^\ast M$ such that $\omega(V_\mu,\cdot)=\mu$ is called the \bf{symplectic dual} \rm of $\mu$. That is, for each $p\in M$ we set $V_\mu(p)$ to be the pre-image of $\mu(p)$ under the map defined in Proposition 2.5. In other words, $V_\mu$ is the unique vector field satisfying \[V_\mu \hk  \omega=\mu\]
\end{definition}

\begin{example}\bf{(Prototype of a Symplectic Manifold)}\rm
\vspace{0.1cm}

Let $M=\R^{2n}$ with standard coordinates $x^1,\dots,x^n,y_1,\dots, y_n$. The form $\omega_0:=dx^i\wedge dy_i$ is symplectic, and $T_pM\cong\R^2$ has symplectic basis 
$\left\{\left. \frac{\pd}{\pd x^1}\right|_p,\dots,\left. \frac{\pd}{\pd x^n}\right|_p,\left. \frac{\pd}{\pd y_1}\right|_p,\dots, \left. \frac{\pd}{\pd y_n}\right|_p\right\}$ so that $(\R^{2n},\omega_0)$ is a symplectic manifold.
\end{example}
The Darboux theorem shows why the  manifold above can be thought of as the prototype of symplectic manifolds. The theorem locally classifies symplectic manifolds up to symplecteomorphism. That is, locally every symplectic manifold is symplectomorphic to $(\R^{2n},\omega_0)$.
\begin{theorem} \bf{(Darboux)} \rm Let $(M,\omega)$ be a symplectic manifold. For any $p\in M$ there exists a coordinate chart $(U,x^1,\dots x^n,y_1,\dots, y_n)$ centred at $p$ such that \[\omega|_U=\sum_{i=1}^ndx^i\wedge dy_i\]
\end{theorem}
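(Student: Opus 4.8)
The plan is to prove the theorem by the Moser deformation argument, which turns the problem into solving a linear first-order ODE for a time-dependent vector field. Since the statement is purely local, I would first fix an arbitrary coordinate chart centred at $p$ and use it to identify a neighbourhood of $p$ with an open subset of $\R^{2n}$ carrying the standard form $\omega_0 := \sum_i dx^i\wedge dy_i$. Applying the Standard Form theorem (Theorem 2.1) to the single vector space $T_pM$, I can choose this chart so that $\omega$ and $\omega_0$ agree \emph{at the point} $p$; this is the only place where the linear algebra of Section 2.1 enters.

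Next I would interpolate between the two forms by setting $\omega_t := (1-t)\omega_0 + t\,\omega$ for $t\in[0,1]$. Each $\omega_t$ is closed, and since $\omega_t|_p = \omega|_p$ is non-degenerate and non-degeneracy is an open condition, after shrinking the neighbourhood I may assume every $\omega_t$ is symplectic on it. The goal is then to produce a family of local diffeomorphisms $\psi_t$ fixing $p$, with $\psi_0 = \mathrm{id}$ and $\psi_t^\ast\omega_t = \omega_0$ for all $t$. The map $\psi_1$ will then satisfy $\psi_1^\ast\omega = \omega_0$, and transporting the coordinate functions through $\psi_1^{-1}$ will yield the desired Darboux chart $\tilde x^i = x^i\circ\psi_1^{-1}$, $\tilde y_i = y_i\circ\psi_1^{-1}$.

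To find $\psi_t$, I would seek it as the flow of a time-dependent vector field $X_t$ and differentiate the target equation. Using $\frac{d}{dt}\omega_t = \omega - \omega_0$ together with Cartan's formula $\L_{X_t}\omega_t = d(X_t\hk\omega_t) + X_t\hk d\omega_t$ and $d\omega_t = 0$, the requirement $\frac{d}{dt}(\psi_t^\ast\omega_t) = 0$ collapses to the Moser equation $d(X_t\hk\omega_t) + (\omega-\omega_0) = 0$. Since $\omega-\omega_0$ is closed and vanishes at $p$, the Poincaré lemma lets me write $\omega - \omega_0 = d\sigma$ on a contractible neighbourhood, and here I would take care to choose the primitive so that $\sigma_p = 0$. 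The equation is then solved by the \emph{unique} vector field $X_t$ with $X_t\hk\omega_t = -\sigma$, whose existence and uniqueness follow from the non-degeneracy of $\omega_t$ (Proposition 2.5 applied pointwise).

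The step I expect to be the main obstacle is the passage from the infinitesimal data $X_t$ to an honest flow defined all the way up to $t=1$ on a fixed neighbourhood of $p$. The observation that rescues this is that $\sigma_p = 0$ forces $X_t(p) = 0$, so $p$ is a stationary point of the flow; a standard argument on flows near a fixed point then guarantees that $\psi_t$ is defined for every $t\in[0,1]$ on some (possibly smaller) neighbourhood of $p$. Granting this, $\psi_1$ is the required symplectomorphism and the theorem follows. I would also verify the direction bookkeeping in converting $\psi_1^\ast\omega = \omega_0$ into the coordinate identity $\omega = \sum_i d\tilde x^i\wedge d\tilde y_i$, but this last point is routine.
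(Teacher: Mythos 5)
Your Moser-trick argument is correct: the interpolation $\omega_t=(1-t)\omega_0+t\omega$, the reduction of $\frac{d}{dt}(\psi_t^\ast\omega_t)=0$ to $d(X_t\hk\omega_t)+(\omega-\omega_0)=0$ via Cartan's formula, the choice of a primitive $\sigma$ with $\sigma_p=0$ (which the radial homotopy operator of the Poincar\'e lemma supplies automatically, since $\omega-\omega_0$ vanishes at $p$), and the observation that $X_t(p)=0$ rescues the existence of the flow up to $t=1$ on a shrunken neighbourhood are all exactly the right steps, and nothing is missing. This is, however, a genuinely different route from the one the paper points to: the paper defers to Lee, whose proof builds the Darboux chart directly by an inductive Frobenius-type argument --- one produces functions whose Hamiltonian vector fields have commuting flows and uses them as coordinates, verifying at each stage the characterization of Darboux coordinates in terms of Poisson brackets. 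That construction is more hands-on and needs no deformation of forms, no Poincar\'e lemma with pointwise control, and no time-dependent ODE; the price is a somewhat longer induction. Your approach is shorter once the Moser machinery is set up, and it buys considerably more: the same argument gives the relative (Weinstein) Darboux theorem along a submanifold and the Moser stability theorem for families of symplectic forms, neither of which falls out of the coordinate-by-coordinate construction. Either proof is acceptable here; yours is the one in da Silva's notes, which the paper already cites elsewhere.
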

The coordinates giving this local expression of $\omega$ are called \bf{Darboux coordinates}\rm.
\begin{proof}
The proof is just an application of the Frobenius theorem together with a characterization of Darboux coordinates. See \cite{Lee}, page 349 for the details.
\end{proof}

In the same way we defined Lagrangian subspaces of a vector space, we can define Lagrangian submanifolds.

\begin{definition}Given a symplectic manifold $(M,\omega)$, a submanifold $(N,\iota)$ of $M$ is called a \bf{Lagrangian submanifold} \rm if at each $p\in N$, \ $T_pN$ is a Lagrangian subspace of $T_pM$. That is, $N$ is Lagrangian if and only if $\iota^\ast\omega=\omega|_{T_pN\times T_pN}=0$ and $\dim N=\frac{1}{2}\dim M$.
\end{definition}

We finish this subsection with a simple yet important proposition.

\begin{proposition} Let $(M_1,\omega_1)$ and $(M_2,\omega_2)$ be symplectic manifolds. If $(L,\iota)$ is a Lagrangian submanifold of $(M_1,\omega_1)$ and $f:(M,\omega_1)\to (M_2,\omega_2)$ is a symplectomorphism, then $(f(L),f\circ\iota)$ is a Lagrangian submanifold of $(M_2,\omega_2)$. 
\end{proposition}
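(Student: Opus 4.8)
The plan is to verify directly the two conditions of Definition 2.13 for the pair $(f(L),f\circ\iota)$: namely that the pullback of $\omega_2$ under $f\circ\iota$ vanishes identically, and that $\dim f(L)=\frac{1}{2}\dim M_2$. The whole argument should reduce to the functoriality (contravariance) of the pullback combined with the two standing hypotheses, $f^\ast\omega_2=\omega_1$ (since $f$ is a symplectomorphism) and $\iota^\ast\omega_1=0$ (since $L$ is Lagrangian).

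First I would check that $(f(L),f\circ\iota)$ is genuinely a submanifold. Since $\iota:L\to M_1$ is an embedding and $f:M_1\to M_2$ is a diffeomorphism, the composition $f\circ\iota:L\to M_2$ is again an embedding, so $f(L)$ is an embedded submanifold of $M_2$ and in particular $\dim f(L)=\dim L$. The heart of the argument is then the pullback computation: using contravariance, $(f\circ\iota)^\ast\omega_2=\iota^\ast(f^\ast\omega_2)$, and substituting $f^\ast\omega_2=\omega_1$ gives $\iota^\ast\omega_1$, which is zero because $L$ is Lagrangian. Hence $(f\circ\iota)^\ast\omega_2=0$, which is exactly the isotropy half of the Lagrangian condition.

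For the dimension half, since $f$ is a diffeomorphism we have $\dim M_1=\dim M_2$, and combining this with the embedding observation yields $\dim f(L)=\dim L=\frac{1}{2}\dim M_1=\frac{1}{2}\dim M_2$. Together with the vanishing of $(f\circ\iota)^\ast\omega_2$, this shows $(f(L),f\circ\iota)$ satisfies both requirements of Definition 2.13 and is therefore Lagrangian.

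I do not expect any serious obstacle here; the symplectic content is a two-line consequence of functoriality. The only point that genuinely needs care — as opposed to formal manipulation — is the submanifold bookkeeping: confirming that $f\circ\iota$ is an embedding so that $f(L)$ is a bona fide submanifold and the equality $\dim f(L)=\dim L$ is justified. Once that is secured, everything else is immediate.
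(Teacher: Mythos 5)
Your proposal is correct and follows the same route as the paper: the key step in both is the functoriality computation $(f\circ\iota)^\ast\omega_2=\iota^\ast f^\ast\omega_2=\iota^\ast\omega_1=0$. The only difference is that you also spell out the embedding and dimension bookkeeping, which the paper leaves implicit.
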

\begin{proof}
By definition, we have that $f^\ast\omega_2=\omega_1$. Hence \[(f\circ\iota)^\ast\omega_2=\iota^\ast f^\ast\omega_2=\iota^\ast\omega_1=0\]since $(L,\iota)$ is a Lagrangian submanifold of $(X_1,\omega_1)$.
\end{proof}

Using the results from this section we can answer the question of when a diffeomorphism between two symplectic manifolds is a symplectomorphism. 

\subsection{When is a Diffeomorphism a Symplectomorphism?}
Let $\varphi:(M_1,\omega_1)\to (M_2,\omega_2)$ be a diffeomorphism of two symplectic manifolds. We will see that the answer to the posed question of this subsection is ``if and only  the graph of $\varphi$ is a Lagrangian submanifold of the `twisted' symplectic manifold $(M_1\times M_2,\widetilde\omega)$.'' We first formalize the definitions in this statement.

\vspace{0.2cm}

Given the two symplectic manifolds $(M_1,\omega_1)$ and $(M_2,\omega_2)$  as above, consider their Cartesian product $M_1\times M_2$. Let $\pi_1$ and $\pi_2$ denote the projection maps onto the first and second factors respectively. For any $a,b\in\R\backslash\{0\}$, consider the $2$-form\[\omega:=a(\pi_1^\ast\omega_1)+b(\pi_2^\ast\omega_2)\]Since the exterior derivative commutes with the pull-back, it follows $\omega$ is closed. Moreover, to see that $\omega$ is symplectic, let $(p,q)\in M_1\times M_2$  be arbitrary and consider non-zero $(V_p,W_q)\in T_pM_1\times T_qM_2$. Without loss of generality, suppose that $V_p$ is nonzero. By the non-degeneracy of $\omega_1$ there exists $X_p\in T_pM$ such that $(\omega_1)_p(V_p,X_p)\not=0$ so that $\omega((p,V_p),(q,0_q))=a\cdot\omega_{1,p}(V_p,X_p)\not=0$.

\begin{definition}
In particular, taking $a=1$ and $b=-1$ we obtain the \bf{twisted product} \rm symplectic form $\widetilde\omega\in\Omega^2(M_1\times M_2)$: \[\widetilde\omega=\pi_1^\ast\omega_1-\pi_2^\ast\omega_2\]
\end{definition}

Let $\Gamma_\varphi=\left\{(p,\varphi(p)) ; \ p\in M\right\}$ denote the graph of $\varphi$. It's clear the the function \[f:M_1\to \Gamma_\varphi \ \ \ \ \ p\mapsto (p,\varphi(p))\] is an embedding. Since $\Gamma_\varphi$ is the image of $M_1$ under $f$, it follows that $\Gamma_\varphi$ is a submanifold of $M_1\times M_2$ of dimension $4n-2n=2n$. 
\vspace{0.2cm}

Hence $\Gamma_\varphi$ always satisfies `half' of the requirements of being Lagrangian. We can now prove the statement posed at the beginning of this section.

\begin{proposition}The diffeomorphism $\varphi$ is a symplectomorphism $\iff \Gamma_\varphi$ is a Lagrangian submanifold of $(M_1\times M_2,\widetilde \omega)$.
\end{proposition}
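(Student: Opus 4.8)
The plan is to exploit the fact, already observed above, that $\Gamma_\varphi$ has dimension $2n = \tfrac12\dim(M_1\times M_2)$, so that half of the Lagrangian condition is automatic. What remains is to show that the isotropy condition $\iota^\ast\widetilde\omega = 0$, where $\iota:\Gamma_\varphi\hookrightarrow M_1\times M_2$ is the inclusion, is equivalent to $\varphi^\ast\omega_2 = \omega_1$. The whole argument then reduces to a single pullback computation.

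First I would pass from $\iota$ to the embedding $f:M_1\to\Gamma_\varphi$, $p\mapsto(p,\varphi(p))$. Since $f$ is a diffeomorphism onto $\Gamma_\varphi$, we have $\iota^\ast\widetilde\omega = 0$ if and only if $f^\ast\iota^\ast\widetilde\omega = (\iota\circ f)^\ast\widetilde\omega = 0$, so it suffices to pull $\widetilde\omega$ all the way back to $M_1$ along $F:=\iota\circ f$, which is just the map $p\mapsto(p,\varphi(p))$ regarded as a map into $M_1\times M_2$.

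The key step is then to observe the two projection identities $\pi_1\circ F = \mathrm{id}_{M_1}$ and $\pi_2\circ F = \varphi$. Using these together with the functoriality of pullback, $(g\circ h)^\ast = h^\ast g^\ast$, I would expand
\[
F^\ast\widetilde\omega = F^\ast(\pi_1^\ast\omega_1 - \pi_2^\ast\omega_2) = (\pi_1\circ F)^\ast\omega_1 - (\pi_2\circ F)^\ast\omega_2 = \omega_1 - \varphi^\ast\omega_2.
\]
From this identity the equivalence is immediate: $\Gamma_\varphi$ is Lagrangian if and only if $F^\ast\widetilde\omega = 0$, if and only if $\omega_1 = \varphi^\ast\omega_2$, which is precisely the definition of $\varphi$ being a symplectomorphism.

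I do not expect any serious obstacle here; the content is entirely in correctly tracking the composite pullbacks. The one point that deserves care is the reduction from $\iota$ to $F$: I must use that $f$ is a diffeomorphism onto its image (established above) to ensure that vanishing of $F^\ast\widetilde\omega$ on $M_1$ genuinely detects vanishing of $\iota^\ast\widetilde\omega$ on $\Gamma_\varphi$, rather than merely on a proper subspace. Everything else follows from the functoriality of pullback applied to the two clean projection identities.
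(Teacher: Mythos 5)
Your proposal is correct and follows essentially the same route as the paper's own proof: both reduce the isotropy condition to the computation $(\iota\circ f)^\ast\widetilde\omega=\omega_1-\varphi^\ast\omega_2$ via the embedding $f:p\mapsto(p,\varphi(p))$ and the projection identities, with the dimension count handled separately. No gaps.
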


\begin{proof} We already know that $(\Gamma_\varphi,\iota)$ is a submanifold of $(M_1\times M_2,\widetilde\omega)$, where $\iota:\Gamma_\varphi:M_1\times M_2$ is the inclusion map. Let $f$ be as above. We have that 
\begin{align*}
\Gamma_\varphi\text{ is Lagrangian }&\iff \iota^\ast\widetilde\omega=0\\
&\iff f^\ast\iota^\ast\widetilde\omega=0 &\text{since $f$ is a diffeomorphism}\\
&\iff(\iota\circ f)^\ast\widetilde\omega=0\\
\end{align*}But \[(\iota\circ f)^\ast\widetilde\omega:=((\iota\circ f)^\ast\circ\pi_1^\ast)\omega_1-((\iota\circ f)^\ast\circ\pi_2^\ast)\omega_2=(\pi_1\circ \iota\circ f)^\ast\omega_1-(\pi_2\circ \iota\circ f)^\ast\omega_2=\omega_1-\varphi^\ast\omega_2\]Hence\[\Gamma_\varphi\text{ is Lagrangian } \iff \varphi^\ast\omega_2=\omega_1\]  

\end{proof}

\begin{remark}
It is crucial in the above proof that the $2$-form on $M_1\times M_2$ is the twisted product form, otherwise this would not work. 
\end{remark}

\subsection{Canonical Symplectic Structure of Cotangent Bundles}

Given an arbitrary manifold $M$, the total space of the cotangent bundle $T^\ast M$ can always be turned into a symplectic manifold. This subsection describes how.

\vspace{0.2cm}

Let $M$ be an arbitrary $n$-dimensional manifold and $T^\ast M$ the cotangent bundle. To turn $T^\ast M$ into a symplectic manifold we need to find a closed symplectic $2$-form $\omega\in\Omega^2(T^\ast M)$. Consider first the $1$-form $\alpha\in\Omega^1(T^\ast M)$ defined by \[\alpha_{(p,\xi_p)}(V_{(p,\xi_p)}):=\xi_p\left(\pi_\ast (V_{(p,\xi_p)})\right)\]where $(p,\xi_p)\in T_p^\ast M$ and $V_{(p,\xi_p)}\in T_{(p,\xi_p)}(T^\ast M)$ are arbitrary and $\pi_\ast$ is the differential of the projection map $\pi:T^\ast M\to M$. Define $\omega:=-d\alpha$. By definition, both $\alpha$ and $\omega$ are global forms on $T^\ast M$. After the computation of $\alpha$ and $\omega$ in local coordinates, shown below,  it is straightforward to verify that $\omega$ is symplectic. It is clear that $\omega$ is closed, since it is exact. Hence $(T^\ast M,\omega)$ is a symplectic manifold.

\begin{definition}
The $1$-form $\alpha\in\Omega^1(T^\ast M)$ is called the \bf{tautological $1$-form} \rm and the $2$-form  $\omega\in\Omega^2(T^\ast M)$ is called the \bf{canonical symplectic $2$-form}. 
\end{definition}

For future use we compute here $\alpha$ and $\omega$ in local coordinates. Let $(U,x^1,\dots,x^n)$ be an arbitrary coordinate chart in $M$ and $(T^\ast U, x^1,\dots, x^n,\xi_1,\dots,\xi_n)$ the induced chart on $T^\ast M$. The first thing to show is how $\pi_\ast:T(T^\ast M)\to TM$ works. For arbitrary $(p,\sigma_p)\in T_p^\ast M$ and $W_{(p,\sigma_p)}=W^i\left.\pdx\right|_{(p,\sigma_p)}+\widetilde W^i\left.\frac{\pd}{\pd\xi_i}\right|_{(p,\sigma_p)}\in T_{(p,\sigma_p)}(T^\ast M)$ there exists $a^i\in \R$ such that $\pi_\ast(W_{(p,\sigma_p)})=a^i\left.\pdx\right|_p$. It follows 
\begin{align*}
a^i&=(\pi_\ast(W_{(p,\sigma_p)}))(\left.dx^i\right|_p)\\
&=W_{(p,\sigma_p)}(x^i\circ\pi)\\
&=W^j\left.\frac{\pd}{\pd x^j}\right|_{(p,\sigma_p)}(x^i\circ\pi)+\widetilde W^j\left.\frac{\pd}{\pd\xi_j}\right|_{(p,\sigma_p)}(x^i\circ\pi)\\
&=W^i
\end{align*}That is, \[\pi_\ast\left(W_{(p,\sigma_p)}\right)=\left(p,W^i\left.\pdx\right|_p\right).\]  \\

Since $\alpha$ is an element of $\Gamma(T^\ast(T^\ast M))$ we have functions $a_1,\dots, a_n, b^1,\dots,b^n\in C^\infty(\pi^{-1}(U))$ such that $\alpha=a_idx^i+b^id\xi_i$. By definition, for arbitrary $(p,\sigma_p)\in T_p^\ast M$ \[a_i(p,\sigma_p)=\alpha_{(p,\sigma_p)}\left(\left.\frac{\pd}{\pd x^i}\right|_{(p,\sigma_p)}\right)=\sigma_p\left(\pi_\ast\left(\left.\frac{\pd}{\pd x^i}\right|_{(p,\sigma_p)}\right)\right)=\sigma_p\left(\left.\pdx\right|_p \right)=\sigma_i(p)=\xi_i(p,\sigma_p)\]and\[b^i(p,\sigma_p)=\alpha_{(p,\sigma_p)}\left(\left.\frac{\pd}{\pd \xi_i}\right|_{(p,\sigma_p)}\right)=\sigma_p\left(\pi_\ast\left(\left.\frac{\pd}{\pd \xi_i}\right|_{(p,\sigma_p)}\right)\right)=\sigma_p(p,0_p)=0=0(p,0_p).\]\\
We have shown that in local coordinates\[\alpha=\xi_i dx^i\] and it follows\[\omega:=-d\alpha=dx^i\wedge d\xi_i\] 

Given another $1$-form $\mu\in\Omega^1(M)$ we now show that the graph of $\mu$, considered as a function $M\to T^\ast M$, is a Lagrangian submanifold of $T^\ast M$ if and only if $\mu$ is closed. 
To avoid confusion, let $s_\mu$ denote the map $s_\mu:M\to T^\ast M$ given by $p\mapsto (p,\mu_p)$ and let $\Gamma_{s_\mu}$ denote the image of $\mu$ in $T^\ast M$ \[s_\mu(M):=\Gamma_{s_\mu}=\{(p,\mu_p) \ ; \ p\in M\}\]That is, the image of $\mu$ as a map is the same thing as the graph of $s_\mu$. Let $\pi:T^\ast M\to M$ denote the projection mapping. It's clear that $\pi\circ s_\mu=\mathrm{id}$.

\begin{proposition} Let $\alpha$ be the tautological $1$-form on $T^\ast M$. Then $s_\mu^\ast \alpha=\mu$.
\end{proposition}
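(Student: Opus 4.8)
The plan is to unwind the definition of the pullback and apply the defining formula for the tautological $1$-form directly, exploiting the identity $\pi\circ s_\mu=\mathrm{id}$ that was already noted just before the statement. First I would fix $p\in M$ and an arbitrary tangent vector $X_p\in T_pM$. By the definition of the pullback,
\[(s_\mu^\ast\alpha)_p(X_p)=\alpha_{s_\mu(p)}\big((s_\mu)_\ast X_p\big)=\alpha_{(p,\mu_p)}\big((s_\mu)_\ast X_p\big).\]
Then I would apply the definition of $\alpha$ at the point $(p,\mu_p)$ to the vector $(s_\mu)_\ast X_p\in T_{(p,\mu_p)}(T^\ast M)$, which gives
\[\alpha_{(p,\mu_p)}\big((s_\mu)_\ast X_p\big)=\mu_p\Big(\pi_\ast\big((s_\mu)_\ast X_p\big)\Big).\]

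The key step is to recognize that the projection exactly cancels the section. Since $\pi\circ s_\mu=\mathrm{id}_M$, functoriality of the differential gives $\pi_\ast\circ(s_\mu)_\ast=(\pi\circ s_\mu)_\ast=\mathrm{id}_{T_pM}$, so that $\pi_\ast\big((s_\mu)_\ast X_p\big)=X_p$. Substituting this into the previous line yields $(s_\mu^\ast\alpha)_p(X_p)=\mu_p(X_p)$, and since $p$ and $X_p$ were arbitrary we conclude $s_\mu^\ast\alpha=\mu$.

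There is no genuine obstacle here; the single point requiring care is the chain rule for differentials, namely that the differential of a composition equals the composition of the differentials, which is what makes $\pi_\ast$ annihilate the information in the section. As a sanity check I would also note the equally short local-coordinate argument: using the coordinate expression $\alpha=\xi_i\,dx^i$ together with $x^i\circ s_\mu=x^i$ and $\xi_i\circ s_\mu=\mu_i$ (writing $\mu=\mu_i\,dx^i$), naturality of the pullback on functions gives $s_\mu^\ast\alpha=(\xi_i\circ s_\mu)\,d(x^i\circ s_\mu)=\mu_i\,dx^i=\mu$. This coordinate computation is essentially the invariant argument written out in the induced chart, and either presentation suffices.
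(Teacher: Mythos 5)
Your argument is correct and is essentially identical to the paper's proof: both unwind the pullback, apply the defining formula for $\alpha$ at $(p,\mu_p)$, and use $\pi_\ast\circ(s_\mu)_\ast=(\pi\circ s_\mu)_\ast=\mathrm{id}$ to conclude. The supplementary coordinate computation is a fine sanity check but adds nothing beyond the paper's invariant argument.
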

\begin{proof}Fix arbitrary $(p,\mu_p)\in \Gamma_{s_\mu}$. By definition, $\alpha_{(p,\mu_p)}(V)=\mu_p(\pi_\ast V)$. Hence for arbitrary $V\in T_pM$, \[s_\mu^\ast\alpha(V)=\alpha((s_\mu)_\ast V)=\mu_p(\pi_\ast(s_\mu)_\ast V)=\mu_p((\pi\circ s_\mu)_\ast V)=\mu_p(\mathrm{id}(V))=\mu_p(V).\]
\end{proof}Using this we get
\begin{proposition} 
$\Gamma_{s_\mu}$ is a Lagrangian submanifold of $T^\ast M$ $\iff \mu$ is closed.
\end{proposition}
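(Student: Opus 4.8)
The plan is to reduce the Lagrangian condition to a statement about the single pullback $s_\mu^\ast\omega$ and then compute that pullback using the tautological one-form. Since $s_\mu:M\to T^\ast M$ is an embedding, its image $\Gamma_{s_\mu}$ is an $n$-dimensional submanifold of the $2n$-dimensional manifold $T^\ast M$, so the dimension requirement $\dim\Gamma_{s_\mu}=\tfrac12\dim T^\ast M$ for being Lagrangian holds automatically, exactly as the text already observed for graphs. Thus the entire content of the statement is the isotropy condition, namely that $\iota^\ast\omega=0$, where $\iota:\Gamma_{s_\mu}\hookrightarrow T^\ast M$ is the inclusion.

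First I would rewrite this isotropy condition in terms of $s_\mu$ itself. Because $s_\mu$ is a diffeomorphism onto $\Gamma_{s_\mu}$, we may factor it as $s_\mu=\iota\circ g$, where $g:M\to\Gamma_{s_\mu}$ is a diffeomorphism. Pulling back by a diffeomorphism cannot change whether a form vanishes, so precisely as in the proof of the proposition characterizing when a diffeomorphism is a symplectomorphism, we obtain
\[
\iota^\ast\omega=0\iff s_\mu^\ast\omega=0.
\]

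The key computation is then to evaluate $s_\mu^\ast\omega$ directly. Using $\omega=-d\alpha$, the fact that pullback commutes with the exterior derivative, and the preceding proposition $s_\mu^\ast\alpha=\mu$, I would compute
\[
s_\mu^\ast\omega=s_\mu^\ast(-d\alpha)=-d\bigl(s_\mu^\ast\alpha\bigr)=-d\mu.
\]
Combining this identity with the previous step yields $\Gamma_{s_\mu}$ Lagrangian $\iff s_\mu^\ast\omega=0\iff d\mu=0$, which is exactly the assertion that $\mu$ is closed.

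There is no genuine analytic obstacle here: once the two ingredients are in place the argument is entirely formal. The one point I would take care to justify — and the step most likely to be glossed over — is the reduction from the intrinsic isotropy condition $\iota^\ast\omega=0$, a statement about the abstract submanifold $\Gamma_{s_\mu}$, to the computable condition $s_\mu^\ast\omega=0$, a statement about a map out of $M$. This is the same subtlety handled earlier for graphs of symplectomorphisms, and it rests on $s_\mu$ being an embedding, so that restricting $\omega$ to the tangent spaces of $\Gamma_{s_\mu}$ agrees with pulling $\omega$ back along the parametrization $s_\mu$.
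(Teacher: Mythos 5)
Your argument is correct and follows essentially the same route as the paper's proof: both reduce the isotropy condition $\iota^\ast\omega=0$ to $s_\mu^\ast\omega=0$ by factoring $s_\mu=\iota\circ\tau$ through the diffeomorphism onto its image, and then compute $s_\mu^\ast\omega=-d(s_\mu^\ast\alpha)=-d\mu$ using the preceding proposition. The only (harmless) difference is that you make explicit the automatic dimension count, which the paper handles in its earlier remarks on graphs.
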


\begin{proof}
\rm Let $\tau:M\to \Gamma_{s_\mu}$ be the same map as $s_\mu$ but with range restricted to $\Gamma_{s_\mu}$. It follows that $\tau$ is a diffeomorphism and $s_\mu=\iota\circ\tau$. Hence
\begin{align*}
\Gamma_{s_\mu} \text{ is Lagrangian }&\iff \iota^\ast\omega\equiv 0\\
&\iff \iota^\ast d\alpha\equiv 0\\
&\iff \tau^\ast\iota^\ast d\alpha\equiv 0&\text{because $\tau$ is a diffeomorphism}\\
&\iff (\iota\circ\tau)^\ast d\alpha\equiv 0\\
&\iff (s_\mu)^\ast d\alpha\equiv 0\\
&\iff d(s_\mu)^\ast \alpha\equiv 0\\
&\iff d\mu\equiv 0\\
&\iff \mu \text{ is closed }
\end{align*}
\end{proof}





\subsection{Lifting a Diffeomorphism}

\begin{definition} Given a diffeomorphism $f:M_1\to M_2$ between two manifolds $M_1$ and $M_2$, there is an induced symplectomorphism $f_\sharp:T^\ast M_1\to T^\ast M_2$ called the \bf{lift} \rm of $f$ which is constructed as follows.
\vspace{0.1cm}

Since $f$ is a diffeomorphism we have that $f^\ast:T^\ast M\to T^\ast M$ is an isomorphism. For arbitrary $(p_1,\xi_{p_1})\in T^\ast_{p_1} M_1$ we define $f_\sharp$ by \[ f_\sharp(p_1,\xi_{p_1}):=(f(p_1),(f^\ast)^{-1}(\xi_{p_1})).\]Since $f$ is a diffeomorphism we get that both $f_\sharp$ and $f_\sharp^{-1}$ are bijective and smooth. Moreover, we have the following commutative diagram. 
\[
\begin{array}{l r}
\begin{tikzpicture}
\node (x1) at (0,0) {$M_1$};
\node (m1) at (0,2)  {$T^\ast M_1$};
\node (x2) at (2,0) {$M_2$};
\node (m2) at (2,2) {$T^\ast M_2$};
\draw[->] (m1) to node[left] {$\pi_1$}  (x1);
\draw[->] (m1) to node[above] {$f_\sharp$}  (m2);
\draw[->] (x1) to node[below] {$f$}  (x2);
\draw[->] (m2) to node[right] {$\pi_2$}  (x2);
\end{tikzpicture}
&(\dagger)
\end{array}
\]
\end{definition}
\begin{proposition}
Let $\alpha_1$ and $\alpha_2$ denote the tautological forms on $T^\ast M_1$ and $T^\ast M_2$ respectively. Then\[f_\sharp^\ast(\alpha_2)=\alpha_1.\]
\end{proposition}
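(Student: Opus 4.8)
The plan is to verify the identity $f_\sharp^\ast(\alpha_2)=\alpha_1$ pointwise by directly unwinding the definitions, since both sides are $1$-forms on $T^\ast M_1$. First I would fix an arbitrary point $(p_1,\xi_{p_1})\in T^\ast M_1$ together with an arbitrary tangent vector $V\in T_{(p_1,\xi_{p_1})}(T^\ast M_1)$, and compute $\big(f_\sharp^\ast\alpha_2\big)_{(p_1,\xi_{p_1})}(V)$, aiming to show it equals $(\alpha_1)_{(p_1,\xi_{p_1})}(V)=\xi_{p_1}\big((\pi_1)_\ast V\big)$, the latter being the definition of the tautological form on $T^\ast M_1$.

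The main chain of the argument proceeds as follows. By the definition of the pullback, $\big(f_\sharp^\ast\alpha_2\big)_{(p_1,\xi_{p_1})}(V)=(\alpha_2)_{f_\sharp(p_1,\xi_{p_1})}\big((f_\sharp)_\ast V\big)$. Writing $\eta:=(f^\ast)^{-1}(\xi_{p_1})$ so that $f_\sharp(p_1,\xi_{p_1})=(f(p_1),\eta)$, the definition of the tautological form $\alpha_2$ on $T^\ast M_2$ gives $(\alpha_2)_{(f(p_1),\eta)}\big((f_\sharp)_\ast V\big)=\eta\big((\pi_2)_\ast(f_\sharp)_\ast V\big)$. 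Here I would invoke the commutative diagram $(\dagger)$, namely $\pi_2\circ f_\sharp=f\circ\pi_1$, to replace the differential $(\pi_2)_\ast\circ(f_\sharp)_\ast$ by $(f\circ\pi_1)_\ast=f_\ast\circ(\pi_1)_\ast$, so that the expression becomes $\eta\big(f_\ast\big((\pi_1)_\ast V\big)\big)$.

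The final, and really the only delicate, step is to recognize that the factor $(f^\ast)^{-1}$ built into the definition of the lift is precisely what cancels the $f_\ast$ coming out of the commutativity. Concretely, writing $u:=(\pi_1)_\ast V\in T_{p_1}M_1$, the defining property of the cotangent pullback is $(f^\ast\zeta)(u)=\zeta(f_\ast u)$ for $\zeta\in T^\ast_{f(p_1)}M_2$; applying this with $\zeta=\eta$ and using $f^\ast\eta=f^\ast(f^\ast)^{-1}\xi_{p_1}=\xi_{p_1}$ collapses $\eta\big(f_\ast u\big)=(f^\ast\eta)(u)=\xi_{p_1}(u)=\xi_{p_1}\big((\pi_1)_\ast V\big)$, which is exactly $(\alpha_1)_{(p_1,\xi_{p_1})}(V)$. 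I expect the main obstacle to be purely bookkeeping: keeping straight the directions of the various maps ($f_\ast$ pushing forward on $M_1\to M_2$, $f^\ast$ pulling back on $T^\ast M_2\to T^\ast M_1$, and hence $(f^\ast)^{-1}$ going $T^\ast M_1\to T^\ast M_2$) so that the cancellation $f^\ast\circ(f^\ast)^{-1}=\mathrm{id}$ is applied to the correct covector. Since the point and vector were arbitrary, this establishes $f_\sharp^\ast\alpha_2=\alpha_1$.
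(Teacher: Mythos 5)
Your proposal is correct and follows essentially the same route as the paper's proof: unwind the pullback at an arbitrary point and tangent vector, apply the definition of the tautological form downstairs on $T^\ast M_2$, use the commutativity $\pi_2\circ f_\sharp=f\circ\pi_1$ to convert $(\pi_2)_\ast(f_\sharp)_\ast$ into $f_\ast(\pi_1)_\ast$, and then let $f^\ast\circ(f^\ast)^{-1}=\mathrm{id}$ collapse the covector back to $\xi_{p_1}$. The only difference is cosmetic naming of the covector $(f^\ast)^{-1}\xi_{p_1}$.
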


\begin{proof}
Let $(p_1,\xi_{p_1})\in T^\ast_{p_1} M_1$ and $(p_2,\xi_{p_2})\in T^\ast_{p_2}M_2$ be such that $p_2=f(p_1)$ and $\xi_{p_1}=f^\ast\xi_{p_2}$ 

\noindent It needs to be shown that $(f_\sharp)^\ast(\alpha_2)_{(p_2,\xi_{p_2})}=(\alpha_1)_{(p_1,\xi_{p_1})}$ By definition, $(f_\sharp)^\ast(\alpha_2)_{(p_2,\xi_{p_2})}\in T^\ast_{(p_1,\xi_{p_1})}(T^\ast M_1)$ so let $\eta\in T_{(p_1,\xi_{p_1})}(T^\ast M_1$) be arbitrary. Then
\begin{align*}
f_\sharp^\ast\left((\alpha_2)_{(p_2,\xi_{p_2})}\right)(\eta)&:=(\alpha_2)_{(p_2,\xi_{p_2})}\left(((f^\sharp)_\ast)\eta\right)\\
&:=\xi_{p_2}\circ(\pi_2)_{\ast}\left(f^\sharp_\ast\eta\right)\\
&=\xi_{p_2}\left((\pi_2\circ f_\sharp)_\ast\eta\right)\\
&=\xi_{p_2}\left((f\circ\pi_1)_\ast\eta\right)&\text{ by $\dagger$}\\
&=\xi_{p_2}\left(f_\ast(\pi_{1\ast}\eta)\right)\\
&=f^\ast\xi_{p_2}\left(\pi_{1\ast}\eta\right)\\
&=\xi_{p_1}\left(\pi_{1\ast}\eta\right)\\
&=(\alpha_1)_{(p_1,\xi_{p_1})}\eta
\end{align*}
\end{proof} 
\begin{corollary}
In the setup of Proposition 2.22, if we take $M_1=M_2=M$ and let $f:M\to M$ be a diffeomorphism, then the lift of $f$ preserves $\omega$. That is, $f^\ast_\sharp(\omega)=\omega$. 
\end{corollary}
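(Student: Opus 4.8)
The plan is to reduce everything to Proposition 2.22 together with the naturality of the exterior derivative. First I would specialize Proposition 2.22 to the case $M_1 = M_2 = M$. In that setting the two tautological forms $\alpha_1$ and $\alpha_2$ are literally the same object, namely the single tautological $1$-form $\alpha$ on $T^\ast M$ constructed in Section 2.4, so the conclusion $f_\sharp^\ast(\alpha_2) = \alpha_1$ collapses to the statement $f_\sharp^\ast(\alpha) = \alpha$.

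Next I would invoke the definition $\omega = -d\alpha$ from Section 2.4 together with the fact that pullback by a smooth map commutes with the exterior derivative. Applying $f_\sharp^\ast$ to $\omega$ and pulling the derivative through the pullback gives
\[
f_\sharp^\ast(\omega) = f_\sharp^\ast(-d\alpha) = -d\bigl(f_\sharp^\ast \alpha\bigr) = -d\alpha = \omega,
\]
which is exactly the assertion $f_\sharp^\ast(\omega) = \omega$.

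There is essentially no obstacle here: the entire analytic content has already been absorbed into Proposition 2.22, and what remains is only bookkeeping. The single point that deserves an explicit sentence is the identification $\alpha_1 = \alpha_2 = \alpha$ once the two copies of $M$ are taken to be the same manifold; after that, the commutation $f_\sharp^\ast \circ d = d \circ f_\sharp^\ast$ is the standard naturality of $d$, valid because $f_\sharp$ is smooth (indeed it is a diffeomorphism of $T^\ast M$ by Definition 2.21). No further properties of the lift are needed beyond smoothness, so the corollary follows immediately from the displayed chain of equalities.
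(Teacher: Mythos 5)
Your proof is correct and follows the same route as the paper: specialize Proposition 2.22 to get $f_\sharp^\ast\alpha=\alpha$, then use that pullback commutes with the exterior derivative applied to $\omega=-d\alpha$. The paper's own proof is exactly this, stated in one line.
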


\begin{proof}
This follows immediately from the fact that the pull back commutes with the exterior derivative.
\end{proof}
The following Lemma and Theorem will be needed in section 6 to study Noether's theorem.
\begin{lemma}
Let $M$ be a manifold. Fix $X\in\Gamma(TM)$ and let $\theta_t$ denote its flow. There exists a unique vector field $X_\sharp$ on the cotangent bundle (i.e. $X_\sharp\in\Gamma(T(T^\ast M))$) such that the flow of $X_\sharp$, say $\Psi_t$, is the lift of $\theta_t$. That is, $\Psi_t=\theta_{t,\sharp}$. Note that by Corollary 2.23, each $\Psi_t$ is a symplectomorphism.
\end{lemma}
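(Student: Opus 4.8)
The plan is to show that the family $\{\theta_{t,\sharp}\}$ of cotangent lifts is itself a smooth one-parameter group of diffeomorphisms of $T^\ast M$; its infinitesimal generator will then be the desired vector field $X_\sharp$, and uniqueness will follow from the uniqueness of the generator of a flow.

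First I would establish the functoriality of the lift: for diffeomorphisms $f,g:M\to M$ one has $\mathrm{id}_\sharp=\mathrm{id}_{T^\ast M}$ and $(f\circ g)_\sharp=f_\sharp\circ g_\sharp$. The first is immediate from the defining formula $f_\sharp(p,\xi_p)=(f(p),(f^\ast)^{-1}(\xi_p))$. For the second I would use that the pullback is contravariant, $(f\circ g)^\ast=g^\ast\circ f^\ast$, so that $((f\circ g)^\ast)^{-1}=(f^\ast)^{-1}\circ(g^\ast)^{-1}$, and then compare the two sides of the claimed identity on an arbitrary $(p,\xi_p)$: the base points agree since both equal $f(g(p))$, and the covector components agree by the inversion identity just noted, keeping track of which fibre each map acts on.

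Applying this to the flow, the group law $\theta_s\circ\theta_t=\theta_{s+t}$ and $\theta_0=\mathrm{id}$ give, with $\Psi_t:=\theta_{t,\sharp}$,
\begin{align*}
\Psi_0 &= (\theta_0)_\sharp=\mathrm{id}_{T^\ast M}, & \Psi_s\circ\Psi_t &=(\theta_s)_\sharp\circ(\theta_t)_\sharp=(\theta_s\circ\theta_t)_\sharp=\Psi_{s+t}.
\end{align*}
Thus $\{\Psi_t\}$ is a one-parameter group. I would then check that $(t,(p,\xi_p))\mapsto\Psi_t(p,\xi_p)$ is smooth: the base component is $\theta_t(p)$, smooth in $(t,p)$ by smooth dependence of solutions of ODEs on time and initial conditions, and the fibre component is $((\theta_t)^\ast)^{-1}(\xi_p)$, where $(\theta_t)^\ast$ is the transpose of the differential $(\theta_t)_\ast$; since that differential depends smoothly on $(t,p)$ and matrix transposition and inversion are smooth on invertible matrices, the whole assignment is jointly smooth. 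Hence $\{\Psi_t\}$ is a (local) flow. Finally I would define $X_\sharp\in\Gamma(T(T^\ast M))$ by $X_\sharp(p,\xi_p):=\frac{d}{dt}\big|_{t=0}\Psi_t(p,\xi_p)$. By construction $\Psi_t$ is the flow of $X_\sharp$, and because a flow determines its generator (as the $t$-derivative at $0$) while a vector field determines its flow uniquely by ODE theory, $X_\sharp$ is the unique vector field satisfying $\Psi_t=\theta_{t,\sharp}$.

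I expect the main obstacle to be the smoothness and completeness bookkeeping rather than the algebra. When $X$ is incomplete the $\theta_t$, and hence the $\Psi_t$, are defined only on open subsets of $\R\times T^\ast M$, so the group identities above hold only where both sides are defined and everything must be phrased in terms of local flows near each point; the one genuinely analytic point is verifying that the fibrewise inverse-transpose construction is jointly smooth in $(t,p)$. The functoriality identity $(f\circ g)_\sharp=f_\sharp\circ g_\sharp$ is the conceptual heart of the argument and is what makes the group law, and therefore the existence of a generating vector field, automatic.
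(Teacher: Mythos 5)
Your proposal is correct and follows essentially the same route as the paper: define $X_\sharp$ as the infinitesimal generator of the lifted family $\theta_{t,\sharp}$. The paper simply asserts that $\theta_{t,\sharp}$ is a (local) flow and takes its generator, whereas you supply the two details it leaves implicit, namely the functoriality $(f\circ g)_\sharp=f_\sharp\circ g_\sharp$ giving the one-parameter group law and the joint smoothness of the fibrewise inverse-transpose, both of which are genuinely needed and correctly handled.
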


\begin{proof}
Let $\theta_t$ denote the flow of $X\in\Gamma(TM)$. We have that $\theta_t$ is a diffeomorphism $\theta_t:M\to M$ so its lift $\theta_{t,\sharp}$ is a symplectomorphism $\theta_{t,\sharp}:T^\ast M\to T^\ast M$. Proposition 2.22 shows that $\theta_{t,\sharp}$ preserves $\alpha$. Just let $X_\sharp$ be the infinitesimal generator of $\theta_{t,\sharp}$. Here the integral curves are of the form $\theta^{(p,\xi)}_\sharp:\R\to T^\ast M \ , \ t\mapsto \theta_{t,\sharp}(p,\xi)$, and so $\theta_{t,\sharp}$ is a local flow of $X_\sharp$.
\end{proof}

\begin{theorem} \bf{(Lifting to the Cotangent Bundle)}\rm

Let $M$ be a manifold. Let $\alpha\in\Gamma(T^\ast(T^\ast M))$ denote the tautological $1$-form on $T^\ast M$ and consider the symplectic manifold $(T^\ast M,\omega=-d\alpha)$. If $g:T^\ast M\to T^\ast M$ is a symplectomorphism preserving $\alpha$ (i.e. $g^\ast\alpha=\alpha$) then there exists a diffeomorphism $f:M\to M$ such that $g=f_\sharp$. 
\end{theorem}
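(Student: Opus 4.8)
The plan is to recover the base map $f$ in two stages: first show that $g$ covers a diffeomorphism of $M$, i.e.\ that there is a smooth $f\colon M\to M$ with $\pi\circ g=f\circ\pi$, and only then verify that $g$ is precisely the lift $f_\sharp$. The reduction is worthwhile because, once $g$ is known to send each cotangent fibre into a single fibre, identifying $g$ with $f_\sharp$ becomes a short computation driven by the hypothesis $g^\ast\alpha=\alpha$; the real content of the theorem is the fibre-preservation.

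To produce $f$ I would use the Liouville vector field. Let $Z$ be the symplectic dual of $\alpha$ in the sense of Definition 2.10, so that $Z\hk\omega=\alpha$; in the induced cotangent coordinates one computes $Z=-\xi_i\,\frac{\pd}{\pd\xi_i}$, whose flow is the fibrewise dilation $\rho_t(p,\xi)=(p,e^{-t}\xi)$, defined for all $t$. Since $g$ preserves both $\alpha$ and $\omega=-d\alpha$, the symplectic dual is preserved as well, so $g_\ast Z=Z$ and hence $g\circ\rho_t=\rho_t\circ g$ for every $t$. Now fix $(p,\xi)$ and let $t\to+\infty$: on one side $\rho_t(p,\xi)\to(p,0)$, so by continuity $g(\rho_t(p,\xi))\to g(p,0)$; on the other side, writing $g(p,\xi)=(q,\eta)$, one has $\rho_t(g(p,\xi))=(q,e^{-t}\eta)\to(q,0)$. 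Equating the two limits gives $\pi(g(p,\xi))=\pi(g(p,0))$ for every $\xi\in T_p^\ast M$, so $g$ maps $T_p^\ast M$ into a single fibre. Thus $f(p):=\pi(g(p,0))=\pi\circ g\circ s_0$, with $s_0$ the zero section, is a well-defined smooth map satisfying $\pi\circ g=f\circ\pi$. Running the same argument for $g^{-1}$ (which also preserves $\alpha$) produces a smooth two-sided inverse, so $f$ is a diffeomorphism. I expect this fibre-preservation step to be the main obstacle, since it is where the special geometry of $\alpha$, encoded in $Z$, must be brought to bear; the limiting argument is the one genuinely nonformal move.

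To finish, I would compare $g$ with $f_\sharp$. By diagram $(\dagger)$ the lift $f_\sharp$ covers the same $f$, so $h:=f_\sharp^{-1}\circ g=(f^{-1})_\sharp\circ g$ covers the identity, and by Proposition 2.22 together with $g^\ast\alpha=\alpha$ it satisfies $h^\ast\alpha=\alpha$. Writing $h(p,\xi)=(p,\psi_p(\xi))$ for its induced fibre maps, a direct evaluation using $\pi\circ h=\pi$ gives $(h^\ast\alpha)_{(p,\xi)}(V)=\psi_p(\xi)(\pi_\ast V)$, while $\alpha_{(p,\xi)}(V)=\xi(\pi_\ast V)$. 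Since $\pi$ is a submersion, $\pi_\ast V$ ranges over all of $T_pM$, forcing $\psi_p(\xi)=\xi$. Hence $h=\mathrm{id}$ and $g=f_\sharp$, as required.
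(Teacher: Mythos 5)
Your proposal is correct and follows essentially the same route as the paper's proof: both use the symplectic dual of $\alpha$ (the Liouville field, computed in coordinates to be $-\xi_i\frac{\pd}{\pd\xi_i}$ with flow $(p,\xi)\mapsto(p,e^{-t}\xi)$), the commutation $g_\ast Z=Z$ forced by $g^\ast\alpha=\alpha$, the limit $t\to+\infty$ to get fibre preservation, and then the comparison of $g$ with $f_\sharp$ via the fact that $f_\sharp^{-1}\circ g$ preserves $\alpha$ and covers the identity. Your final step is phrased slightly more cleanly than the paper's (you note explicitly that $h$ covers the identity before evaluating $h^\ast\alpha$), but the argument is the same.
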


\begin{proof}

The proof of this theorem is done by combining the following claims. For what is below we let $V$ denote the symplectic dual of $\alpha$. That is, $\omega(V,\cdot)=V\hk\omega=\alpha$.

\begin{claim}
 If $g^\ast\alpha=\alpha$ then $g$ commutes with the flow of $V$, or equivalently $g_\ast V=V$.
\end{claim}

\begin{proof}
Let $\theta_t$ denote the flow of $V$. It needs to be shown that $g\circ\theta_t=\theta_t\circ g$, or equivalently, that $g\circ\theta_t\circ g^{-1}=\theta_t$. By definition, for each $p\in M$, we have that $\theta^{(p)}$ is the unique curve satisfying $\theta^{(p)}(0)=p$ and $\left.\dt\right|_{t=0}\theta^{(p)} =V_p$. Since $\theta_0$ is the identity, we have that $g\circ\theta_0\circ g^{-1}(p)=p$. Hence, by uniqueness, it suffices to show that $\left.\dt\right|_{t=0} g\circ\theta_t\circ g^{-1}(p)=V_p$ which happens, by the non-degeneracy of $\omega$, if and only if \begin{align}\left(\left.\dt\right|_{t=0} g\circ\theta_t\circ g^{-1}(p)\right)\hk\omega_p=V_p\hk\omega_p=\alpha_p\end{align} By the chain rule
\begin{align*}
\left.\dt\right|_{t=0}g\circ\theta_t\circ g^{-1}(p)&=g_{\ast,g^{-1}(p)}\left(\dt(\theta_t(g^{-1}(p)))\right)\\
&=g_{\ast,g^{-1}(p)}(V_{g^{-1}(p)})
\end{align*}
Fix an arbitrary $Y_p\in\Gamma(T_{p}M)$ and plug it into both sides of $(2.1)$. The right hand side is \[V_p\hk\omega_p(Y_p)=\alpha_{p}(Y_p)\] while the left hand side becomes
\begin{align*}
\left(\left.\dt\right|_{t=0} g\circ\theta_t\circ g^{-1}(p)\hk\omega_p\right)(Y_p)&=\omega_p\left(g_{\ast,g^{-1}(p)}(V_{g^{-1}(p)}),Y_p\right)\\
&=\omega_p\left(g_{\ast,g^{-1}(p)}(V_{g^{-1}(p)}),g_{\ast,g^{-1}(p)}\circ g^{-1}_{\ast,p}(Y_p)\right)\\
&=(g^\ast\omega)_{g^{-1}(p)}\left(V_{g^{-1}(p)},g^{-1}_{\ast,p}(Y_p)\right)\\
&=\omega_{g^{-1}(p)}\left(V_{g^{-1}(p)},g^{-1}_{\ast,p}(Y_p)\right)\\
&=\alpha_{g^{-1}(p)}(g^{-1}_{\ast,p}(Y_p))\\
&=(g^\ast\alpha)_p((g_{\ast,p})^{-1}(Y_p))\\
&=\alpha_p(Y_p)
\end{align*}

\end{proof}
Notice that this claim had nothing to do with the fact the symplectic manifold was a cotangent bundle. This result holds for any symplectic manifold $(X,\omega)$ for which $\omega=-d\alpha$ for some $1$-form $\alpha$ and $g^\ast\alpha=\alpha$.

\begin{claim}
The integral curves, $\gamma:\R\to T^\ast M$, of $V$ are of the form \[\gamma^{(p,\sigma)}(t)=(p,\sigma e^{-t})\]where $(p,\sigma)\in T^\ast M$ is arbitrary. 
\end{claim}

\begin{proof} In local coordinates we know that $\alpha=\xi_i dx^i$ and $\omega=dx^i\wedge d\xi_i$. Let $V=a^i\pdx +b_i\frac{\pd}{\pd\xi_i}$ where $a^i,b_i\in C^\infty(T^\ast U)$. By definition \[\alpha=V\hk\omega=a^id\xi_i-b_idx^i\]and so $a^i=0$ and $b_i=-\xi_i$. Let $(p,\sigma)\in T^\ast M$ be arbitrary and suppose $\gamma:\R\to T^\ast M$ is an integral curve of $V$ starting at $(p,\sigma)$. We can write $\gamma(t)=(q(t),r(t))$ and it follows \[\gamma^\prime(t)=V_{\gamma(t)}=(q^i(t))^\prime\left.\frac{\pd}{\pd x^i}\right|_{\gamma(t)}+(r_i(t))^\prime\left.\frac{\pd}{\pd \xi_i}\right|_{\gamma(t)}\]It follows that for all $t\in\R$, $(q_i(t))^\prime=0$ and $(r_i(t))^\prime=-\xi_i(\gamma(t))=-r_i(t)$. That is, $q^i(t)$ is a constant function while $r_i(t)=r_i(0)e^{-t}$. By assumption $\gamma(0)=(p,\sigma)$ and so it follows $\gamma(t)=(p,\sigma e^{-t})$



\end{proof}

It immediately follows that $\theta_t$ is fibre preserving. That is, $\theta_t(T_x^\ast M)=T_x^\ast M$. Also, it implies that if $g(p,\xi)=(q,\eta)$ then for all $\lambda>0 \ , \ g(p,\lambda\xi)=g(q,\lambda\eta)$. This is because the flow of $V$ is complete and $e^{-t}$ is surjective onto $(0,\infty)$. Also, by the continuity of $g$ and $\theta_t$ we have that \[g(p,0_p)=g(p,\lim_{t\to \infty}\theta_t(\xi))=\lim_{t\to \infty}g(\theta_t(p,\xi))=\lim_{t\to \infty}\theta_t(g(p,\xi))=g(q,0_q)
\] Hence \[g(p,\xi)=(q,\eta)\implies g(p,\lambda\xi)=(q,\lambda\eta)\text{ for all $\lambda\geq0$ }\]Supposing that $g(p,\xi)=(q,\eta)$ consider another arbitrary element $(p,\sigma_p)$ of $T_p^\ast M$. Suppose that $g(p,\sigma_p)=(\widetilde q,\mu_{\widetilde q})$. Then by applying the above it follows $g(p,\lambda\sigma_p)=(\widetilde q,\lambda\mu_p)$ for $\lambda=0$. That is, $g(p,0)=(\widetilde q,0)$. But $g(p,0)=(q,0)$ and so it must be that $q=\widetilde q$. Hence $g$ maps fibres to fibres. \\

We are now ready to construct $f:M\to M$ such that $f_\sharp=g$. Indeed, define $f$ by \[f:M\to M \ \ \ \ \ \ \ p\mapsto \pi\circ g(p,0_p).\]Claim 2.27 shows that $f$ is well defined, while it readily follows that \begin{align}f\circ\pi=\pi\circ g.\end{align} To prove $f_\sharp=g$, we will show $H:=g\circ f_\sharp^{-1}$ is the identity map. By definition, $H$ is a map from $T^\ast M$ to $T^\ast M$. Let $(p,\sigma_p)\in T^\ast M$ be arbitrary and suppose that $H(p,\sigma_p)=(q,\eta_q)$. Let $V_{(p,\sigma_p)}\in\Gamma(T_{(p,\sigma_p)}(T^\ast M))$ be arbitrary. By hypothesis, $g$ preserves $\alpha$ while Proposition 2.22 shows that $f_\sharp$ also preserves $\alpha$. Hence $H$ preserves $\alpha$. That is, 
\begin{align*}
(H^\ast\alpha_{(q,\eta_q)})(V_{(p,\sigma_p)})&=(H^\ast\alpha)_{(p,\sigma_p)}(V_{(p,\sigma_p)})\\
&=\alpha_{(p,\sigma_p)}(V_{(p,\sigma_p)})\\
&:=\sigma_p(\pi_\ast(V_{(p,\sigma_p)}))
\end{align*}
On the other hand, 
\begin{align*}
(H^\ast\alpha_{(q,\eta_q)})(V_{(p,\sigma_p)})&=\alpha_{(q,\eta_q)}(H_\ast(V_{(p,\sigma_p)}))\\
&=\eta_q(\pi_\ast(H_\ast(V_{(p,\sigma_p)})))\\
&=\eta_q((\pi\circ g\circ f_\sharp^{-1})_\ast(V_{(p,\sigma_p)}))\\
&=\eta_q((f\circ\pi\circ f_\sharp^{-1})_\ast(V_{(p,\sigma_p)}))&\text{ by (2.2)}\\
&=\eta_q(\pi_\ast(V_{(p,\sigma_p)}))&\text{by $\dagger$}
\end{align*}

But if $\sigma_p(\pi_\ast(V_{(p,\sigma_p)}))=\eta_q(\pi_\ast(V_{(p,\sigma_p)}))$ for all $V_{(p,\sigma_p)}\in T_{(p,\sigma_p)}(T^\ast M)$ then it must be that $(p,\sigma_p)=(q,\eta_q)=H(p,\sigma_p)$. That is, $H$ is the identity map.
\end{proof}

\subsection{Constructing Symplectomorphisms}

Given two manifolds $M_1$ and $M_2$, we demonstrated in Section 2.4 that their cotangent bundles have a canonical symplectic structure. Let $\alpha_1$ and $\alpha_2$ denote the tautological $1$-forms on $T^\ast M_1$ and $T^\ast M_2$ respectively. Let $\omega_1=-d\alpha$ and $\omega_2=-d\alpha_2$ denote the canonical $2$-forms on $T^\ast M_1$ and $T^\ast M_2$ respectively.  A straightforward calculation shows that the tautological $1$-form on $T^\ast M_1\times T^\ast M_2$ is \[\alpha=\pi_1^\ast\alpha_1+\pi_2^\ast\alpha_2.\] implying that the canonical symplectic form on $(T^\ast M_1\times T^\ast M_2)\cong T^\ast(M_1\times M_2)$ is \[\omega=\pi_1^\ast\omega_1+\pi_2^\ast\omega_2.\]By Proposition 2.16 we know that if the graph of a diffeomorphism $\varphi:(T^\ast M_1,\omega_1)\to (T^\ast M_2,\omega_2)$ is a Lagrangian submanifold of the `twisted product' $(T^\ast M_1\times T^\ast M_2,\widetilde\omega)$, then $\varphi$ is a symplectomorphism. While by Proposition 2.20 we know that the graph of a  $1$-form $\mu\in\Omega^1(M_1\times M_2)$ is a Lagrangian submanifold of $T^\ast(M_1\times M_2)$ if the $1$-form is closed. Hence, in particular, we have that $\Gamma_{df}$ is a Lagrangian submanifold of $(T^\ast M_1,\omega_1)\times (T^\ast M_2,\omega_2)\cong (T^\ast (M_1\times M_2),\omega)$ for any smooth function $f\in C^\infty(M_1\times M_2)$. Also, by Proposition 2.14 we know that symplectomorphisms take Lagrangian submanifolds to Lagrangian submanifolds. With this in mind, we first find a symplectomorphism from $(T^\ast(M_1\times M_2),\omega)$ to $(T^\ast(M_1\times M_2),\widetilde\omega)$ so that we can find Lagrangian submanifolds of the later. After doing this, we `construct' a symplectomorphism  between $T^\ast M_1$ and $T^\ast M_2$ by finding a diffeomorphism $\tau:T^\ast M_1\to T^\ast M_2$ whose graph, which is a subset of $T^\ast(M_1\times M_2)$, equals the graph of $df$. We will see that the existence of such a diffeomorphism is governed by the implicit function theorem.\\

Consider the functions \[\sigma_2:T^\ast M_2\to T^\ast M_2 \ , \ (p,\xi)\mapsto(p,-\xi)\]and\[\sigma:=\mathrm{id}\times \sigma_2: T^\ast M_1\times T^\ast M_2\to T^\ast M_1\times T^\ast M_2\]
\begin{proposition} The map $\sigma:(T^\ast M_1\times T^\ast M_2,\omega)\to (T^\ast M_1\times T^\ast M_2,\widetilde\omega)$ is a symplectomorphism. That is, $\sigma^\ast\widetilde\omega=\omega$
\end{proposition}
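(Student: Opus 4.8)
The plan is to verify the identity $\sigma^\ast\widetilde\omega=\omega$ directly, exploiting the product structure of $\sigma$ to reduce everything to the behaviour of $\sigma_2$ on the second factor. First I would record that $\sigma$ is genuinely a diffeomorphism: since negating a covector twice returns it, $\sigma_2$ is its own inverse, so $\sigma=\mathrm{id}\times\sigma_2$ is a smooth involution and in particular a diffeomorphism with smooth inverse. Thus the only substantive claim is the equality of forms.

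Next I would use the two commutation relations that follow immediately from $\sigma=\mathrm{id}\times\sigma_2$, namely $\pi_1\circ\sigma=\pi_1$ and $\pi_2\circ\sigma=\sigma_2\circ\pi_2$. Pulling back the defining expression $\widetilde\omega=\pi_1^\ast\omega_1-\pi_2^\ast\omega_2$ and using functoriality of the pullback gives
\[
\sigma^\ast\widetilde\omega=(\pi_1\circ\sigma)^\ast\omega_1-(\pi_2\circ\sigma)^\ast\omega_2=\pi_1^\ast\omega_1-\pi_2^\ast(\sigma_2^\ast\omega_2).
\]
So the whole computation collapses to understanding the single pullback $\sigma_2^\ast\omega_2$.

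The key step, and really the only place any computation happens, is to show $\sigma_2^\ast\omega_2=-\omega_2$. I would argue this at the level of the tautological $1$-form, which is cleaner than a brute-force coordinate calculation on $\omega_2$ itself. In the local coordinates of Section 2.4 we have $\alpha_2=\xi_i\,dx^i$, and since $\sigma_2$ fixes the base coordinates $x^i$ while sending $\xi_i\mapsto-\xi_i$, it follows that $\sigma_2^\ast\alpha_2=-\alpha_2$ (this is a pointwise identity of globally defined forms, so holds everywhere). Because $\omega_2=-d\alpha_2$ and the pullback commutes with $d$, this yields $\sigma_2^\ast\omega_2=-d(\sigma_2^\ast\alpha_2)=d\alpha_2=-\omega_2$.

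Substituting back into the displayed identity gives $\sigma^\ast\widetilde\omega=\pi_1^\ast\omega_1+\pi_2^\ast\omega_2=\omega$, which is exactly the canonical form on the product computed earlier, completing the proof. I do not expect a real obstacle here; the one point demanding care is sign bookkeeping, namely checking that the negation introduced by $\sigma_2$ on the second factor exactly converts the minus sign in the twisted form $\widetilde\omega$ back into the plus sign defining $\omega$.
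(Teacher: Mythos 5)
Your proof is correct and follows essentially the same route as the paper: both arguments reduce to the local computation $\sigma_2^\ast\alpha_2=-\alpha_2$, then use $\omega_2=-d\alpha_2$ and the product structure of $\sigma$ to convert the sign on the second factor. The only cosmetic difference is that you pull back $\widetilde\omega$ to get $\omega$ (matching the stated claim literally), while the paper pulls back $\omega$ to get $\widetilde\omega$; these are equivalent since $\sigma$ is an involution.
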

\begin{proof}
First note that $\sigma$ is involutive and so bijective.  Moreover in local coordinates $x^1,\dots,x^n,\xi_1,\dots,\xi_n$ on $T^\ast M_2$ we have \[\sigma_2^\ast\alpha_2=\sigma_2^\ast(\xi_idx^i)=(\xi_i\circ\sigma_2)(d(x^i\circ\sigma_2))=-\xi_idx^i=-\alpha_2\]and so \[\sigma^\ast\omega=\sigma^\ast(\pi_1^\ast\omega_1)+\sigma^\ast(\pi_2^\ast\omega_2)=(\pi_1\circ\sigma)^\ast\omega_1+(\pi_2\circ\sigma)^\ast\omega_2=\pi_1^\ast\omega_1-\pi_2^\ast\omega_2=\widetilde\omega\]
\end{proof}

\begin{definition}
If $Y$ is a Lagrangian submanifold of $(T^\ast M_1\times T^\ast M_2,\omega)$ we define the \bf{twist} \rm of $Y$, denoted $Y^\sigma$ to be the image of $Y$ under $\sigma$. That is, $Y^\sigma:=\sigma(Y)$.
\end{definition}

\begin{proposition} If $Y$ is a Lagrangian submanifold of $(T^\ast M_1\times T^\ast M_2,\omega)$ then the twist of $Y$ is a Lagrangian submanifold of $(T^\ast M_1\times T^\ast M_2,\widetilde\omega)$
\end{proposition}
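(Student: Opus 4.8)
The plan is to obtain this as an immediate corollary of two results already in hand: Proposition 2.28, which establishes that $\sigma$ is a symplectomorphism $(T^\ast M_1\times T^\ast M_2,\omega)\to(T^\ast M_1\times T^\ast M_2,\widetilde\omega)$, and Proposition 2.14, which says that symplectomorphisms carry Lagrangian submanifolds to Lagrangian submanifolds. So the whole argument is a matter of feeding the right objects into these two propositions.

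First I would recall from Proposition 2.28 that $\sigma$ is a symplectomorphism from the untwisted product to the twisted product; in particular it is a diffeomorphism (being involutive, it is its own inverse) satisfying $\sigma^\ast\widetilde\omega=\omega$. Then I would apply Proposition 2.14 with source $(T^\ast M_1\times T^\ast M_2,\omega)$, target $(T^\ast M_1\times T^\ast M_2,\widetilde\omega)$, symplectomorphism $\sigma$, and Lagrangian submanifold $Y$. Since $Y^\sigma:=\sigma(Y)$ is, by Definition 2.29, precisely the image of $Y$ under $\sigma$, Proposition 2.14 yields that $Y^\sigma$, with the induced inclusion $\sigma\circ\iota$, is a Lagrangian submanifold of $(T^\ast M_1\times T^\ast M_2,\widetilde\omega)$, which is exactly the claim.

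If one preferred a self-contained argument rather than quoting Proposition 2.14, the single computation to carry out is the following. Letting $\iota:Y\hookrightarrow T^\ast M_1\times T^\ast M_2$ be the inclusion and using that $\sigma$ restricts to a diffeomorphism $Y\to Y^\sigma$, I would pull the twisted form back to $Y^\sigma$ and check that it vanishes via $(\sigma\circ\iota)^\ast\widetilde\omega=\iota^\ast(\sigma^\ast\widetilde\omega)=\iota^\ast\omega=0$, where the final equality is the hypothesis that $Y$ is Lagrangian for $\omega$; the dimension count $\dim Y^\sigma=\dim Y=\tfrac12\dim(T^\ast M_1\times T^\ast M_2)$ is automatic because $\sigma$ is a diffeomorphism. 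There is no genuine obstacle here, since all the substance lives in Propositions 2.14 and 2.28. The only point requiring care is the bookkeeping of the two symplectic forms: one must use $\sigma$ in the direction from the $\omega$-structure to the $\widetilde\omega$-structure, which is exactly the direction in which Proposition 2.28 was established.
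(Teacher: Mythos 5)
Your proposal is correct and follows exactly the paper's own route: it cites Proposition 2.28 to identify $\sigma$ as a symplectomorphism from the $\omega$-structure to the $\widetilde\omega$-structure and then invokes Proposition 2.14 to conclude that $Y^\sigma=\sigma(Y)$ is Lagrangian. The extra self-contained pullback computation you include is just an unwinding of Proposition 2.14 and does not change the argument.
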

\begin{proof} \rm Since $\sigma:(T^\ast M_1\times T^\ast M_2,\omega)\to (T^\ast M_1\times T^\ast M_2,\widetilde\omega)$ is a symplectomorphism, this result is a corollary of Proposition 2.14.
\end{proof}

As mentioned at the beginning of this section, we now want to find a diffeomorphism whose graph equals the graph of the closed $1$-form $df$, where $f\in C^\infty(M_1\times M_2)$. We call the graph of $df$ the `Lagrangian submanifold generated by $f$'. Before stating this formally, we introduce some notation so that we can write this submanifold in a way that will allow us to find conditions on when it is the graph of a diffeomorphism $\varphi$. By definition, we have that $(df)_{(x,y)}=\pi_1^\ast((d_1f)_x)+\pi_2^\ast((d_2f)_y)$ where $x^1,\dots,x^n$ and $y^1,\dots, y^n$ are local coordinates on $M_1$ and $M_2$ respectively, $\pi_1$ and $\pi_2$ are the natural projections on $T^\ast M_1\times T^\ast M_2$ and $d_1f=\frac{\pd f}{\pd x^i}dx^i$ and $d_2f=\frac{\pd f}{\pd y^i}dy^i$.


\begin{definition}
The \bf{Lagrangian submanifold generated by $f$} \rm is the Lagrangian submanifold of $M_1\times M_2$ defined by\[Y_f:=\left\{((x,y),(df)_{(x,y)}) \ ; \ (x,y)\in M_1\times M_2\right\}=\left\{((x,y),((d_1f)_x,(d_2f)_y)) \ ; \ (x,y)\in M_1\times M_2\right\}\]
\end{definition}

\begin{definition}
If there exists a diffeomorphism $\varphi:T^\ast M_1\to T^\ast M_2$ such that $Y_f^\sigma=\Gamma_\varphi$ then, by Proposition 2.16, $\varphi$ is a symplectomorphism. If such a symplectomorphism exists we call it the \bf{symplectomorphism generated by $f$}\rm. Recall that here $Y^\sigma_f$ is the twist of $Y_f$.
\end{definition}

We are trying to find a diffeomorphism $\varphi: T^\ast M_1\to T^\ast M_2$ such that $\Gamma_\varphi=Y^\sigma_f$. But notice that 
\begin{align*}
\Gamma_\varphi=Y_f^\sigma&\iff \left\{((x,\xi),(y,\eta)) : \varphi(x,\xi)=(y,\eta)\right\}=\left\{((x,(d_1f)_x),(y,-(d_2f)_y))\right\}\\
&\iff \xi=(d_1f)_x\text{ and } \eta=-(d_2f)_y\\
&\iff \xi_i=\frac{\pd f}{\pd x_i}(x,y) \ \ (\star) \ \ \ \text{ and } \ \ \ \eta_i=-\frac{\pd f}{\pd y_i}(x,y) \ \ (\star\star)
\end{align*}

That is, writing $\varphi(x,\xi)=(\varphi_1(x,\xi),\varphi_2(x,\xi))$, for $\Gamma_\varphi$ to equal $Y_f^\sigma$ all we need is that $\varphi_2(x,\xi)=-(d_2f)_y$, for then it automatically follows $\varphi_1(x,\xi)=y$.  Given any $(x,\xi)$ the implicit function theorem says (locally) when a solution to $(\star)$ exists. That is, it tells us when one can write $y$ as a function of both $x$ and $\xi$. If $(x^1,\dots, x^n,y^1,\dots, y^n)$ are local coordinates on $M_1\times M_2$ the implicit function theorem says that we can write $y$ as a function of $x$ and $\xi$ locally if and only if \[\det \left[\frac{\pd }{\pd y^j}\left(\frac{\pd f}{\pd x^i}\right)\right]\not= 0.\]

Also, note that if we have a solution to $(\star)$ say, $y=\varphi_1(x,\xi)$ then we can plug this solution into $(\star\star)$ and thus completely determine the map $\varphi$ satisfying $\Gamma_\varphi=Y_f^\sigma$. We give some examples of this process  in the next section.

\subsection{Applications to Geodesic Flow}

Recall the definition of geodesic flow. 

\begin{definition}
Let $(M,g)$ be a Riemannian manifold. The \bf{geodesic flow} \rm of $M$ is the local $\R$-action on $TM$ defined by \[\Theta:\R\times TM\to TM \ , \ (t_0,V_p)\mapsto \left.\dt\right|_{t=t_0}\gamma_{V_p}(t)\] where $\gamma_{V_p}$ is the unique geodesic starting at $p\in M$ with initial velocity $V_p$. 
\end{definition}

Recall that a Riemannian manifold is called \bf{geodesically complete} \rm if every geodesic is defined for all $t\in\R$ and is called \bf{geodesically convex} \rm if for any two points in the manifold there exists a minimizing geodesic connecting them. For the rest of this section, unless stated otherwise, all manifolds are assumed to be compact. Since compact metric spaces are complete, it follows from the Hopf-Rinow theorem that all of our Riemannian manifolds are geodesically complete and geodesically convex.

\begin{example}\bf{(Free Translational Motion)}\rm
\vspace{0.1cm}

\noindent Let $M_1=\R^n=M_2$ with coordinate charts $(\R^n,x^1,\dots, x^n)$ and $(\R^n,y^1,\dots, y^n)$ respectively. Endow $M_1$ and $M_2$ with the standard metric. We have the respective induced coordinate charts $(T^\ast\R^n,x^1,\dots,x^n,\xi_1,\dots,\xi_n)$ and $(T^\ast\R^n,y^1,\dots, y^n, \eta_1,\dots,\eta_n)$. Let $f\in C^\infty (M_1\times M_2)$ be given by \[f(x,y)=-\frac{1}{2}d(x,y)^2=-\frac{1}{2}\sum_{i=1}^n(x^i-y^i)^2.\] Since the metric is assumed to be the standard one it follows that $d(x,y)$ is the usual Euclidean distance. By definition,\[Y_f^\sigma=\left\{\left(a,b,\frac{\pd f}{\pd x^i}(a,b)dx^i,-\frac{\pd f}{\pd y^i}(a,b)dy^i\right) \ | \ a,b\in\R^n\right\}.\]We would like to find the symplectomorphism genereated by $f$. That is, we would like to find a map $\varphi:\R^n\to\R^n$ such that  $Y_f^\sigma$ equals $\Gamma_\varphi$. In this case $(\star)$ is $\xi_i=\frac{\pd f}{\pd x^i}(a,b)=b^i-a^i$ and $(\star\star)$ is $\eta_i=-\frac{\pd f}{\pd y^i}=b^i-a^i$. Since $\frac{\pd f}{\pd x^i}=y^i-x^i$ we have that \[\left[\frac{\pd}{\pd y^j}\left(\frac{ \pd f}{\pd x^i}\right)\right]_{ij}=\delta_{ij}\] so that the implicit function theorem guarantees a solution to $(\star)$. We have shown that \[Y_f^\sigma=\left\{\left(a,b,b-a,b-a\right) \ | \ a,b\in\R^n\right\}.\] For fixed $a\in\R^n$, the implicit function theorem has shown the existence of a function $\varphi_1:\R^n\to\R^n$ where for every $b\in\R^n$ there exists $\xi\in\R^n$ such that $b=\varphi_1(a,\xi)$. In this case, it is obvious that every element $\xi\in T^\ast_a\R^n=\R^n$ is of the form $b-a$ for some $b\in\R^n$. We set \[\varphi_1(a,\xi)=\varphi_1(a,b-a)=b\] and \[\eta=\varphi_2(a,\xi):=-\frac{\pd f}{\pd y^i}(a,b)=\xi\] so that $\varphi(a,\xi)=(b,\eta)=(\xi+a,\xi)$ and $\Gamma_\varphi=Y_f^\sigma$. That is, $\varphi$ is the symplectomorphism generated by $f$. Identifying $T_p^\ast \R^n$ with $T_p\R^n=\left\{ \text{vectors emanating from $p$} \right\}$ we see that $\varphi$ is free translational motion. 

\end{example}

The above example is a special case of the following. 

\begin{example}\bf{(Geodesic Flow)} \rm 
\vspace{0.1cm}

Let $(M,g)$ be an arbitrary compact Riemannian manifold. As in the previous example, let $d:M\times M\to\R$ denote the Riemann distance function and $f\in C^\infty (M\times M)$ be given by $f=-\frac{1}{2}d(a,b)^2$. To find the sympelctomorphism generated by $f$ we need to solve \[(\star) \ \ \  \xi_i=d_af \ \ \text{ and } \ \ (\star\star) \ \ \ \eta_i =-d_bf.\] Using the musical isomorphism we can identify $T^\ast M$ with $TM$. That is, for $a\in M$, we have \[\flat:T_aM\to T^\ast_aM \ \ \ \ \ \ \ V\mapsto V^\flat:=g(V,\cdot).\]Let $V$ and $W$ in $TM$ be the unique vector fields such that $V^\flat=\xi$ and $W^\flat =\eta$. Then the above equations become \[(\star) \ \ \ g(V,\cdot)=d_af(\cdot) \ \ \text{ and }\ \   (\star\star) \ \ \ g(W,\cdot)=-d_bf(\cdot).\]We now show that under this identification the symplectomorphism generated by $f$ is the geodesic flow. First we need the following Lemma. 

\begin{lemma}Let $U,V\in T_aM$. Then \[\left.\frac{d}{dt}\right|_{t=0}-\frac{1}{2}d((\exp)_a(U),(\exp)_a(tV))^2=<V,U>.\]
\end{lemma}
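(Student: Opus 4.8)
The plan is to recognize the squared distance as the energy of a minimizing geodesic and to differentiate using the first variation of energy. Fix $p:=\exp_a(U)$ and set $q(t):=\exp_a(tV)$, so that $q(0)=a$ and $q'(0)=V$, the latter because $t\mapsto\exp_a(tV)$ is by definition the geodesic through $a$ with initial velocity $V$. For $\lvert U\rvert$ smaller than the injectivity radius and $t$ sufficiently small, $q(t)$ lies in a geodesically convex normal neighbourhood of $p$; the standing compactness of $M$ guarantees, via Hopf--Rinow, that there is a unique minimizing geodesic $c_t:[0,1]\to M$ from $p$ to $q(t)$ depending smoothly on $t$. Since $c_t$ has constant speed equal to its length $d(p,q(t))$, its energy is $E(t):=\tfrac12\int_0^1\lvert c_t'(s)\rvert^2\,ds=\tfrac12\,d(p,q(t))^2$, so the quantity to be differentiated is exactly $-E(t)$.

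Next I would apply the first variation of energy to the smooth variation $\Gamma(s,t):=c_t(s)$. Using the torsion-free identity $\nabla_t\partial_s\Gamma=\nabla_s\partial_t\Gamma$ followed by integration by parts in $s$, one obtains
\[
E'(t)=\bigl[\,g(\partial_t\Gamma,\partial_s\Gamma)\,\bigr]_{s=0}^{s=1}-\int_0^1 g\bigl(\partial_t\Gamma,\nabla_s\partial_s\Gamma\bigr)\,ds.
\]
Because each $c_t$ is a geodesic, $\nabla_s\partial_s\Gamma=0$ and the integral drops out. At $s=0$ the endpoint is fixed, $\Gamma(0,t)=p$, so $\partial_t\Gamma(0,t)=0$ and that boundary term vanishes; at $s=1$ we have $\Gamma(1,t)=q(t)$, whence $\partial_t\Gamma(1,t)=q'(t)$. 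Thus $E'(t)=g\bigl(q'(t),c_t'(1)\bigr)$.

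Finally I would evaluate at $t=0$. Here $q'(0)=V$, while $c_0$ is the minimizing geodesic from $p=\exp_a(U)$ to $q(0)=a$. Reversing it, $s\mapsto c_0(1-s)$ is the geodesic from $a$ to $p$ with initial velocity $-c_0'(1)$; but that geodesic is precisely $s\mapsto\exp_a(sU)$, which has initial velocity $U$. Hence $c_0'(1)=-U$, and
\[
\left.\frac{d}{dt}\right|_{t=0}\!\Bigl(-\tfrac12\,d(\exp_a(U),\exp_a(tV))^2\Bigr)=-E'(0)=-g(V,-U)=g(V,U),
\]
which is the claimed value $\langle V,U\rangle$.

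The main obstacle is not the variational computation, which is routine, but justifying that the setup is legitimate: that the squared distance function is smooth near $(p,a)$ and that the minimizing geodesics $c_t$ assemble into a genuinely smooth two-parameter map $\Gamma(s,t)$. Both hinge on staying strictly inside the injectivity radius and on the uniqueness of minimizing geodesics there, which is exactly where the compactness (hence geodesic completeness and convexity) assumption does the real work. An alternative, if one is willing to invoke it, is to quote the standard identity $\nabla_x\bigl(\tfrac12\,d(p,x)^2\bigr)=-\exp_x^{-1}(p)$ and evaluate at $x=a$, where $\exp_a^{-1}(p)=U$; but that formula is itself proved by the same first variation argument.
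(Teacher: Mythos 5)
Your argument is correct, but it takes a genuinely different route from the paper. You compute $-\tfrac12 d(p,\exp_a(tV))^2$ as the negative energy $-E(t)$ of the minimizing geodesic $c_t$ from $p=\exp_a(U)$ to $\exp_a(tV)$ and apply the first variation of energy: the integral term dies because each $c_t$ is a geodesic, the $s=0$ boundary term dies because that endpoint is fixed, and you are left with $E'(0)=g\bigl(V,c_0'(1)\bigr)=g(V,-U)$, the sign of $c_0'(1)$ coming from reversing the radial geodesic $s\mapsto\exp_a(sU)$. The paper instead works directly with the function $f(b)=-\tfrac12 d(\exp_a(U),b)^2$: it first establishes $d(\exp_a(U),\exp_a(sU))=|1-s|\,|U|$ from the constant speed of radial geodesics, then decomposes $V=V^\perp+\lambda U$ and uses the Gauss Lemma to kill $f_\ast(V^\perp)$ (motion along the geodesic sphere about $\exp_a(U)$ leaves the distance unchanged), reducing everything to an explicit one-variable derivative of $-\tfrac12|1-s|^2|U|^2$, which yields $\lambda|U|^2=\langle V,U\rangle$. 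Your approach buys uniformity --- one identity, $E'(t)=g(q'(t),c_t'(1))$, handles all of $V$ at once with no orthogonal splitting, and it generalizes immediately to variations of both endpoints --- at the cost of invoking the smooth dependence of the whole family $c_t$ on $t$ and the covariant calculus behind the first variation formula. The paper's route is more elementary (only the Gauss Lemma and a scalar derivative) but needs the ad hoc decomposition $V=V^\perp+\lambda U$ and a separate verification on the radial direction. Both proofs share the same unstated hypothesis, which you rightly flag: the computation requires $\exp_a(U)$ to lie off the cut locus of $a$ so that $d(\,\cdot\,,\cdot\,)^2$ is smooth and the minimizing geodesic is unique; compactness alone does not guarantee this for arbitrary $U$, so strictly speaking the lemma needs $|U|$ within the injectivity radius (or an equivalent caveat) in both your version and the paper's.
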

\begin{proof}First notice that for $s\in\R$ small enough so that $(\exp)_a(sU)$ is contained in a geodesic ball centred at $(\exp)_a(U)$, we have that $d((\exp)_a(U),(\exp)_a(sU))=|1-s||U|$. Indeed, let $\gamma:\R\to M$ be the unique geodesic starting at $a$ with initial velocity $U$. Since geodesics have constant speed it follows that the length of $\gamma$ over $[s,1]$, denoted $L\left(\gamma|_{[s,1]}\right)$, is 
\begin{align*}\
L\left(\gamma|_{[s,1]}\right)&=\int_s^1\left|\gamma^\prime(t)\right|dt\\
&=|U|\int_s^1dt\\
&=|U||1-s|
\end{align*} Since the radial geodesic from $(\exp)_a(U)$ to $(\exp)_a(sU)$ is the unique minimizing curve from $(\exp)_a(U)$ to $(\exp)_a(sU)$ (for a proof of this see \cite{Lee2}, Proposition 6.10) it follows that \[d((\exp)_a(U),(\exp)_a(sU))=|1-s||U|.\] Next, fix $(\exp)_a(U)\in M$ and consider the function, also denoted $f$, defined by \[f:M\to\R \ \ \ \ \ \ \ b\mapsto f((\exp)_a(U),b).\] Notice that given $V^\perp\in T_aM$ with $V^\perp\perp U$ the Gauss Lemma shows that there exists a curve $\beta:\R\to M$ starting at $a$ with initial velocity $V^\perp$ such that $d(\beta(t),b)$ is constant. That is, $f(\beta(t))=c$ implying that $f_\ast(\beta^\prime(0))=0$. Observe that we can write $V=V^\perp+\lambda U$ for some $\lambda\in\R$. Letting $G$ denote the function $G:\R\to M$ given by $t\mapsto (\exp)_a(tV)$ it follows that 
\begin{align*}
\left.\frac{d}{dt}\right|_{t=0}-\frac{1}{2}d((\exp)_a(U),(\exp)_a(tV))^2&=\left.\frac{d}{dt}\right|_{t=0}f(G(t))\\
&=(f_\ast)_{G(0)}G^\prime(0)\\
&=(f_\ast)_a(V)\\
&=\lambda f_\ast(U)\\
&=\lambda\left.\frac{d}{ds}\right|_{s=0}-\frac{1}{2}(d((\exp)_a(sU),(\exp)_aU))^2\\
&=\lambda\left.\frac{d}{ds}\right|_{s=0}-\frac{1}{2}|1-s|^2|U|^2\\
&=\lambda|U|^2\\
&=<V,U>
\end{align*}
\end{proof}
With this lemma we now reconsider $(\star)$. Evaluating the left hand side at $V$ we get $|V|^2$. By geodesic convexity, there exists some $U\in TM$ such that $(\exp)_a(U)=b$. Using Lemma 2.36, the right hand side is

\begin{align*}
d_af(V)&=(f_\ast)_a(V)\\
&=\left.\frac{d}{dt}\right|_{t=0}d((\exp)_a(tV),y)\\
&=\left.\frac{d}{dt}\right|_{t=0}d((\exp)_a(tV),(\exp)_a(U))\\
&=<V,U>
\end{align*}
Now take any vector $V^\prime\in TM$ such that $V^\prime\perp V$. Plugging $V^\prime$ into $(\star)$ the left hand side becomes $0$ and the right hand side is $<V^\prime,U>$. Hence, $<V,U>=<V,V>$ and $<V^\prime,U>=0=<V,V^\prime>$ for any $V^\prime\perp V$. It follows that $U=V$ and so $b=(\exp)_a(V)$.
\vspace{0.1cm}
 
We now need to solve $(\star\star)$. We will see the solution is given by $W=\gamma_V^\prime(1)=\left.\frac{d}{dt}\right|_{t=1}(\exp)_a(tV)$.  Indeed, let $\widetilde W=\left.\frac{d}{dt}\right|_{t=1}(\exp)_a(tV)$ and fix any $W^\prime\perp \widetilde W$. Again, by the Gauss Lemma, we have that $d_bf(W^\prime)=0$ and so $W=k\widetilde W$ for some $k\in\R$. But since geodesics have constant speed it follows that $|V|^2=|\widetilde W|^2$. Therefore the left hand side of $(\star\star)$ is

\begin{align*}
<W,\widetilde W>&=k<\widetilde W,\widetilde W>\\
&=k|\widetilde W|^2\\
&=k|V|^2
\end{align*}while the right hand side is

\begin{align*}
-(d_bf)(\widetilde W)&=-(f_\ast)_b(\widetilde W)\\
&=\left.\frac{d}{ds}\right|_{s=0}\frac{1}{2}d(a,(\exp)_a((1+s)(V)))^2\\
&=\left.\frac{d}{ds}\right|_{s=0}\frac{1}{2}d((\exp)_a(0V),(\exp)_a((1+s)V))^2\\
&=\left.\frac{d}{ds}\right|_{s=0}\frac{1}{2}(1+s^2)|V|^2\\
&=|V|^2
\end{align*}
Hence $k=1$ showing that $W=\left.\frac{d}{dt}\right|_{t=1}(\exp)_a(V)$.\\

In summary, for $\Gamma_\varphi$ to equal $Y_f^\sigma$ it needs to be that $V$ is the unique vector field such that $b=(\exp)_a(V)$  (so that $b$ is a function of $a$ and $V$) and further that $W=\left.\frac{d}{dt}\right|_{t=1}(\exp)_a(tV)$. That is, the map $\varphi$ is given by \[(a,V)\ \ \mapsto \ \ \left((\exp)_a(V),\left.\frac{d}{dt}\right|_{t=1}(\exp)_a(tV)\right)=\left(\gamma_V(1),\gamma_V^\prime(1)\right)\]\\

In the section on Hamiltonian mechanics, we will return to the concept of geodesic flow and give some insight as to why it arose as the symplectomorphism generated by the Riemann distance function.

\end{example}

\newpage

\section{Lagrangian Mechanics}

Recall that Newton's second law states that in an inertial reference frame the motion of a particle, with position $\gamma(t)$, is given by the solution to the ODE \[F(\gamma(t))=p^\prime(t)\] where $F$ is the net force acting on the particle and $p(t)=m(t)v(t)$ is the particle's momentum. Lagrangian mechanics is a reformulation of Newtonian mechanics in which motions are given by solutions to the Euler-Lagrange equations. In any situation where Newton's second law can be applied, so can the Euler-Lagrange equations. We will see that, in any such system, the Euler-Lagrange equations are equivalent to Newton's second law. However, the equations also hold in settings in which Newton's second law does not hold. For example, Newton's second law only holds in an inertial reference frame, while the Euler-Lagrange equations are valid in any coordinate system. Another advantage in using the Euler-Lagrange equations comes with the way in which they allow constraint forces on a mechanical system to be ignored. For example, if studying the motion of a bead on a wire, in the Lagrangian setting we do not need to worry about the forces keeping the bead constrained to the wire.
\vspace{0.3cm}

For the rest of this paper all forces are assumed to be conservative, and so we first recall this definition.

\subsection{Conservative and Central Forces}

Consider $\R^n$ equipped with standard coordinates $x^1,\dots,x^n$. Let $g$ be a metric on $\R^n$ and let $x^1,\dots,x^n,v^1\dots,v^n$ denote the induced coordinates on $T\R^n=\R^{2n}$. Recall the kinetic energy is defined by \[K:\R^{2n}\to\R \ \ \ \ \ \ \ (x,v)\mapsto\frac{1}{2}mg(v,v)=\frac{1}{2}mg_{ij}v^iv^j=\frac{1}{2}m|v|^2\] where $m$ is some positive constant, called the mass of the particle. Fix two points $r_1,r_2\in\R^n$ and consider a curve $\gamma:[a,b]\to\R^n$ such that $\gamma(a)=r_1$ and $\gamma(b)=r_2$. Then the work done by a force $F$ on a particle moving along $\gamma$  is defined to be the integral of $F$ over the curve $\gamma$ \[W(r_1\to r_2,\gamma):=\int_\gamma F\cdot ds=\int_a^bF(\gamma(t))\cdot\gamma^\prime(t)dt\]

\begin{theorem}\bf{(Work-Kinetic Energy Theorem)}\rm 
\vspace{0.1cm}

Given a system of $k$ particles of masses $m_i$ with position $r_i$, the change in kinetic energy of the system is equal to the sum of the work done on each particle.
\end{theorem}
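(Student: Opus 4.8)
The plan is to reduce the statement for a system of $k$ particles to the single-particle case by linearity, and then to prove the single-particle version by combining Newton's second law with the fundamental theorem of calculus.

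First I would fix notation: write $\gamma_i:[a,b]\to\R^n$ for the trajectory of the $i$-th particle, so that $r_i=\gamma_i$ and its velocity is $v_i=\gamma_i'$. The kinetic energy of the system is the sum $K=\sum_{i=1}^k \tfrac{1}{2}m_i\, g(v_i,v_i)$, while the total work done on the system over the motion is $W=\sum_{i=1}^k W_i$, where $W_i=\int_a^b F_i(\gamma_i(t))\cdot\gamma_i'(t)\,dt$ and $F_i$ is the net force on the $i$-th particle. Since both $W$ and the change in $K$ are sums indexed by $i$, it suffices to prove the identity $W_i = K_i(b)-K_i(a)$ for each fixed $i$, where $K_i=\tfrac12 m_i g(v_i,v_i)$ is regarded as a function of $t$.

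Next I would invoke Newton's second law for the $i$-th particle. Assuming the mass $m_i$ is constant, this reads $F_i=p_i'=m_i\gamma_i''$, so that the integrand becomes $F_i\cdot\gamma_i'=m_i\, g(\gamma_i'',\gamma_i')$, where the dot in the work integral is interpreted as the pairing $g$ used to define $K$. The key computational step is the identity $g(\gamma_i'',\gamma_i')=\tfrac{1}{2}\tfrac{d}{dt}\,g(\gamma_i',\gamma_i')$, which follows from the symmetry and bilinearity of $g$ together with the product rule, using that $g$ has constant coefficients in the standard coordinates so that no extra derivative terms appear. Substituting and applying the fundamental theorem of calculus gives
\[
W_i=\int_a^b \frac{m_i}{2}\frac{d}{dt}\,g(\gamma_i',\gamma_i')\,dt=\frac{m_i}{2}\,g(v_i,v_i)\Big|_a^b=K_i(b)-K_i(a).
\]

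Finally I would sum over $i=1,\dots,k$ to conclude $W=\sum_i W_i=\sum_i\big(K_i(b)-K_i(a)\big)=K(b)-K(a)$, the change in the total kinetic energy. The argument is essentially routine once Newton's law is in hand; the only points requiring care are the passage from $F_i$ to $m_i\gamma_i''$, which presupposes constant mass and an inertial frame, and the interpretation of the force--displacement pairing as $g$ so that it matches the metric used to define the kinetic energy. If one wished to allow a time-dependent mass $m_i(t)$, the same scheme adapts by keeping $p_i=m_i v_i$ and differentiating the appropriate expression, but for the statement as given constant masses suffice.
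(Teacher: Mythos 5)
Your proof is correct and follows essentially the same route as the paper: differentiate the kinetic energy, use Newton's second law to replace $m_i\gamma_i''$ by $F_i$, and apply the fundamental theorem of calculus; whether one sums over the particles before or after integrating is immaterial. Your explicit remarks about constant mass and the constancy of the metric coefficients are reasonable points of care that the paper leaves implicit.
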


\begin{proof}
Let $F_i$ denote the force acting on the $i$-th particle and $W_i$ the work done by $F_i$ on the $i$-th particle. By Newton's second law we have that \[\left.\dt\right|_{r(t)} K=\sum_{i=1}^n(m_i r_i^{\prime\prime}(t))\cdot r_i^\prime(t)=\sum_{i=1}^nF_i(r_i(t))\cdot r^\prime(t).\]It follows that \[K(t_2)-K(t_1)=\int_{t_1}^{t_2}\frac{d K}{dt}dt=\sum_{i=1}^n\int_{t_1}^{t_2}F_i\cdot r_i^\prime dt=\sum_{i=1}^nW_i\]
\end{proof}

As the next example shows, given two paths $\alpha,\beta:[a,b]\to\R^n$ with $\alpha(a)=r_1=\beta(a)$ and $\alpha(b)=r_2=\beta(b)$, it may be that $W(r_1\to r_2,\alpha)\not =W(r_1\to r_2,\beta)$.  
\vspace{0.1cm}

\begin{example}\bf{(A Non-Conservative Force)}\rm
\vspace{0.1cm}

Consider a particle moving around the unit circle under the following force field. 
\[
\newcommand\numarrs{8}
\begin{tikzpicture}
\draw (-3,0)--(3,0) (0,-2)--(0,2);
\foreach \r in {1,...,\numarrs}{
  \draw[->,line width=1pt] ($(0,0)+(360/\numarrs*\r:1.5)$)--+(360/\numarrs*\r+90:1);
  \fill[fill=black] ($(0,0)+(360/\numarrs*\r:1.5)$) circle (.05);
}
\draw (0,0) circle (1.5);
\end{tikzpicture}
\]

This vector field (force) has integral curve $\gamma(t)=(\cos t,\sin t)$. It follows that $F(x,y)=(-y,x)$. Fix the points $(1,0)$ and $(-1,0)$. Consider the curve $\alpha:[0,\pi]\to \R^2$ given by $t\mapsto (\cos t,\sin t)$ and $\beta:[0,\pi]\to \R^2$ given by $t\mapsto (\cos t,-\sin t)$. Then $\alpha$ and $\beta$ are both curves starting at $(1,0)$ and ending at $(-1,0)$; however, \[W((1,0)\to (-1,0),\alpha)=\int_\alpha F\cdot ds=\int_0^\pi (-\sin t,\cos t)\cdot (-\sin t,\cos t)dt=\pi\] while \[W((1,0)\to(-1,0),\beta)=\int_\beta F\cdot ds=\int_0^\pi (\sin t, \cos t)\cdot (-\sin t,-\cos t)=-\pi\] 
\end{example}

Contrary to this example, for certain forces (such as the gravitational and electrostatic forces) it is the case that $W$ does not depend on the path traversed by the particle. This leads to the following definition.

\begin{definition}
A force is called \bf{conservative} \rm if the work done is path independent. Letting $r_1,r_2\in\R^n$ be arbitrary, this means that  for any two curves $\alpha,\beta:[a,b]\to\R^n$ with $\alpha(a)=r_1=\beta(a)$ and $\alpha(b)=r_2=\beta(b)$, \[W(r_1\to r_2,\alpha)=\int_\alpha F\cdot ds=\int_\beta F\cdot ds=W(r_1\to r_2,\beta)\]In this case the work done by $F$ is justifiably denoted $W(r_1\to r_2)$. A mechanical system is called \bf{conservative} \rm if the net force is conservative. 
\end{definition}
Recall that the gradient of a function $U:\R^n\to\R^n$ is defined to be \[\nabla U:=(dU)^\sharp=\left(g^{ji}\frac{\pd U}{\pd x^i}\right)\frac{\pd}{\pd x^j}\]where $\sharp:T^\ast \R^n\to T\R^n$ is the musical isomorphism and $g^{ji}=[g^{-1}]_{ji}$. With the standard metric, the above definition reduces to the standard notion of the gradient.
\begin{theorem}
A force $F$ is conservative if and only if there exists a continuously differentiable function $U:\R^n\to \R$ such that $F=-\nabla U$. 
\end{theorem}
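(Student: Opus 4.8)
The plan is to prove the two implications separately, with the reverse implication being essentially a direct computation and the forward implication requiring the explicit construction of a potential by integrating the force.

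For the ``if'' direction, suppose $F=-\nabla U$ for some $C^1$ function $U$. Given any curve $\gamma:[a,b]\to\R^n$ with $\gamma(a)=r_1$ and $\gamma(b)=r_2$, I would rewrite the integrand using the defining property of the gradient coming from the musical isomorphism, namely $g(\nabla U,X)=dU(X)$, so that $F(\gamma(t))\cdot\gamma'(t) = -g(\nabla U,\gamma') = -dU(\gamma') = -\dt U(\gamma(t))$. The work integral then telescopes by the fundamental theorem of calculus:
\[
W(r_1\to r_2,\gamma)=\int_a^b F(\gamma(t))\cdot\gamma'(t)\,dt = -\int_a^b\dt U(\gamma(t))\,dt = U(r_1)-U(r_2),
\]
which depends only on the endpoints. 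Hence $F$ is conservative.

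For the ``only if'' direction, assume $F$ is conservative. The idea is to define the potential by integrating the force from a reference point. Fix a base point $x_0\in\R^n$ and set $U(x):=-W(x_0\to x)$, where the work is computed along any curve from $x_0$ to $x$; this is well defined precisely because conservativity guarantees path independence, so the notation $W(x_0\to x)$ is meaningful. It then remains to verify that $U$ is $C^1$ and that $\nabla U=-F$. To compute $\frac{\pd U}{\pd x^i}$, I would exploit path independence again: concatenating a fixed path from $x_0$ to $x$ with the straight segment $t\mapsto x+t e_i$ shows that $U(x+h e_i)-U(x)=-W(x\to x+h e_i)$, and parametrizing this segment gives
\[
\frac{U(x+h e_i)-U(x)}{h}=-\frac{1}{h}\int_0^h F_i(x+t e_i)\,dt,
\]
where $F_i$ denotes the $i$-th component of $F$ appearing in the pairing $F\cdot(\cdot)$. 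Letting $h\to 0$ and using the continuity of $F$ together with the fundamental theorem of calculus yields $\frac{\pd U}{\pd x^i}(x)=-F_i(x)$ for each $i$, whence $F=-\nabla U$.

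I expect the main obstacle to be the forward direction, and within it the two delicate points are: first, checking that $U$ is genuinely well defined, which rests entirely on the path-independence hypothesis and on the additivity of work over concatenated curves; and second, justifying the passage to the limit in the difference quotient, where continuity of $F$ is what upgrades the naive ``average value'' to an actual partial derivative and simultaneously delivers the continuity of the partials, i.e.\ that $U\in C^1$. The easy direction is a one-line telescoping argument once the gradient identity $g(\nabla U,\cdot)=dU$ is invoked.
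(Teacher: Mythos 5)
Your proposal is correct, and its overall architecture matches the paper's: the easy direction is the telescoping fundamental-theorem-of-calculus computation, and the hard direction defines $U(x):=-W(x_0\to x)$, which is well defined precisely by path independence. Where you genuinely diverge is in how you verify $\nabla U=-F$. The paper applies the fundamental theorem of line integrals to $\nabla U$ along the straight segment $t\mapsto(1-t)r_0+tr$ and then differentiates; this tacitly assumes $U$ is already differentiable, which is the very point at issue, so the paper's argument has a mildly circular flavour. You instead compute each partial $\frac{\pd U}{\pd x^i}$ from first principles: concatenation plus path independence gives $U(x+he_i)-U(x)=-\int_0^hF_i(x+te_i)\,dt$, and continuity of $F$ turns the averaged difference quotient into an honest derivative while simultaneously showing the partials are continuous, i.e.\ $U\in C^1$. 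Your route buys rigor (no appeal to differentiability of $U$ before it is established) and explicitly delivers the $C^1$ conclusion stated in the theorem, which the paper glosses over; the paper's route is shorter on the page but leans on the line-integral theorem in a direction it is not entitled to. One small point worth making explicit in your write-up: the additivity of work under concatenation of curves, which you invoke to get $U(x+he_i)-U(x)=-W(x\to x+he_i)$, follows from additivity of the integral over a split of the parameter interval together with reparametrization invariance of the work integral; it is routine but is the hinge of the difference-quotient step.
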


\begin{proof} 
First suppose that the work done by $F$ is conservative. Fix a point $r_0\in\R^n$ and define \[U:\R^n\to \R \ \ \ \ \ \ \ r\mapsto -\int_\gamma F\cdot ds\] where $\gamma:[a,b]\to\R^n$ is an arbitrary curve with $\gamma(a)=r_0$ and $\gamma(b)=r$. By hypothesis this function is well defined. By the fundamental theorem of line integrals it follows \[\int_\gamma\nabla U\cdot ds=U(r)-U(r_0)=-\int_\gamma F\cdot ds-U(r_0).\]In particular, taking $\gamma$ to be the curve $\gamma:[0,t]\to\R^n$ given by $t\mapsto (1-t)r_0+tr$, it follows that \[\int_0^t\nabla(\gamma(t))(r-r_0)dt=-\int_0^tF(\gamma(t))(r-r_0)dt-U(r_0).\]But since $U(r_0)$ is a constant, it follows from the fundamental theorem of calculus that $F=-\nabla U$. 

Conversely, suppose that there exists $U:\R^n\to\R$ such that $F=-\nabla U$. Then for arbitrary $\gamma$ as above, \[-\int_\gamma F\cdot ds=\int_\gamma \nabla U\cdot ds=U(r)-U(r_0)\] That is, $W(r_0\to r,\gamma)=U(r)-U(r_0)$ and so only depends on the end points $r_0$ and $r$. Hence, the work done is path independent showing $F$ is conservative.
\end{proof}

\begin{definition}
For a conservative force $F$, the scalar function $U:\R^n\to\R$ such that $F=-\nabla U$ is called the \bf{potential energy} \rm.
\end{definition}
The reason that a force with this property is called conservative comes from Theorem 3.7 below. 

\begin{definition}
The \bf{total energy} \rm of a conservative system is defined to be \[E=T+U,\] the kinetic energy plus the potential energy.
\end{definition}

\begin{theorem}\bf{(Conservation of Total Energy)} \rm
In a conservative mechanical system, the total energy is conserved. That is, $\dt E=0$.
\end{theorem}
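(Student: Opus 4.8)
The plan is to differentiate $E=T+U$ directly along a motion $\gamma(t)$ of the system and watch the two contributions cancel. The only ingredients needed are Newton's second law (which is what defines the motions), the identity isolated inside the proof of the Work-Kinetic Energy Theorem (Theorem 3.1), the characterization $F=-\nabla U$ of a conservative force (Theorem 3.4), and the definition of the gradient through the musical isomorphism, $\nabla U=(dU)^\sharp$.

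First, for a single particle of mass $m$ tracing out $\gamma(t)$, I would recall from the computation in Theorem 3.1 that, using Newton's second law $F=m\gamma''$,
\[
\dt\, T=\dt\left(\tfrac{1}{2}m\, g(\gamma',\gamma')\right)=m\,\gamma''\cdot\gamma'=F(\gamma)\cdot\gamma'.
\]
For a system of $k$ particles this identity is simply summed over the particles, exactly as written in the proof of Theorem 3.1, so that $\dt\, T=\sum_i F_i\cdot r_i'$.

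Next I would differentiate the potential term by the chain rule, $\dt\, U(\gamma(t))=(dU)_{\gamma(t)}(\gamma'(t))$. The key step is to convert the differential into the gradient: since $\nabla U=(dU)^\sharp$ is defined via the sharp isomorphism, its defining property is $g(\nabla U,X)=(dU)(X)$ for every vector $X$, whence
\[
\dt\, U(\gamma)=g(\nabla U,\gamma')=\nabla U\cdot\gamma'.
\]
Invoking Theorem 3.4, a conservative force satisfies $F=-\nabla U$, so $\dt\, U=-F\cdot\gamma'$. Adding the two computations gives
\[
\dt\, E=\dt\, T+\dt\, U=F\cdot\gamma'-F\cdot\gamma'=0,
\]
which is the claim.

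I do not expect a genuine obstacle here, since the heart of the argument is a one-line cancellation; the result is really a bookkeeping exercise in the definitions. The only care required is to verify that the pairing $\nabla U\cdot\gamma'$ produced by the chain rule really equals $g(\nabla U,\gamma')$, which is precisely the defining property of the sharp isomorphism used in the definition of $\nabla U$, and (in the many-particle case) to remember to sum the kinetic-energy identity over all particles so that the total force lines up with the total potential gradient.
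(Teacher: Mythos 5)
Your proof is correct and is essentially the paper's argument in differential form: the paper applies the Work--Kinetic Energy Theorem over a finite interval together with the fundamental theorem of line integrals to conclude $(K+U)(\gamma(t_2))=(K+U)(\gamma(t_1))$, whereas you differentiate $E$ pointwise along the motion; both reduce to the same cancellation of $F\cdot\gamma'$ against $\nabla U\cdot\gamma'=-F\cdot\gamma'$ via Newton's second law and $F=-\nabla U$. No gap; your version matches the statement $\dt E=0$ even more directly.
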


\begin{proof}
Suppose that $F=-\nabla U$. For arbitrary $t_1,t_2\in\R$, by the Work-Kinetic Energy Theorem\[K(\gamma(t_2))-K(\gamma(t_1))=W(\gamma(t_1)\to\gamma(t_2))=\int_\gamma F\cdot ds=\int_\gamma\nabla U\cdot ds=U(\gamma(t_1))-U(\gamma(t_2))\]Hence $(K+U)(\gamma(t_2))=(K+U)(\gamma(t_1))$ and so $E:=K+U$ is independent of $t$.
\end{proof}




In section 6 we will do some interesting computations with the Laplace-Runge-Lenze vector and so we recall here the two-body central force problem.

\begin{definition}
A \bf{central force} \rm on a particle with position vector $\vec r$ is a conservative force for which the corresponding potential energy is only a function of $\|\vec r\|$. 
\end{definition}

The classical example of a central force is one given by a potential of the form \[U(\vec r)=-\frac{k}{\|\vec r\|}\] where $k$ is some constant. Given a system of two particles, (say $\vec r_1$ and $\vec r_2$) in a closed system, by fixing one of the particles and considering the relative position vector $\vec r=\vec r_1-\vec r_2$ we can give explicit formulas for the potential energy corresponding to the gravitational and Coulomb force.  The gravitational potential is given by \[U(\vec r)=-\frac{Gm_1m_2}{\|\vec r\|}\] and the Coulomb by \[U(\vec r)=\frac{1}{4\pi\varepsilon_0}\frac{q_1q_2}{\|\vec r\|}\] where $q_1$ and $q_2$ are the charges of the two particles. Here $G$ and $\varepsilon_0$ are two constants whose explicit values depend on the units being used.

\begin{proposition}
Consider a closed system of two particles moving in $\R^3$ with the standard metric. If the particles are subject to a central force field then the force is always parallel to the relative position of the two particles.
\end{proposition}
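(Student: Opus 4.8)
The plan is to reduce the claim to a direct chain-rule computation, using the fact that the gradient of a radially-symmetric function points in the radial direction. First I would set up notation: by the discussion preceding the proposition we treat the two-body problem through the relative position vector $\vec r = \vec r_1 - \vec r_2$, and by Definition 3.9 the force is conservative with a potential energy depending only on $\|\vec r\|$. Thus I would write $U(\vec r) = \phi(\|\vec r\|)$ for some differentiable single-variable function $\phi$. By Theorem 3.4 the force is $F = -\nabla U$, and since we work in $\R^3$ with the standard metric, the general gradient formula $(g^{ji}\frac{\partial U}{\partial x^i})\frac{\partial}{\partial x^j}$ recalled before Theorem 3.4 collapses to $\nabla U = \sum_i \frac{\partial U}{\partial x^i}\frac{\partial}{\partial x^i}$.

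Next I would carry out the computation. Writing $r = \|\vec r\| = \sqrt{(x^1)^2 + (x^2)^2 + (x^3)^2}$ in the standard coordinates, one has $\frac{\partial r}{\partial x^i} = \frac{x^i}{r}$, so by the chain rule $\frac{\partial U}{\partial x^i} = \phi'(r)\,\frac{x^i}{r}$. Assembling the components gives $\nabla U = \frac{\phi'(r)}{r}\,\vec r$, and therefore $F = -\nabla U = -\frac{\phi'(r)}{r}\,\vec r$. This exhibits $F$ as a scalar multiple of $\vec r$, which is precisely the statement that the force is parallel to the relative position of the two particles; by Newton's third law the force on the other particle is the negative of this and hence also parallel to $\vec r$.

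I do not expect a serious obstacle here, as the result is essentially an immediate consequence of the chain rule once the setup is fixed. The only points requiring care are interpretive and technical rather than substantive: one must read ``parallel'' as ``a scalar multiple of,'' allowing the coefficient $-\phi'(r)/r$ to vanish (so that a vanishing force, or the force at a critical radius of $\phi$, still counts as parallel); one should note that the identity $\frac{\partial r}{\partial x^i} = x^i/r$, and hence the whole computation, is valid only away from $\vec r = 0$, where $r$ fails to be differentiable; and one should make explicit that it is the use of the \emph{standard} metric that reduces the gradient to the ordinary vector of partial derivatives. With these understood, the two steps above complete the proof.
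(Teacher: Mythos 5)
Your proof is correct and rests on the same idea as the paper's: the gradient of a potential depending only on $\|\vec r\|$ points in the radial direction $\hat r=\vec r/\|\vec r\|$. The paper simply quotes the spherical-coordinate expression $\nabla U=\frac{\partial U}{\partial r}\frac{\partial}{\partial r}$, whereas you derive the same fact in Cartesian coordinates via the chain rule; your version is more explicit and also correctly flags the degeneracy at $\vec r=0$, which the paper's one-line proof passes over.
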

\begin{proof}
By definition, $U$ is a function of only $\|\vec r\|$. That is $U=U(\|\vec r\|)$.  In spherical coordinates we have that $\nabla U=\frac{\pd U}{\pd r}\frac{\pd}{\pd r}$ where $\frac{\pd}{\pd r}$ is associated with the vector $\hat r=\frac{\vec r}{\|\vec r\|}$.

\end{proof}

\begin{definition}
If $\vec r$ is the position vector of a particle, then its \bf{angular momentum} \rm is defined to be \[\vec L:=\vec r\times p=m(\vec r\times \dot {\vec r})\]and the \bf{torque} is \[\tau:=\vec r\times F\]
\end{definition}
Using Newton's second law, it immediately follows that torque is the time derivative of angular momentum.
\begin{proposition}
If two particles are subject to a central force field then the angular momentum of their relative position vector is constant.
\end{proposition}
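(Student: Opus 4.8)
The plan is to show that the time derivative of the angular momentum vanishes, by combining the remark immediately preceding the statement with Proposition 3.12. Write $\vec r = \vec r_1 - \vec r_2$ for the relative position vector and recall that, by Definition 3.13, its angular momentum is $\vec L = m(\vec r \times \dot{\vec r})$. The goal is to establish $\dt \vec L = 0$.

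First I would differentiate $\vec L$ directly. By the product rule for the cross product,
\[
\dt \vec L = m\left(\dot{\vec r} \times \dot{\vec r} + \vec r \times \ddot{\vec r}\right) = m\,\vec r \times \ddot{\vec r},
\]
since $\dot{\vec r} \times \dot{\vec r} = 0$. The remark preceding the proposition already records that torque is the time derivative of angular momentum, so equivalently $\dt \vec L = \tau = \vec r \times F$, where $F$ is the net (central) force on the relative coordinate; this is just Newton's second law $F = m\ddot{\vec r}$ applied to the relative motion.

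The crux is then Proposition 3.12, which asserts that a central force is always parallel to the relative position vector $\vec r$. Granting this, $F$ and $\vec r$ are parallel, so their cross product is the zero vector, giving $\tau = \vec r \times F = 0$ and hence $\dt \vec L = 0$. Integrating, $\vec L$ is constant in $t$, which is the claim.

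I expect the only genuine subtlety to be the legitimacy of reducing the two-body problem to a single relative coordinate obeying $F = m\ddot{\vec r}$ with $m$ the reduced mass, so that Proposition 3.12 applies verbatim; once that reduction is in place, every step is a one-line consequence of results already available in the excerpt, so there is no real obstacle beyond bookkeeping.
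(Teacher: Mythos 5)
Your proof is correct and follows essentially the same route as the paper: differentiate $\vec L = m(\vec r\times\dot{\vec r})$, kill the $\dot{\vec r}\times\dot{\vec r}$ term, and use the fact that a central force is parallel to $\vec r$ (the paper's Proposition 3.9) to kill $\vec r\times F$. The only discrepancies are the reference numbers (the parallelism result is Proposition 3.9 and the definition of angular momentum is Definition 3.10 in the paper), which do not affect the argument.
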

\begin{proof}
By proposition 3.9 the force is parallel to $\vec r$. Hence $\dt \vec L=m(\dot{\vec r}\times \dot {\vec r})+(\vec r\times F)=0$.
\end{proof}

\begin{remark}
By definition, the angular momentum is always orthogonal to the momentum and position vector. In the two-body central force problem, Proposition 3.11 showed that the angular momentum vector is constant. Hence it must be that the plane determined by the momentum and position of the relative position vector is constant. That is, the movement of the two particles is always restricted to a plane. Under translation and rotation, it is no loss of generality to assume that the particles motion is restricted to the $xy$-plane.
\end{remark}

\subsection{The Calculus of Variations and the Euler-Lagrange Equations}
\noindent The calculus of variations studies functionals on a given space $X$. In this section $X$ will be the set of smooth curves $\gamma:[a,b]\to M$, where $M$ is some manifold.  The Euler-Lagrange equations will arise as the extreme points of a specific function, called the action. After deriving the Euler-Lagrange equations we will see how they are a generalization of Newton's second law.

\vspace{0.1cm}


\begin{definition}
Let $M$ be an $n$-dimensional manifold with tangent bundle $TM$. Given $L\in C^\infty(TM)$ the pair $(M,L)$ is called a \bf{Lagrangian system} \rm and $L$ is called the \bf{Lagrangian}.
\end{definition}

\begin{definition}
A Riemannian manifold $(M,g)$ with Lagrangian $L=K-U$ is called a \bf{natural} \rm Lagrangian system. 
\end{definition}

At first, it may seem that the Lagrangian defined in a natural system is random; however, we will see that in these systems the Euler-Lagrange equations are equivalent to Newton's second law. The Euler-Lagrange equations have proven to be more effective than Newton's second law in many different natural Lagrangian systems. 


\begin{definition} 
Given a smooth curve $\gamma:[a,b]\to M$ we get an induced curve $\widetilde\gamma$, called the \bf{lift} \rm of $\gamma$, into the tangent bundle defined by\[\widetilde\gamma:[a,b]\to TM \ \ \ \ \ \ \ t\mapsto (\gamma(t),\gamma^\prime(t))\]
\end{definition}
\begin{definition}
Let $\mathcal C =\left\{\gamma:[a,b]\to M \ ; \ \gamma \text{ is smooth}\right\}$. Consider the function $\mathcal A:\mathcal C\to\R$ defined by \[\mathcal A(\gamma):=\mathcal A_{\gamma}:=\int_a^b\widetilde\gamma^\ast L(t)dt=\int_a^bL(\gamma(t),\gamma^\prime(t))dt\]

\noindent The function $\mathcal A:\mathcal C\to\R$ is called the \bf{action} \rm of the Lagrangian system.
\end{definition}

The goal of this section is to find curves $\gamma:[a,b]\to M$ which are critical points for $\mathcal A$. Just as minimum and maximum points are extreme points in elementary calculus, we seek to find curves for which the ``derivative" of the action vanishes. Intuitively, for a curve $\gamma:[a,b]\to M$ to be a minimum we need  the value of $\mathcal A$ to be no greater on $\gamma$ than on curves `close' to $\gamma$, say `within $\varepsilon$', as pictured below.
\vspace{-1cm}
\[
\begin{tikzpicture}[xscale=1.3]
\draw (0,0) .. controls +(80:3) and +(260:3) .. node[below=1pt,pos=.42] {$\gamma$} (4,1);
\draw[dashed] (0,0) .. controls +(90:3.5) and +(240:2.2) .. node[above=2pt,pos=.2] {$\gamma_{\varepsilon}$} (4,1);
\draw[dashed] (0,0) .. controls +(70:2) and +(270:3.5) .. node[below=2pt,pos=.8]{$\ $} (4,1);
\draw[shift={(2,-1)},<->] (-3,0)--(3,0);
\draw[shift={(2,-1)},<->] (0,-2)--(0,3);
\end{tikzpicture}
\]
More precisely, fix a coordinate chart $(U,x^1,\dots, x^n)$ in $M$ and consider $(TU,x^1,\dots,x^n,v^1,\dots,v^n)$, the induced chart  on $TM$.  Let $\gamma:[a,b]\to U$ be a curve. Given $\varepsilon\in\R$, pick arbitrary $c^1,\dots,c^n\in C^\infty([a,b])$ with $c^i(a)=0=c^i(b)$. Define $\gamma_\varepsilon(t):=(\gamma^1(t)+\varepsilon c^1(t),\dots,\gamma^n(t)+\varepsilon c^n(t))$. Note that we can choose $\varepsilon$ small enough so that $\gamma_\varepsilon$ is contained in $U$. We have that $\gamma_0=\gamma$ and so if $\gamma$ is a minimum of $\mathcal A$ then \[\left.\frac{d}{d\varepsilon}\right|_{\varepsilon=0} \mathcal A_{\gamma_\varepsilon}=0\]On the other hand, the chain rule, product rule and fundamental theorem of calculus give that
\begin{align*}
\left.\frac{d}{d\varepsilon}\right|_{\varepsilon=0}\mathcal A_{\gamma_\varepsilon}&=\frac{d}{d\varepsilon}\int_a^bL(\gamma_\varepsilon(t),\gamma_\varepsilon^\prime(t))dt\\
&=\int_a^b\left(\left(\left.\frac{\pd L}{\pd x^i}\right|_{\widetilde\gamma(t)}\right)c^i(t)+\left(\left.\frac{\pd L}{\pd v^i}\right|_{\widetilde\gamma(t)}\right)\frac{d}{dt}c^i(t)\right)dt&\hspace{-2.5cm}\\
&=\int_a^b\left(\left(\left.\frac{\pd L}{\pd x^i}\right|_{\widetilde\gamma(t)}\right)c^i(t)+\frac{d}{dt}\left(\left(\left.\frac{\pd L}{\pd v^i}\right|_{\widetilde\gamma(t)}\right)c^i(t)\right)-\frac{d}{dt}\left(\left.\frac{\pd L}{\pd v^i}\right|_{\widetilde\gamma(t)}\right)c^i(t)\right)dt\\
&=\int_a^b\left(\left(\left.\frac{\pd L}{\pd x^i}\right|_{\widetilde\gamma(t)}\right)c^i(t)-\dt\left(\left.\frac{\pd L}{\pd v^i}\right|_{\widetilde\gamma(t)}\right)c^i(t)\right)dt+\left[\left(\left.\frac{\pd L}{\pd v^i}\right|_{\widetilde\gamma(t)}\right)c^i(t)\right]_a^b\\
&=\int_a^b\left(\left(\left.\frac{\pd L}{\pd x^i}\right|_{\widetilde\gamma(t)}\right)c^i(t)-\dt\left(\left.\frac{\pd L}{\pd v^i}\right|_{\widetilde\gamma(t)}\right)c^i(t)\right)dt\\
&=\int_a^b\left[\left(\left.\frac{\pd L}{\pd x^i}\right|_{\widetilde\gamma(t)}\right)-\dt\left(\left.\frac{\pd L}{\pd v^i}\right|_{\widetilde\gamma(t)}\right)\right]c^i(t)
\end{align*}

But since this expression is equal to zero for all $c^1(t),\dots,c^n(t)$ with $c^i(a)=c^i(b)=0$, the \emph{Fundamental Lemma of Calculus of Variations} (see \cite{Arnold} page 57) implies \[\left.\frac{\pd L}{\pd x^i}\right|_{\widetilde\gamma(t)}-\dt\left(\left.\frac{\pd L}{\pd v^i}\right|_{\widetilde\gamma(t)}\right)=0 \ \ \ \text{ for all } i=1,\cdots, n\]These $n$ second order ODE's are called the \bf{Euler-Lagrange} \rm equations. Hence we have shown that a necessary condition for a curve to minimize the action is that it needs to satisfy the Euler-Lagrange equations. 

\begin{remark}If the Lagrangian is strictly convex, meaning for fixed $x\in M$, arbitrary $v,w\in T_x M$ and $0<t<1$ we have $L(x,tw+(1-t)v)<tL(x,w)+(1-t)L(x,v)$, then the converse is locally true. That is, if a curve $\gamma:[a,b]\to M$ satisfies the Euler-Lagrange equations, then there exists a subinterval $[a_1,b_1]\subset[a,b]$ such that $\gamma|_{[a_1,b_1]}$ minimizes the action. See \cite{Da Silva} page 117 for a proof of this. 
\end{remark}

\begin{remark} In the above derivation of the Euler-Lagrange equations, it was assumed that the Lagrangian was time independent. However, there are many situations in which the Lagrangian does depend on time, some of which we will see in subsequent sections. But notice that even if the Lagrangian were of the form $L(x^1,\dots,x^n,v^1,\dots,v^n,t)$ the above calculation would be exactly the same and the Euler-Lagrange equations derived above would not change.
\end{remark}

 



\subsection{Examples of Lagrangian Systems}
\vspace{0.2cm}

\begin{example}\bf{The Action as the Length Functional} \rm
\vspace{0.2cm}

In the case that we are working in a natural system for which the net force is zero, our Lagrangian reduces to \[L=K:TM\to\R \ \ \ \ \ \ \ (p,V_p)\mapsto \frac{1}{2}mg_p(V_p,V_p).\]In this case $\mathcal A(\gamma)$ is just a constant times the length of $\gamma$. That is, \[\mathcal A(\gamma)=\frac{1}{2}m\int_\gamma g_{\gamma(t)}(\gamma^\prime(t),\gamma^\prime(t))dt.\]A standard result from Riemannian geometry is that the critical points of the length functional are geodesics. Hence any geodesic is a critical point of $\mathcal A$. In particular, if the Riemannian manifold is $\R^n$ with the standard metric, and the net force is zero, then we get that the solutions of the Euler-Lagrange equations are straight lines. That is, the shortest path between two points is a straight line. Using this mechanical system we will give, in the section on Hamiltonian mechanics, another interpretation of geodesic flow.
\end{example}

\begin{example}\bf{(Natural System with Standard Coordinates)}\rm
\vspace{0.2cm}

Consider a natural Lagrangian system in $\R^2$ with the standard metric. Let $x^1,x^2$ denote the standard coordinates and let $x^1,x^2,v^1,v^2$ be the induced coordinates on $T\R^2=\R^4$.  In this coordinate system we have that $g=(dx^1)^2+(dx^2)^2$. Suppose that a particle of mass $m$ is moving in $\R^2$ under a conservative force field $F=-\nabla U$. By definition the Lagrangian $L:\R^4\to\R$ is defined by  \begin{align*}L(x^1,x^2,v^1,v^2)&:= K(x^1,x^2,v^1,v^2)-U(x^1,x^2,v^1,v^2)\\
&=\frac{1}{2}m(v^1)^2+\frac{1}{2}m(v^2)^2-U(x^1,x^2).\end{align*}The elements of the Euler-Lagrange equations are \[\frac{\pd L}{\pd x^1}(\widetilde\gamma(t))=-\frac{\pd U}{\pd x^1}(\gamma(t)) \ \ ,\ \ \frac{\pd L}{\pd x^2}(\widetilde\gamma(t))=-\frac{\pd U}{\pd x^2}(\gamma(t))\]and \[\dt\left(\frac{\pd L}{\pd v^1}(\widetilde\gamma(t))\right)=\dt\left(mv^1(\widetilde\gamma(t))\right)=m\ddot\gamma^1(t) \ \ ,\ \ \dt\left(\frac{\pd L}{\pd v^2}(\widetilde\gamma(t))\right)=\dt\left(mv^2(\widetilde\gamma(t))\right)=m\ddot\gamma^2(t).\]

Hence, in this setting, the Euler-Lagrange equations are equivalent to Newton's second law. This equivalence easily extends to conservative force fields on $\R^n$. The same result also holds for $k$ particles moving in $\R^n$. To see this,  just take the manifold to be $\R^{kn}$ so that the motion of the $k$-particles can be described by the motion of one particle. 
\vspace{0.1cm}

\begin{example}\bf{(Natural System with Polar Coordinates)}\rm
\vspace{0.1cm}

Consider the setup of the previous example, but with polar coordinates $(r,\theta)$. Let $r,\theta,\widetilde r,\widetilde \theta$ denote the induced coordinates on $T\R^2=\R^4$. By the chain rule we have $\frac{\pd}{\pd r}=\cos\theta\frac{\pd}{\pd x^1}+\sin\theta\frac{\pd}{\pd x^2}$ and $\frac{\pd}{\pd\theta}=-r\sin\theta\frac{\pd}{\pd x^1}+r\cos\theta\frac{\pd}{\pd x^2}$. It follows that in polar coordinates \[g=\left[\begin{array}{c c} 1&0\\ 0 &r^2\end{array}\right] \ \ \text{ and } \ \ g^{-1}=\left[\begin{array}{c c} 1&0\\ 0 &\frac{1}{r^2}\end{array}\right]\]In these coordinates, the velocity of $\gamma(t)=(r(t),\theta(t))$ is given by $\dot r\frac{\pd}{\pd r}+\dot\theta\frac{\pd}{\pd\theta}$. Therefore the kinetic energy of the particle at time $t$ is 
\begin{align*}
K(\widetilde\gamma(t))&=K(r,\theta,\dot r,\dot\theta)\\
&=\frac{1}{2}mg\left(\dot r\frac{\pd}{\pd r},\dot\theta\frac{\pd}{\pd \theta}\right)\\
&=\frac{1}{2}m(\dot r)^2+\frac{1}{2}mr^2(\dot\theta)^2
\end{align*}

Since $L=K-U$ it follows that \[\frac{\pd L}{\pd r}=mr\dot\theta^2-\frac{\pd U}{\pd r} \ \ \text{ and } \ \ \frac{\pd L}{\pd\theta}=-\frac{\pd U}{\pd\theta}\] while 

\[\dt\left(\left.\frac{\pd L}{\pd\widetilde r}\right|_{\widetilde\gamma(t)}\right)=m\ddot r \ \ \text{ and } \ \ \dt\left(\left.\frac{\pd L}{\pd\widetilde\theta}\right|_{\widetilde\gamma(t)}\right)=\dt(mr^2\dot\theta)=mr^2\ddot\theta+2mr\dot r\dot\theta\]

\noindent We have that $F=-\nabla U=-\frac{\pd U}{\pd r}\frac{\pd}{\pd r}-\frac{1}{r^2}\frac{\pd U}{\pd\theta}\frac{\pd}{\pd\theta}$ and by definition, the forces in the $r$ and $\theta$ directions are 

\[F_r=\frac{F\cdot \frac{\pd}{\pd r}}{\left|\frac{\pd}{\pd r}\right|}=-\frac{\pd U}{\pd r} \ \ \text{ and } \ \ F_\theta=\frac{F\cdot\frac{\pd}{\pd\theta}}{\left|\frac{\pd}{\pd\theta}\right|}=-\frac{1}{r}\frac{\pd U}{\pd\theta}\]

\noindent Combining these equalities the first Euler Lagrange equation gives that \[F_r=m(\ddot r-r\dot\theta^2)\] which is the $r$-component of Newtons second law, while the second equation says \[rF_\theta=-\frac{\pd U}{\pd\theta}=\dt\left(mr^2\dot\theta\right)=mr^2\ddot\theta+2mr\dot r\dot\theta\]which is precisely the statement that the torque, $rF_\theta$, is the derivative of angular momentum, $mr^2\dot\theta$.
\end{example}
\vspace{0.05cm}

\begin{remark}
In the above examples we showed that a motion is determined by solving the Euler-Lagrange equations. However, we do not know if this solution is a maximum or a minimum. To prove that a solution is a maximium or a minimum usually requires some extra work. However, the above examples do demonstrate \bf{Hamilton's Principle} \rm which is that the path a particle follows is a critical point of the action $\mathcal A$.
\end{remark}

\end{example}
\noindent These examples lead to the following definition.

\begin{definition}
In a Lagrangian system $(M,L)$, a curve $\gamma: [a,b]\to M$ is called a \bf{motion} \rm if $\widetilde\gamma(t)$ satisfies the Euler-Lagrange equations.
\end{definition}






Since the Euler-Lagrange equations were derived from a statement about curves, independent of the coordinate system chosen, this means that if the equation holds in one coordinate system, they hold in any other. However, we can also prove this rigorously. 

\begin{proposition}
Let $(M,L)$ be a Lagrangian system. If the Euler-Lagrange equations hold in one coordinate chart then they hold in all coordinate charts.
\end{proposition}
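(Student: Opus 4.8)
The plan is to prove the stronger, pointwise fact that the Euler-Lagrange operator transforms as a covector under a change of induced coordinates, so that its vanishing is chart-independent. Fix two overlapping charts $(U,x^1,\dots,x^n)$ and $(V,q^1,\dots,q^n)$ with induced coordinates $(x^i,v^i)$ and $(q^j,u^j)$ on $TU$ and $TV$ respectively. On the overlap the transition is $q^j=q^j(x)$, and since along the lift $\widetilde\gamma$ one has $\dot q^j=\frac{\partial q^j}{\partial x^i}\dot x^i$, the induced transition on the fibres is $u^j=\frac{\partial q^j}{\partial x^i}v^i$. From this I read off the two relations I will need, namely $\frac{\partial u^j}{\partial v^i}=\frac{\partial q^j}{\partial x^i}$ and $\frac{\partial u^j}{\partial x^i}=\frac{\partial^2 q^j}{\partial x^i\partial x^k}v^k$. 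The goal is the identity
\[
\frac{\partial L}{\partial x^i} - \frac{d}{dt}\frac{\partial L}{\partial v^i} = \frac{\partial q^j}{\partial x^i}\left(\frac{\partial L}{\partial q^j} - \frac{d}{dt}\frac{\partial L}{\partial u^j}\right),
\]
evaluated along $\widetilde\gamma(t)$.

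To establish this I would expand both terms on the left by the chain rule. The first gives $\frac{\partial L}{\partial x^i}=\frac{\partial L}{\partial q^j}\frac{\partial q^j}{\partial x^i}+\frac{\partial L}{\partial u^j}\frac{\partial^2 q^j}{\partial x^i\partial x^k}v^k$, while $\frac{\partial L}{\partial v^i}=\frac{\partial L}{\partial u^j}\frac{\partial q^j}{\partial x^i}$, so that the product rule yields
\[
\frac{d}{dt}\frac{\partial L}{\partial v^i} = \left(\frac{d}{dt}\frac{\partial L}{\partial u^j}\right)\frac{\partial q^j}{\partial x^i} + \frac{\partial L}{\partial u^j}\frac{\partial^2 q^j}{\partial x^i\partial x^k}\dot x^k.
\]
Since $\dot x^k=v^k$ along the lift, the two terms involving the Hessian $\frac{\partial^2 q^j}{\partial x^i\partial x^k}$ of the transition map are identical and cancel when I subtract, leaving exactly the right-hand side of the target identity. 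I expect this cancellation to be the heart of the matter: it is precisely the step that shows the inhomogeneous, non-tensorial pieces produced by differentiating the velocity transformation drop out, so that the Euler-Lagrange operator is genuinely covariant rather than merely an unstructured second-order expression.

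Finally, because $U\cap V$ is an open set on which the change of coordinates is a diffeomorphism, the Jacobian matrix $\left[\frac{\partial q^j}{\partial x^i}\right]$ is invertible at every point; hence the identity shows that the $x$-chart Euler-Lagrange expressions vanish along $\widetilde\gamma$ if and only if the $q$-chart expressions do, which is the claim. As a conceptual alternative one could avoid the computation entirely: the Euler-Lagrange equations were derived as the condition that the first variation $\left.\frac{d}{d\varepsilon}\right|_{\varepsilon=0}\mathcal A_{\gamma_\varepsilon}$ vanish for all admissible variations, and both the action $\mathcal A$ and its first variation are defined without reference to coordinates, so combining this with the Fundamental Lemma of the Calculus of Variations applied in each chart gives the result. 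The only care needed there is to match admissible variation fields across the two charts, which is immediate since such a field is a coordinate-free object along $\gamma$ vanishing at the endpoints.
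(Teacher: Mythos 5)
Your proof is correct and follows essentially the same route as the paper's: the same chain-rule expansion of $\frac{\partial L}{\partial x^i}$ and $\frac{d}{dt}\frac{\partial L}{\partial v^i}$ using the relations $\frac{\partial u^j}{\partial v^i}=\frac{\partial q^j}{\partial x^i}$ and $\frac{\partial u^j}{\partial x^i}=\frac{\partial^2 q^j}{\partial x^i\partial x^k}v^k$, the same cancellation of the Hessian terms, and the same covariance identity for the Euler-Lagrange operator. Your explicit final appeal to the invertibility of the Jacobian, and your remark on the coordinate-free variational alternative, match what the paper leaves implicit in the sentence preceding the proposition.
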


\begin{proof}
Let $(U,x^1,\dots,x^n)$ and $(V,\widetilde x^1,\dots,\widetilde x^n)$ denote two arbitrary coordinate charts on $M$ with non-trivial intersection and let $x^1,\dots,x^n,v^1,\dots,v^n$ and $\widetilde x^1,\dots,\widetilde x^n,\widetilde v^1,\dots,\widetilde v^n$ denote the induced coordinates on $TU$ and $TV$ respectively. Suppose that $\gamma$ is a motion in $(M,L)$. That is \[\left.\frac{\pd L}{\pd x^i}\right|_{\widetilde\gamma(t)}=\dt\left(\left.\frac{\pd L}{\pd v^i}\right|_{\widetilde\gamma(t)}\right).\]The chain rule shows that $\frac{\pd}{\pd x^i}=\frac{\pd \widetilde x^j}{\pd x^i}\frac{\pd}{\pd \widetilde x^j}$ implying that $\widetilde v^i=\frac{\pd\widetilde x^i}{\pd x^j}v^j$. Using these expressions we get \begin{align}\frac{\pd\widetilde v^i}{\pd v^l}=\frac{\pd\widetilde x^i}{\pd x^j}\delta_{l}^j=\frac{\pd\widetilde x^i}{\pd x^l}\end{align} and \begin{align}\frac{\pd\widetilde v^i}{\pd x^l}=\frac{\pd ^2\widetilde x^i}{\pd x^j\pd x^l}v^j.\end{align}
It follows that \begin{align*}
\left.\frac{\pd L}{\pd x^i}\right|_{\widetilde\gamma(t)}&=\left.\left(\frac{\pd L}{\pd\widetilde x^j}\frac{\pd\widetilde x^j}{\pd x^i}+\frac{\pd L}{\pd\widetilde v^j}\frac{\pd\widetilde v^j}{\pd x^i}\right)\right|_{\widetilde\gamma(t)}\\
&=\left.\left(\frac{\pd L}{\pd\widetilde x^j}\frac{\pd\widetilde v^j}{\pd v^i}+\frac{\pd L}{\pd\widetilde v^j}\frac{\pd ^2\widetilde x^j}{\pd x^l\pd x^i}v^l\right)\right|_{\widetilde\gamma(t)}&\text{by $(3.1)$ and $(3.2)$}\\
&=\left.\left(\frac{\pd L}{\pd\widetilde x^j}\frac{\pd\widetilde v^j}{\pd v^i}+\frac{\pd L}{\pd\widetilde v^j}\frac{\pd ^2\widetilde x^j}{\pd x^l\pd x^i}\frac{d x^l}{dt}\right)\right|_{\widetilde\gamma(t)}&\text{since we are substituting $\widetilde\gamma(t).$}
\end{align*}We also have that 
\begin{align*}
\dt\left(\left.\frac{\pd L}{\pd v^i}\right|_{\widetilde\gamma(t)}\right)&=\dt\left(\left.\frac{\pd L}{\pd\widetilde v^j}\frac{\pd\widetilde v^j}{\pd v^i}\right|_{\widetilde\gamma(t)}\right)\\
&=\dt\left(\left(\left.\frac{\pd L}{\pd\widetilde v^j}\frac{\pd\widetilde x^j}{\pd x^i}\right)\right|_{\widetilde\gamma(t)}\right)&\text{by $(3.1)$}\\
&=\dt\left(\left.\frac{\pd L}{\pd \widetilde v^j}\right|_{\widetilde\gamma(t)}\right)\frac{\pd \widetilde x^j}{\pd x^i}+\left.\frac{\pd L}{\pd\widetilde v^j}\right|_{\widetilde\gamma(t)}\left(\frac{\pd^2 \widetilde x^j}{\pd x^l\pd x^i}\frac{d x^l}{dt}\right)&\text{by the product rule}\\
&=\dt\left(\left.\frac{\pd L}{\pd \widetilde v^j}\right|_{\widetilde\gamma(t)}\right)\frac{\pd \widetilde v^j}{\pd v^i}+\left.\frac{\pd L}{\pd\widetilde v^j}\right|_{\widetilde\gamma(t)}\left(\frac{\pd^2 \widetilde x^j}{\pd x^l\pd x^i}\frac{d x^l}{dt}\right)&\text{by $(3.1).$}
\end{align*}
By combining these two calculations it follows that \[\left.\frac{\pd L}{\pd x^i}\right|_{\widetilde\gamma(t)}-\dt\left(\left.\frac{\pd L}{\pd v^i}\right|_{\widetilde\gamma(t)}\right)=\left(\left.\frac{\pd L}{\pd \widetilde x^j}\right|_{\widetilde\gamma(t)}-\dt\left(\left.\frac{\pd L}{\pd\widetilde v^j}\right|_{\widetilde\gamma(t)}\right)\right)\frac{\pd \widetilde v^j}{\pd v^i}\]

\end{proof}

To make a comparison between Lagrangian and Hamiltonian mechanics, we compute here the Euler-Lagrange equations for the simple pendulum and revisit the calculation in the next section.

\begin{example}\bf{The Simple Pendulum} \rm
\vspace{0.1cm}

The simple pendulum is illustrated below.

\[
\begin{tikzpicture}
\draw (-1.5,0)--(1.5,0) (0,0)--(0,-2.4);
\draw (0,0)--+(315:2);
\draw[fill=white] ($(0,0)+(315:2)$) circle (.1);
\draw (0,-.5) arc (270:315:.5);
\node at ($(0,0)+(292.5:.8)$) {$\theta$};
\end{tikzpicture}
\]

 We have a mass $m$ attached to a weightless rod of length $l$. That is, we are working in the $1$-dimensional submanifold $(S^1,\theta)$ of $(\R^2,r,\theta)$ which we endow with the standard metric.  By Newton's second law we know that the equation of motion for the mass is \[\ddot\theta=-\frac{g}{l}\sin\theta\] Let $\theta,\widetilde\theta$ denote the induced coordinates on $T^\ast S^1\cong S^1\times\R$. As in example 3.21 we have that $K=\frac{1}{2}ml^2(\widetilde\theta)^2$. A standard calculation gives that the net force $F=-\frac{mg}{l}\sin\theta$ is conservative with potential energy $U=mgl(1-\cos\theta)$.  With the natural Lagrangian $L=K-U$, let $\alpha(t)$ be a motion in the Lagrangian system $(S^1,L)$. That is, suppose $\widetilde\alpha(t) =(\alpha(t),\alpha^\prime(t))$ satisfies the Euler-Lagrange equations, i.e.
\[\left.\frac{\pd L}{\pd\theta}\right|_{\widetilde\alpha(t)}=\dt\left(\left.\frac{\pd L}{\pd\widetilde\theta}\right|_{\widetilde\alpha(t)}\right) \ \ \Longrightarrow \ \ -mgl\sin(\alpha(t))=\dt\left(ml^2\alpha^\prime(t)\right)=ml^2\alpha^{\prime\prime}(t)\]
This is precisely the statement that the torque exerted by gravity on the pendulum, $-mgl\sin\alpha(t)$, is the product of the moment of inertia, $ml^2$, and angular acceleration, $\alpha^{\prime\prime}(t)$.
\end{example}

\begin{remark}In the above example we found the equations of motion without computing the tension of the rope, the constraint force. Although this example is simple enough to solve using Newton's second law, it gives a glimpse into how the Euler-Lagrange equations can be used to simplify other complicated systems with constraint forces.
\end{remark}
\newpage
\section{Hamiltonian Mechanics}

The next level of formality in studying mechanics is done using Hamilton's formulation. This approach is done through the  framework of symplectic geometry, which was introduced in section 2. In section 2.4 we saw how the cotangent bundle has a canonical symplectic structure. Where Lagrangian mechanics studies curves living in $TM$, Hamilton's approach instead studies curves in the symplectic manifold $T^\ast M$. We will see that the outcomes predicted in Hamiltonian mechanics agree with those from Lagrangian mechanics in situations where both can be applied.
\subsection{Hamiltonian Vector Fields}

\begin{definition}
A triple $(X,\omega, H)$ where $(X,\omega)$ is a symplectic manifold and $H\in C^\infty(X)$ is called a \bf{Hamiltonian system} \rm and $H$ is called the associated \bf{Hamiltonian function}\rm.
\end{definition}

\noindent Fix a Hamiltonian system $(X,\omega, H)$. We have that $dH\in \Gamma(T^\ast X)$ and by Proposition 2.5 there exists a corresponding vector field $V_H\in\Gamma(TX)$, the symplectic dual of $dH$. Notice that any $G\in C^\infty(X)$ induces a vector field $V_G$ in this way.


\begin{definition}
Given $G\in C^\infty(X)$, the vector field $V_G$ is called the \bf{Hamiltonian vector field} \rm associated to $G$.
\end{definition}

We will need the following propositions in section 6.

\begin{proposition}The flow $\theta_t$ of $V_H$ preserves $\omega$ (i.e. $\theta_t^\ast\omega=\omega$).
\end{proposition}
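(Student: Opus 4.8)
The plan is to show that the flow $\theta_t$ of the Hamiltonian vector field $V_H$ preserves $\omega$ by computing the Lie derivative $\L_{V_H}\omega$ and showing it vanishes. This is the cleanest route, since a flow preserves a form exactly when the Lie derivative of that form along the generating vector field is zero; more precisely, $\dt\theta_t^\ast\omega = \theta_t^\ast(\L_{V_H}\omega)$, so $\L_{V_H}\omega \equiv 0$ together with $\theta_0^\ast\omega=\omega$ forces $\theta_t^\ast\omega=\omega$ for all $t$.

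The key computational tool is Cartan's magic formula, $\L_{V_H}\omega = d(V_H\hk\omega) + V_H\hk d\omega$. First I would use the fact that $\omega$ is symplectic, hence closed, so $d\omega=0$ and the second term drops out immediately. Next I would invoke the defining property of the Hamiltonian vector field: by Definition 4.2 and the construction via the symplectic dual (the map of Proposition 2.5 and Definition 2.10), $V_H$ is precisely the vector field satisfying $V_H\hk\omega = dH$. Substituting this into the surviving term gives $\L_{V_H}\omega = d(dH)$.

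The final step is to observe that $d(dH)=0$ since $d^2=0$ for the exterior derivative. Therefore $\L_{V_H}\omega=0$, and the preservation $\theta_t^\ast\omega=\omega$ follows.

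I do not anticipate a serious obstacle here, as the argument is short once the right identities are assembled; the only point requiring care is citing the relationship $\dt\theta_t^\ast\omega=\theta_t^\ast(\L_{V_H}\omega)$ correctly (this is the standard characterization of the Lie derivative via the flow) and being precise that $V_H\hk\omega=dH$ is exactly the definition of the Hamiltonian vector field rather than something needing proof. An alternative, if one prefers to avoid invoking the Lie-derivative-flow identity directly, would be to note that $\omega=-d\alpha$ is exact on $T^\ast M$ and argue via preservation of $\alpha$; but the Cartan formula approach is more general (it works on any symplectic manifold, not just a cotangent bundle) and is the natural choice given that Definition 4.1 allows $(X,\omega)$ to be an arbitrary symplectic manifold.
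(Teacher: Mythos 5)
Your proposal is correct and follows essentially the same route as the paper: Cartan's magic formula plus the closedness of $\omega$ reduces $\mathcal L_{V_H}\omega$ to $d(V_H\hk\omega)=d(dH)=0$. The only difference is that you spell out the flow-versus-Lie-derivative identity $\dt\theta_t^\ast\omega=\theta_t^\ast(\mathcal L_{V_H}\omega)$, which the paper leaves implicit; this is a welcome bit of extra care rather than a divergence in method.
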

\begin{proof} By Cartan's magic formula together with the closedness of $\omega$ it follows that \[\mathcal L_{V_H}\omega=d(V_H\hk\omega)+V_H\hk d\omega=d(V_H\hk\omega)=d(dH)=0\]
\end{proof}
\vspace{-0.5cm}
\begin{proposition}The Hamiltonian function of a Hamiltonian vector field is constant on its flows. That is, $H\circ\theta_t =H$ on the domain of $\theta_t$. 
\end{proposition}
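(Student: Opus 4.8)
The plan is to fix a point $p$ in the domain of $\theta_t$ and to examine the real-valued function $t\mapsto H(\theta_t(p))$, showing that it is constant by verifying that its derivative vanishes for every $t$. Since $\theta_t$ is the flow of $V_H$, the curve $t\mapsto\theta_t(p)$ is by definition an integral curve of $V_H$, so that $\dt\theta_t(p)=V_H(\theta_t(p))$. Once the derivative is shown to be identically zero, constancy in $t$ follows, and evaluating at $t=0$ fixes the value of the constant.

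First I would apply the chain rule to obtain
\[
\dt H(\theta_t(p))=dH_{\theta_t(p)}\left(\dt\theta_t(p)\right)=dH_{\theta_t(p)}\left(V_H(\theta_t(p))\right).
\]
Next I would invoke the defining property of $V_H$ as the symplectic dual of $dH$, namely $dH=V_H\hk\omega=\omega(V_H,\cdot)$, and substitute it into the expression above to get
\[
\dt H(\theta_t(p))=\omega_{\theta_t(p)}\left(V_H(\theta_t(p)),V_H(\theta_t(p))\right).
\]
The final step is to observe that $\omega$ is skew-symmetric, so that $\omega(Z,Z)=0$ for any tangent vector $Z$; hence the right-hand side vanishes identically. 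This shows $t\mapsto H(\theta_t(p))$ is constant, and since $\theta_0=\mathrm{id}$ its value equals $H(p)$, giving $H(\theta_t(p))=H(p)$ for all $t$ in the domain.

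The computation is short and essentially routine, so there is no serious obstacle. The only point that deserves care is the justification of the first equality: it rests on the fact that the velocity of the flow line through $p$ is exactly $V_H$ evaluated at the current point, which is simply the definition of an integral curve of a vector field. Everything else is forced by the definition of $V_H$ as the symplectic dual of $dH$ together with the skew-symmetry of $\omega$. One could alternatively phrase the argument via the Lie derivative, noting $\dt(\theta_t^\ast H)=\theta_t^\ast(\mathcal L_{V_H}H)$ with $\mathcal L_{V_H}H=dH(V_H)=\omega(V_H,V_H)=0$, but the pointwise differentiation above is the most direct route.
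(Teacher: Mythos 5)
Your proof is correct and rests on exactly the same computation as the paper's: $dH(V_H)=\omega(V_H,V_H)=0$ by skew-symmetry, which the paper phrases as the one-line Lie-derivative identity $\mathcal L_{V_H}H=V_H\hk(V_H\hk\omega)=0$. You merely unpack the step from vanishing derivative along the flow to constancy, which the paper leaves implicit.
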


\begin{proof}
By the antisymmetry of the interior product\[\L_{V_H}H=V_HH=V_H\hk dH=V_H\hk(V_H\hk\omega)=0\]
\end{proof}
\vspace{-0.5cm}
The following is an illustration of this result.
\begin{example} \bf{(Height Function on $S^2$)}\rm
\vspace{0.2cm}

The sphere $S^2$ is a symplectic manifold when equipped with the local $2$-form $d\theta\wedge dh$, where $\theta$ is a local coordinate for $S^1$ and $h$ is the $x^3$-coordinate of $\R^3$. Let $H(\theta,h)=h$. It's clear that $dH=dh$ and since $\omega(V_H,\cdot)=(d\theta\wedge dh)(V_H,\cdot)=dh$ it must be that $V_H=\frac{\pd}{\pd\theta}$. The vector field $V_H=\frac{\pd}{\pd\theta}$ has flow $\rho_t:S^2\to S^2$ given by $(\theta,h)\mapsto (\theta+t,h)$, which is clearly constant under the height function. That is, $H\circ\rho_t=H$.
\end{example}

By the closedness of $\omega$ we have $\mathcal L_W\omega=d(W\hk\omega)$ for any $W\in\Gamma(TX)$. We see that the flow of an arbitrary $W\in\Gamma(TX)$ preserves $\omega$ if $W\hk\omega$ is closed. This motivates the following definition.

\begin{definition}
A vector field $W\in\Gamma(TX)$ is said to be \bf{Hamiltonian} \rm if $W\hk \ \omega$ is exact and is said to be \bf{symplectic} \rm if $W\hk \ \omega$ is closed.
\end{definition}
By definition all Hamiltonian vector fields are symplectic. However, the converse is not true.

\begin{example}\bf{(Symplectic Vector Field that is not Hamiltonian)}\rm
\vspace{0.1cm}

The  $2$-torus $\mathbb{T}^2$ is a symplectic manifold when equipped with local $2$-form $d\theta\wedge d\varphi$, where $\theta$ and $\varphi$ are two different local coordinates for $S^1$. The vector field $\frac{\pd}{\pd\theta}$ on $\mathbb{T}^2$ is symplectic but not Hamiltonian since $\frac{\pd}{\pd\theta}\hk (d\theta\wedge d\varphi)=d\varphi$ is closed but not exact. The $1$-form $d\varphi$ is locally exact, but not globally exact since $\varphi$ is only defined on a proper open subset of $S^1$. The same can be said about the vector field $\frac{\pd}{\pd\varphi}$.
\end{example}

\begin{remark}The Lie bracket $[\cdot,\cdot]$ turns the subspace of Hamiltonian vector fields into a Lie algebra. In fact a stronger result holds; the Lie bracket of any two symplectic vector fields is Hamiltonian.  Indeed, if $X,Y$ are symplectic then, using the identity\[[\L_X,\imath_Y]=\imath_{[X,Y]}=[\imath_X,\L_Y]\]it follows that \begin{align*}[X,Y]\hk\omega&=\imath_{[X,Y]}\omega\\&=\L_X\imath_Y\omega-\imath_Y\L_X\omega\\&=d(X\hk Y\hk\omega)+X\hk d(Y\hk\omega)-Y\hk d(X\hk\omega)\\&=d(\omega(Y,X))\end{align*}
\end{remark}

Just as the Euler-Lagrange equations determine the motions a Lagrangian system must satisfy, curves satisfying the Hamilton equations give the motions in a Hamiltonian system. We will see why this is below.

\subsection{Hamilton's Equations}

By Darboux's theorem we can find local coordinates $(q^1,\dots,q^n,p_1,\dots, p_n)$ in $X$ such that $\omega=dq^j\wedge dp_j$. We have that $V_H=A^i\frac{\pd}{\pd q^i}+B_i\frac{\pd}{\pd p_i}$ for some $A^i, B_i\in C^\infty(X)$. It follows that $V_H\hk\omega=A^idp_i-B_idq^i$. Since $dH=\frac{\pd H}{\pd q^i}dq^i+\frac{\pd H}{\pd p_i}dp_i$, it must be that $A^i=\frac{\pd H}{\pd p_i}$ and $B_i=-\frac{\pd H}{\pd q^i}$. Hence any integral curve $\gamma(t)=(\alpha(t),\beta(t))$ of $V_H$ must satisfy
\[
\begin{array}{l}
\dt \alpha^i(t)=\left(\frac{\pd}{\pd p_i}H\right)(\gamma(t))\\
\ \ \\
\dt \beta_i(t)=-\left(\frac{\pd}{\pd q^i}H\right)(\gamma(t))
\end{array}
\]
These are the \bf{Hamilton equations}\rm, a system of $2n$ first order ODE's. That is, a curve $\gamma:\R\to X$ is an integral curve for $V_H$ if and only if $\gamma(t)$ satisfies Hamilton equations. We have shown that \[V_H=\frac{\pd H}{\pd p_i}\frac{\pd}{\pd q^i}-\frac{\pd H}{\pd q^i}\frac{\pd}{\pd p_i}\]

\begin{definition}
An integral curve $\gamma:\R\to X$ of $V_H$ is called a \bf{motion} \rm of the Hamiltonian system $(X,\omega,H)$. That is, $\gamma$ is a motion if and only if $\gamma$ satisfies Hamilton's equations. 
\end{definition}

A rather basic example of a Hamiltonian system is the simple pendulum. This was discussed in the Lagrangian setting as Example 3.25

\begin{example}\bf{(The Simple Pendulum)} \rm

Consider the  $1$-dimensional manifold $S^1$. Let $(V,\theta)$ be a chart in $S^1$. We know that the cotangent bundle $(T^\ast S^1,\theta,\xi)$ is a symplectic manifold with symplectic $2$-form $\omega=d\theta\wedge d\xi$. Consider the function $K:\R\to\R$ given by $\xi\mapsto \frac{\xi^2}{2ml^2}$ and $V:U\to\R$ given by $\theta\mapsto gml(1-\cos\theta)$. Define the Hamiltonian to be  \[H:T^\ast S^1\to\R \ \ \ \ \ \ \ (\theta,\xi)\mapsto K(\xi)+U(\theta).\] At first this definition of the Hamiltonian may seem ad hoc; however, after introducing the Legendre transform we will see where it comes from. In fact, the choice for naming the functions $K$ and $U$ above is to indicate that the Hamiltonian is to be thought of, in this setting, as the total energy. It follows that the Hamiltonian vector field is \[V_H=\frac{\pd H}{\pd\xi}\frac{\pd}{\pd\theta}-\frac{\pd H}{\pd \theta}\frac{\pd}{\pd\xi}=mgl\sin\theta\frac{\pd}{\pd\theta}-\frac{\xi}{ml^2}\frac{\pd}{\pd\xi}\]Hence if $\gamma(t)=(\alpha(t),\beta(t))$, where $\alpha(t)=\theta(\gamma(t))$ and $\beta(t)=\xi(\alpha(t))$, is a motion in this Hamiltonian system then $\gamma(t)$ satisfies the Hamilton equations \[ \dt(\beta(t))=\left.\frac{\pd H}{\pd\xi}\right|_{\gamma(t)} \ \ \Longrightarrow \ \ \beta^\prime(t)=mgl\sin(\alpha(t))\]and \[\dt(\alpha(t))=-\left.\frac{\pd H}{\pd\theta}\right|_{\gamma(t)} \ \ \Longrightarrow \ \ \alpha^\prime(t)=-\frac{\beta(t)}{ml^2}\]Combining these two equations it follows that \[\alpha^{\prime\prime}(t)=-\frac{g}{l}\sin\theta\]which is precisely the motion of the pendulum as prescribed by Newton's second law. That is, the integral curves of the Hamiltonian vector field give the motions of this mechanical system.
\end{example} 
\subsection{The Poisson Bracket}

Given a symplectic manifold $(X,\omega)$, the Poisson bracket turns $C^\infty(X)$ into a Lie algebra. Suppose that the symplectic manifold also has a Hamiltonian $H\in C^\infty(X)$. In the same way the Lie bracket measures commutativity of vector fields, the Poisson bracket measures the commutativity of functions with the Hamiltonian vector field. 

\begin{definition}
Given $f,g\in C^\infty(X)$ their \bf{Poisson bracket} \rm is defined to be \[ \{f,g\}:=\omega(V_f,V_g).\]
\end{definition}

\noindent By definition \[\{f,g\}:=\omega(V_f,V_g)=(V_f\hk \ \omega)(V_g)=df(V_g)=V_gf.\] Using this calculation, is not hard to verify that $\{\cdot,\cdot\}$ does indeed turn $C^\infty(X)$ into a Lie algebra. Moreover, there is a Leibniz rule; \[\{f,gh\}=g\{f,h\}+\{f,g\}h.\]A straightforward computation shows that $C^\infty(X)\owns H\mapsto V_H\in\Gamma(TX)$ is a Lie algebra anti-homomorphism.

\begin{proposition}For $f\in C^\infty(X)$ we have $\{f, H\}=0 \iff f\circ\theta_t=f$ on the domain of $\theta_t$, where $\theta_t$ is the flow of $V_H$. That is, $\{f,H\}=0$ if and only if $f$ is constant along the integral curves of $V_H$.
\end{proposition}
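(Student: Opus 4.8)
The plan is to reduce both implications to a single statement about the derivative of $f$ along the flow $\theta_t$, using the formula for the Poisson bracket recorded just before the proposition. Recall that for any $g\in C^\infty(X)$ we have $\{g,H\}=\omega(V_g,V_H)=V_Hg=\L_{V_H}g$; in particular $\{f,H\}$ is nothing but the directional derivative of $f$ along the Hamiltonian vector field $V_H$. This identification is the genuine content of the argument, and it specializes Proposition 4.6 (the case $f=H$, where $\{H,H\}=0$ automatically and $H\circ\theta_t=H$).

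First I would establish the kinematic identity linking the flow to this directional derivative. Since $\theta_t$ is the flow of $V_H$, for each $p$ in the domain the curve $t\mapsto\theta_t(p)$ has velocity $V_H(\theta_t(p))$, so by the chain rule
\[
\dt\big(f\circ\theta_t\big)(p)=df_{\theta_t(p)}\!\left(V_H(\theta_t(p))\right)=(V_Hf)(\theta_t(p))=\{f,H\}(\theta_t(p)).
\]
This one equation carries both directions of the equivalence.

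For the forward direction, assume $\{f,H\}=0$, so that $V_Hf\equiv 0$ on $X$. Then the right-hand side above vanishes identically, meaning $t\mapsto f(\theta_t(p))$ has zero derivative throughout its interval of definition; since that interval is connected and contains $0$, we conclude $f(\theta_t(p))=f(\theta_0(p))=f(p)$, i.e. $f\circ\theta_t=f$. For the converse, assume $f\circ\theta_t=f$ for all admissible $t$; differentiating at $t=0$ and invoking the identity gives $\{f,H\}(p)=(V_Hf)(p)=\dt\big|_{t=0}(f\circ\theta_t)(p)=0$, and as $p$ was arbitrary, $\{f,H\}=0$.

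The only point requiring care is the bookkeeping behind the phrase ``on the domain of $\theta_t$'': because $V_H$ need not be complete, the flow is defined only on an open subset of $\R\times X$. I would therefore phrase the constancy claim for each fixed $p$ over its maximal interval $(a_p,b_p)\ni 0$, on which the fundamental theorem of calculus applies to yield $f\circ\theta_t=f$ there. Beyond this, everything reduces to the standard characterization of first integrals of a vector field, so I expect no substantive obstacle once the identity $\{f,H\}=\L_{V_H}f$ is in hand.
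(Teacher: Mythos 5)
Your proposal is correct and follows essentially the same route as the paper: both arguments hinge on the identity $\{f,H\}=V_Hf=\mathcal{L}_{V_H}f$ and the standard equivalence between invariance under the flow $\theta_t$ and vanishing of this Lie derivative. The only difference is that you unpack that last equivalence explicitly via the chain rule and the fundamental theorem of calculus (with due care about the flow domain), whereas the paper asserts it in a single step of its chain of equivalences.
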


\begin{proof} 
\begin{align*}
f\circ\theta_t=f&\iff \theta_t^\ast f=f\\
&\iff \mathcal L_{V_H}f=0\\
&\iff V_Hf=0\\
&\iff (df)(V_H)=0\\
&\iff (V_f\hk\omega)(V_H)=0\\
&\iff \omega(V_H,V_f)=0\\
&\iff\{H,f\}=0=\{f,H\}
\end{align*}
\end{proof}

\begin{definition}
A function $f$ such that $\{f,H\}=0$ is called an \bf{integral of motion}.\rm 
\end{definition}

By definition, if two functions $f_1,f_2$ commute 
(with respect to $\{\cdot,\cdot\}$) then $\omega(X_{f_1},X_{f_2})=0$, showing that given a collection of commuting integrals of motion, they generate an isotropic subspace of $T_pM$.  But in the paragraph following Definition 2.6, we showed that given a subspace $Y$ of a vector space $V$ we have $\dim V=\dim Y+\dim Y^\Omega$. If $Y$ is isotropic then $Y\subset Y^\Omega$ so that an isotropic subspace has dimension at most half the dimension of $M$. A Hamiltonian system is called \bf{(completely) integrable} \rm if there exists $n$ Poisson commuting independent integrals of motion $f_1=H,f_2,\dots, f_n$. 

\begin{theorem}\bf{(Arnold-Lioville Theorem, \cite{Da Silva})}\rm
\vspace{0.1cm}

Let $(X,\omega,H)$ be a completely integrable Hamiltonian system of dimension $2n$ with integrals of motion $f_1=H,f_2,\dots,f_n$. Consider the function $f:TX\to\R^n$ defined by $f:=(f_1,\dots,f_n)$. Let $c\in \R^n$ be a regular value of $f$. That is, $c$ is a point in $\R^n$ such that for every point in $f^{-1}(c)$ the differential of $f$ is surjective. Then $f^{-1}(c)$ is a Lagrangian submanifold of $X$. Moreover we have
\begin{enumerate}[(a)]

\item If the flows of the Hamiltonian vector fields $X_{f_1},\dots, X_{f_n}$ starting at a point $p\in f^{-1}(c)$ are complete, then the connected component of $f^{-1}(c)$ containing $p$ is a homogeneuous space for $\R^n$. This connected component has local coordinates $\varphi_1,\dots,\varphi_n$, called \bf{angle coordinates}\rm, for which the flows of $X_{f_1},\dots, X_{f_n}$ are linear.

\item There exists coordinates $\psi_1,\dots,\psi_n$, called \bf{action coordinates}\rm, such that each $\psi_i$ is an integral of motion and also such that $\varphi_1,\dots,\varphi_n,\psi_1,\dots,\psi_n$ form Darboux coordinates. 
\end{enumerate}

\end{theorem}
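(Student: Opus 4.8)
The plan is to treat the claims in order of increasing difficulty: first that $f^{-1}(c)$ is a Lagrangian submanifold, then the homogeneous-space structure and angle coordinates of part (a), and finally the action coordinates of part (b), which are the genuinely hard part. Since $c$ is a regular value, the regular value theorem makes $N_c := f^{-1}(c)$ an embedded submanifold of codimension $n$, so $\dim N_c = 2n - n = \tfrac{1}{2}\dim X$. To see it is Lagrangian, fix $p \in N_c$ and note that $T_p N_c = \bigcap_i \ker(df_i)_p$. Since $X_{f_i} \hk \omega = df_i$, a vector $v$ lies in $T_p N_c$ iff $\omega_p(X_{f_i}, v) = 0$ for every $i$, i.e. $T_p N_c = \bigl(\mathrm{span}\{X_{f_1}, \dots, X_{f_n}\}\bigr)^\omega$. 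Independence of the $f_i$ makes the $df_i$, and hence (by Proposition 2.5 applied fibrewise) the $X_{f_i}$, linearly independent, so their span is $n$-dimensional; Poisson commutativity gives $\omega(X_{f_i}, X_{f_j}) = \{f_i, f_j\} = 0$, so that span is isotropic of half dimension, i.e. Lagrangian. By Proposition 2.9 it equals its own symplectic complement, whence $T_p N_c = \mathrm{span}\{X_{f_i}\}$ is Lagrangian and the $X_{f_i}$ are tangent to $N_c$.

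Next I would record that the fields $X_{f_1}, \dots, X_{f_n}$ commute: using the Lie-algebra anti-homomorphism $g \mapsto X_g$ recorded in the discussion of the Poisson bracket, $[X_{f_i}, X_{f_j}] = -X_{\{f_i, f_j\}} = 0$. For part (a), assuming the flows $\theta^i_t$ of $X_{f_i}$ are complete, their commutativity lets me define
\[
\Phi : \R^n \times N_c \to N_c, \qquad \Phi(t_1, \dots, t_n, x) = \theta^1_{t_1} \circ \cdots \circ \theta^n_{t_n}(x),
\]
and the commuting-flows property makes $\Phi$ a well-defined, order-independent action of $(\R^n, +)$. Because the $X_{f_i}$ span $T_x N_c$ at every point, each orbit is open; as the orbits partition $N_c$, the action is transitive on a connected component $N$, exhibiting it as a homogeneous space $\R^n/\Gamma$. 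Local freeness forces the stabilizer $\Gamma$ to be a discrete subgroup, hence a lattice, so $N \cong \R^{n-k} \times \TT^k$, and the translation coordinates on $\R^n/\Gamma$ pull back to angle coordinates $\varphi_1, \dots, \varphi_n$ in which each $\theta^i_t$ is a translation, i.e. linear.

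The heart of the matter is part (b). I would work locally near $N$, where $f$ fibres a neighbourhood into Lagrangian tori and where $\omega = d\alpha$ for a primitive $\alpha$ (in a Darboux chart, $\alpha = \xi_i\, dx^i$). Choosing basic $1$-cycles $\gamma_1(c), \dots, \gamma_n(c)$ on the fibre $f^{-1}(c)$, I define
\[
\psi_j(c) = \frac{1}{2\pi} \oint_{\gamma_j(c)} \alpha.
\]
Since each fibre is Lagrangian, $\alpha$ restricts to a closed form on it, so these period integrals depend only on homology classes and vary smoothly with $c$; thus each $\psi_j$ is a function of $f$ alone and hence an integral of motion, with $\{\psi_i, \psi_j\} = 0$ automatic. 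It then remains to verify that $\varphi_1, \dots, \varphi_n, \psi_1, \dots, \psi_n$ are Darboux coordinates, the only substantive relation being the mixed bracket $\{\psi_i, \varphi_j\} = \delta_{ij}$; this follows from a period computation identifying the Jacobian of $(f_i) \mapsto (\psi_i)$ with the inverse of the matrix of angular frequencies, so that the Hamiltonian flows become constant-speed translations in the coordinates $\varphi_j$ conjugate to $\psi_j$.

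The main obstacle is exactly this last verification: showing that the periods $\psi_j$ are well-defined, smooth functions of $c$ and are canonically conjugate to the angles. It rests on the Lagrangian condition (which makes $\alpha$ close up over each fibre) together with a careful analysis of how the primitive $\alpha$ and the cycles $\gamma_j$ deform with $c$; I would defer the detailed period computation to the cited reference rather than carry it out in full.
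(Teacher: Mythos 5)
The paper does not actually prove this theorem: it simply cites da Silva (Theorem 18.12, p.~110) for part (a) and Arnold (pp.~271--274 and 279--281) for part (b), so your proposal goes well beyond the source. Your argument for the Lagrangian claim is complete and correct: regularity gives $\dim f^{-1}(c)=n$, the identity $T_pf^{-1}(c)=\bigl(\mathrm{span}\{X_{f_1},\dots,X_{f_n}\}\bigr)^\omega$ follows from $X_{f_i}\hk\omega=df_i$, and linear independence plus Poisson commutativity plus Proposition 2.9 identifies $T_pf^{-1}(c)$ with the span of the $X_{f_i}$ and shows it is Lagrangian. Part (a) is likewise the standard and correct argument: commuting complete flows assemble into an $\R^n$-action, orbits are open because the $X_{f_i}$ span each tangent space, so a connected component is a single orbit $\R^n/\Gamma$ with $\Gamma$ discrete, hence $\R^{n-k}\times\mathbb{T}^k$ with linear flows in the translation coordinates. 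Two caveats on part (b). First, your parenthetical locating the primitive $\alpha$ ``in a Darboux chart'' is too local: the cycles $\gamma_j$ are not contained in any single chart, so you need a primitive of $\omega$ on an entire tubular neighbourhood of the fibre; such a primitive exists precisely because the fibre is Lagrangian (the neighbourhood retracts onto the fibre, on which $\omega$ vanishes, so $\omega$ is exact there), and this is worth stating since it is where the Lagrangian condition re-enters. Second, you explicitly defer the verification that $\{\psi_i,\varphi_j\}=\delta_{ij}$, which is the genuinely hard content of (b); since the paper defers the entire theorem to the literature, this leaves you no worse off than the source, but your treatment of (b) should be understood as a programme with the decisive computation outsourced, not as a proof.
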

\begin{proof}
This is Theorem 18.12 in \cite{Da Silva}. A proof of part (a) can be found in \cite{Da Silva}, page 110.  For a proof of part (b) see \cite{Arnold}, pages 271-274 and 279-281. 
\end{proof}

There is a rather simple expression of the Poisson bracket in Darboux coordinates.
\begin{proposition}
Let $(M,\omega)$ be a symplectic manifold and let $(U,q^1,\dots, q^n,p_1,\dots,p_n)$ be a Darboux chart. Then for $f,g\in C^\infty (M)$ we have $\{f,g\}=\frac{\pd g}{\pd p_i}\frac{\pd f}{\pd q^i}-\frac{\pd f}{\pd p_i}\frac{\pd g}{\pd q^i}$.
\end{proposition}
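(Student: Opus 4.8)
The plan is to compute the Poisson bracket $\{f,g\} = \omega(V_f, V_g)$ directly in Darboux coordinates by first finding the coordinate expression for an arbitrary Hamiltonian vector field, then pairing two such vector fields against $\omega$. The key computational input is already available from the discussion preceding Hamilton's equations: for any $h \in C^\infty(M)$, writing $V_h = A^i \frac{\pd}{\pd q^i} + B_i \frac{\pd}{\pd p_i}$ and using $\omega = dq^j \wedge dp_j$, one finds $V_h \hk \omega = A^i dp_i - B_i dq^i$, and matching this against $dh = \frac{\pd h}{\pd q^i} dq^i + \frac{\pd h}{\pd p_i} dp_i$ forces $A^i = \frac{\pd h}{\pd p_i}$ and $B_i = -\frac{\pd h}{\pd q^i}$. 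Hence
\[
V_h = \frac{\pd h}{\pd p_i}\frac{\pd}{\pd q^i} - \frac{\pd h}{\pd q^i}\frac{\pd}{\pd p_i}.
\]
This is exactly the formula displayed at the end of Section 4.2, so I would simply invoke it for both $f$ and $g$.

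First I would write $V_f = \frac{\pd f}{\pd p_i}\frac{\pd}{\pd q^i} - \frac{\pd f}{\pd q^i}\frac{\pd}{\pd p_i}$ and similarly for $V_g$. Next I would use the observation made right after Definition 4.12, namely that $\{f,g\} = \omega(V_f,V_g) = (V_f \hk \omega)(V_g) = df(V_g) = V_g f$. This reduces the problem to applying the vector field $V_g$ to the function $f$, which is purely a matter of reading off the coefficients of $V_g$ and multiplying by the corresponding partial derivatives of $f$. Explicitly,
\[
\{f,g\} = V_g f = \frac{\pd g}{\pd p_i}\frac{\pd f}{\pd q^i} - \frac{\pd g}{\pd q^i}\frac{\pd f}{\pd p_i},
\]
which is precisely the claimed identity. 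Alternatively, I could expand $\omega(V_f, V_g)$ directly using $\omega = dq^j \wedge dp_j$ and the basis pairings $dq^j\wedge dp_j(\frac{\pd}{\pd q^i}, \frac{\pd}{\pd p_k}) = \delta^j_i \delta^k_j$, but routing through $V_g f$ is cleaner and avoids bookkeeping over antisymmetric terms.

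There is essentially no hard step here; the proposition is a direct unwinding of definitions together with the already-derived coordinate formula for a Hamiltonian vector field. The only point requiring mild care is the sign and index placement: one must keep track of which slot of $\omega$ carries which vector field and ensure the Einstein summation indices match, so that the final expression matches the stated $\frac{\pd g}{\pd p_i}\frac{\pd f}{\pd q^i} - \frac{\pd f}{\pd p_i}\frac{\pd g}{\pd q^i}$ rather than its negative. Since Proposition 4.14 in the excerpt already fixes the convention $\{f,g\} = \omega(V_f,V_g) = V_g f$, I would anchor the sign to that identity to guarantee consistency with the rest of the paper.
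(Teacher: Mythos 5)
Your proposal is correct and is essentially the same argument as the paper's: both rely on the coordinate formula $V_h=\frac{\pd h}{\pd p_i}\frac{\pd}{\pd q^i}-\frac{\pd h}{\pd q^i}\frac{\pd}{\pd p_i}$ from Section 4.2, the only difference being that you evaluate $\{f,g\}$ via the identity $\omega(V_f,V_g)=V_gf$ while the paper expands $(dq^i\wedge dp_i)(V_f,V_g)$ directly. Both routes land on the same expression with the correct sign.
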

\begin{proof}In the above Darboux coordinates we have that $\omega=dq^i\wedge dp_i$. Hence
\begin{align*}
\{f,g\}&:=\omega(V_f,V_g)\\
&=(dq^i\wedge dp_i)\left(\frac{\pd f}{\pd p_i}\frac{\pd}{\pd q^i}-\frac{\pd f}{\pd q^i}\frac{\pd}{\pd p_i} \ , \ \frac{\pd g}{\pd p_i}\frac{\pd}{\pd q^i}-\frac{\pd g}{\pd q^i}\frac{\pd}{\pd p_i}\right)\\
&=-\frac{\pd f}{\pd p_i}\frac{\pd g}{\pd q^i}-(-\frac{\pd g}{\pd p_i}\frac{\pd f}{\pd q^i})\\
&=\frac{\pd g}{\pd p_i}\frac{\pd f}{\pd q^i}-\frac{\pd f}{\pd p_i}\frac{\pd g}{\pd q^i}
\end{align*}
\end{proof}

\newpage

\section{The Legendre Transform}

We have seen the definition of motions in a Lagrangian system $(M,L)$. These are just curves whose derivatives are solutions to the Euler-Lagrange equations. We have also seen how motions are defined in a Hamiltonian system $(X,\omega, H)$. These are just curves  that satisfy the Hamilton equations. But given a Lagrangian system $(M,L)$ we can always consider the cotangent bundle to get a symplectic manifold $(T^\ast M,\omega)$. The Legendre transform provides the link between Lagrangian mechanics in $(M,L)$ and Hamiltonian mechanics in $(T^\ast M,\omega=-d\alpha, H)$, where $H\in C^\infty (T^\ast M)$ will be defined below. Conversely, given a Hamiltonian system on a manifold which is a cotangent bundle, $(T^\ast M,\omega=-d\alpha, H)$ we can consider, under the Legendre transform, motions in an induced Lagrangian system $(M,L)$. In section 6 we will see how the Legendre transform relates the two statements of Noether's theorem.




\subsection{The Legendre Transform on a Vector Space}
Let $V$ denote an $n$-dimensional vector space with ordered basis $\{e_1,\dots, e_n\}$ and let $\{v^1,\dots,v^n\}$ denote the coordinate functions. Fix $L\in C^\infty(V)$.
\begin{definition}
The \bf{Legendre transform associated to $L$} \rm is the map \[\Phi_L:V\to V^\ast  \ \ \ \ \ \ \ p\mapsto \frac{\pd L}{\pd v}(p)\]where $\frac{\pd L}{\pd v}(p)$ is the co-vector $(\frac{\pd L}{\pd v^1}(p),\dots,\frac{\pd L}{\pd v^n}(p))\in T_p^\ast V\cong V^\ast$
\end{definition}

In other words, $\Phi_L(p)$ is the Jacobian of $L$ evaluated at $p$. 

\begin{definition}
The \bf{dual function associated to $L$ }\rm is the function $L^\ast: V^\ast\to\R$ defined by \[L^\ast: V^\ast\to\R \ \ \ \ \ \ \ \alpha\mapsto \sup\{\alpha\cdot p-L(p) \ , \ \ p\in V\}\]
\end{definition}
Notice that if $V^\ast$ has coordinates $\{\xi_1,\dots,\xi_n\}$ with respect to the dual basis $\{e^1,\dots,e^n\}$ then we can take the Legendre transform of $L^\ast$ which is just $\Phi_{L^\ast}(p)=\left.\text{Jac}(L^\ast)\right|_p=\left[\frac{\pd L^\ast}{\pd \xi_i}(p)\right]$.\\

Suppose now that $V=\R^n$. For the rest of this section, let $(x^1,\dots,x^n)$ denote the standard coordinates on $\R^n$ and let $(x^1,\dots,x^n,v^1,\dots,v^n)$ and $(x^1,\dots,x^n,\xi_1,\dots,\xi_n)$ be the induced coordinates on $T\R^n$ and $T^\ast\R^n$ respectively.. Fix a smooth function $L\in C^\infty(T\R^n)=C^\infty(\R^{2n})$. For what is to follow, let $x=(x^1,\dots,x^n)$, $v=(v^1,\dots,v^n)$,  $a=(a_1,\dots,a_n)$ and $\xi=(\xi_1,\dots,\xi_n)\in \R^n$ be arbitrary. For each $x\in\R^n$ the map $L$ gives an induced map \[L_x:\R^n\to\R \ , \ v\mapsto L(x,v)\]

Recall that the \bf{Hessian} \rm of $L_x$ is the map Hess$(L_x):\R^n\to L(\R^n,\R^n)$ defined by \[\text{Hess}(L_x):=\left[\frac{\pd^2 L_x}{\pd v^i\pd v^j}\right]\] 

\begin{remark}
For fixed $x\in\R^n$ the Legendre transform is a function from $\R^n$ to $\R^n$\[\Phi_{L_x}:\R^n\to\R^n \ \ , \ \ p\mapsto \frac{\pd L_x}{\pd v}(p)\]and so by definition of the Hessian it follows that \[\text{Hess}(L_x)=\left[\frac{\pd^2 L_x}{\pd v^iv^j}\right]=\text{Jac}(\Phi_{L_x})\]
\end{remark}
\begin{definition}
A function $L\in C^\infty (T\R^n)$ is called \bf{strongly convex} \rm if for each $p\in\R^n$ the symmetric matrix Hess$(L_x(p))$ satisfies $u^T\text{Hess}(L_x(p))u>0$ for all non-zero $u\in \R^n$ (i.e. Hess$(L_x(p))$ is a positive definite matrix).
\end{definition}





\begin{proposition}
If $L\in C^\infty (T\R^n)$ is strongly convex then $L$ is strictly convex. 
\end{proposition}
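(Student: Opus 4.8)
The plan is to prove the statement fiberwise: since both strong convexity and strict convexity are conditions on $L_x$ for fixed $x$, I would fix $x \in \R^n$ once and for all and show that $L_x : \R^n \to \R$, $v \mapsto L(x,v)$, is strictly convex in the ordinary sense. The standard device is to restrict $L_x$ to the line segment joining two distinct points, thereby reducing the multivariable claim to a one-variable calculus fact.

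Concretely, fix $v, w \in \R^n$ with $v \neq w$ and set $\gamma(t) = v + t(w - v)$ for $t \in [0,1]$, together with $\phi(t) := L_x(\gamma(t))$. The desired inequality $L(x, tw + (1-t)v) < tL(x,w) + (1-t)L(x,v)$ is precisely $\phi(t) < t\phi(1) + (1-t)\phi(0)$ for $0 < t < 1$, so it suffices to show that $\phi$ is a strictly convex function of one variable.

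The key step is the chain-rule computation of $\phi''$. Since $\gamma'(t) = w - v$ is constant, differentiating twice gives
\[
\phi''(t) = (w-v)^T \, \text{Hess}(L_x)(\gamma(t)) \, (w-v).
\]
By strong convexity $\text{Hess}(L_x)(\gamma(t))$ is positive definite, and since $w - v \neq 0$ we conclude $\phi''(t) > 0$ for every $t \in [0,1]$. This is the heart of the argument: it is the only place the positive-definiteness hypothesis is used, and care is needed to apply the chain rule correctly so that the Hessian appears as the quadratic form evaluated on $w - v$.

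Finally, I would invoke the elementary one-variable fact that $\phi'' > 0$ on $[0,1]$ forces strict convexity. I expect to prove this inline via the mean value theorem: for $0 < t < 1$ there exist $t_1 \in (0,t)$ and $t_2 \in (t,1)$ with $\phi(t) - \phi(0) = t\phi'(t_1)$ and $\phi(1) - \phi(t) = (1-t)\phi'(t_2)$, whence
\[
t\phi(1) + (1-t)\phi(0) - \phi(t) = t(1-t)\big(\phi'(t_2) - \phi'(t_1)\big) > 0,
\]
the last inequality holding because $\phi'' > 0$ makes $\phi'$ strictly increasing, so $\phi'(t_1) < \phi'(t_2)$. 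Translating back to $L_x$ and recalling that $x$ was arbitrary yields strict convexity of $L$. No serious obstacle is anticipated; the only genuine content is the chain-rule identity for $\phi''$, after which the conclusion follows from standard single-variable calculus.
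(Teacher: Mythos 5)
Your proof is correct and follows essentially the same route as the paper: restrict $L_x$ to a line, identify the second derivative of the restriction with the quadratic form $(w-v)^T\,\mathrm{Hess}(L_x)\,(w-v)$, and conclude by one-variable calculus. The only difference is that you spell out the final single-variable step (via the mean value theorem) which the paper dismisses as ``basic calculus.''
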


\begin{proof}
Let $x\in\R^n$ be arbitrary. We need to show that $L_x:\R^n\to \R$ is strictly convex. For arbitrary, $p,q\in\R^n$ with $q\not=0$ let $(L_x)_{p,q}$ denote the function \[(L_x)_{p,q}:\R\to\R \ , \ t\mapsto L_x(p+tq).\]Notice that $L_x$ is strictly convex if and only if $(L_x)_{p,q}$ is strictly convex for all $p,q\in\R^n$. But a standard calculation shows that $\left(q^T\cdot\text{Hess}(L_x)_{p,q}\cdot q\right)=(L_x)^{\prime\prime}_{p,q}$. By assumption, it follows that $(L_x)^{\prime\prime}_{p,q}(t)>0$ for all $t\in \R$ and $p,q\in\R^n$. Thus, from basic calculus, it follows that $(L_x)_{p,q}$ is strictly convex for all $p,q\in\R^n$.
\end{proof}

\begin{proposition}
 Fix $x\in\R^n$ and suppose that $L_x:\R^n\to\R$ is strongly convex. Then the following are equivalent
\begin{enumerate}
\item $L_x$ has a critical point $($i.e. there exists $v_0\in\R^n$ such that $\frac{\pd L_x}{\pd v^i}(v_0)=0$ for all $i=1,\dots,n)$.
\item $L_x$ has a local minimum 
\item $L_x$ has a unique global minimum
\end{enumerate}

\begin{proof}

$(1)\Longrightarrow (2)$ Suppose that $v_0$ is a critical point $L_x$. By hypothesis Hess$\left.(L_x)\right|_{v_0}$ is positive definite and so has only positive eigenvalues. Thus by the second derivative test (see \cite{Linear Algebra}, Theorem 6.37) $L$ has a local minimum at $p$. 

$(2)\Longrightarrow (3)$ Suppose that $v_0$ is a local minimum of $L$. Then by definition there exists a neighbourhood $U\subset\R^n$ such that $v_0\in U$ and $L(v_0)\leq L(u)$ for all $u\in U$. Suppose that $v_0$ is not a global minimum. Then there exists $w\in\R^n$ such that $L(w)<L(v_0)$. But then for arbitrary $\theta\in(0,1)$, by Proposition 5.5, we have that $L((1-\theta)v_0+\theta w)<L(v_0)-\theta L(v_0)+\theta L(w)<L(v_0)$. But this is a contradiction since we can choose $\theta$ sufficiently small so that $(1-\theta)v_0+\theta w\in U$.

$(3)\Longrightarrow (1)$ This is known from basic calculus. 
\end{proof}
\end{proposition}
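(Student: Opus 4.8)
The plan is to prove the equivalence by running the implication cycle $(1) \Rightarrow (2) \Rightarrow (3) \Rightarrow (1)$, so that each link is short and the loop closes without verifying all three pairwise equivalences separately. The one ingredient doing real work beyond elementary calculus is the convexity of $L_x$: strong convexity gives strict convexity by Proposition 5.5, and it is precisely this global property that lets me upgrade local statements to global ones.

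First I would handle the two easy links. For $(1) \Rightarrow (2)$, if $v_0$ is a critical point then strong convexity makes $\text{Hess}(L_x)|_{v_0}$ positive definite, so its eigenvalues are all positive and the second derivative test forces $v_0$ to be a local minimum. For $(3) \Rightarrow (1)$, a unique global minimum is in particular a point where the gradient vanishes, so it is a critical point; this link is immediate.

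The substantive step is $(2) \Rightarrow (3)$. Suppose $v_0$ is a local minimum, so $L_x(v_0) \le L_x(u)$ for every $u$ in some neighbourhood $U \ni v_0$. If $v_0$ failed to be a global minimum, there would exist $w$ with $L_x(w) < L_x(v_0)$; then along the segment $\theta \mapsto (1-\theta)v_0 + \theta w$ strict convexity gives $L_x((1-\theta)v_0 + \theta w) < (1-\theta)L_x(v_0) + \theta L_x(w) < L_x(v_0)$ for every $\theta \in (0,1)$, and taking $\theta$ small enough places such a point inside $U$, contradicting that $v_0$ minimizes $L_x$ on $U$. Hence $v_0$ is a global minimum. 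For the uniqueness clause I would argue separately: if some $v_1 \ne v_0$ attained the same minimal value, strict convexity applied to the midpoint of the segment joining them would produce a strictly smaller value, which is impossible.

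I expect the main obstacle to be conceptual rather than computational, namely ensuring that the local hypothesis in $(2)$ is genuinely leveraged through the global convexity from Proposition 5.5 and not merely through the pointwise positive-definiteness of the Hessian; without the global comparison one cannot rule out a distant point of smaller value. The uniqueness assertion in $(3)$ is easy to overlook but requires its own short strict-convexity argument, which I would be careful to include.
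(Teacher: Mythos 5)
Your proposal is correct and follows essentially the same route as the paper: the cycle $(1)\Rightarrow(2)\Rightarrow(3)\Rightarrow(1)$, with the second derivative test for the first link and strict convexity (via Proposition 5.5) along the segment to $w$ for the second. The one place you go beyond the paper is the explicit midpoint argument for uniqueness of the global minimum, which the paper's proof of $(2)\Rightarrow(3)$ asserts but does not actually verify; including it is a genuine (if small) improvement.
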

\begin{proposition}
Fix an arbitrary $x\in\R^n$. If $L_x\in C^\infty (T_x\R^n)=C^\infty(\R^{n})$ is strongly convex then $\Phi_{L_x}:T_x\R^n\to\Phi_L(T_x\R^n)$ is a diffeomorphism.
\end{proposition}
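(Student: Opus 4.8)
The plan is to show that $\Phi_{L_x}$ is a smooth bijection onto an open subset of $\R^n$ whose inverse is again smooth, by combining the inverse function theorem (which yields local invertibility) with a separate global injectivity argument. First I would record smoothness: since $L_x\in C^\infty(\R^n)$, each component $\frac{\partial L_x}{\partial v^i}$ is smooth, so $\Phi_{L_x}$ is smooth. By Remark 5.4 we have $\mathrm{Jac}(\Phi_{L_x})=\mathrm{Hess}(L_x)$, and strong convexity makes $\mathrm{Hess}(L_x)$ positive definite, hence invertible, at every point. The inverse function theorem then shows $\Phi_{L_x}$ is a local diffeomorphism at each point of $\R^n$; in particular it is an open map, so its image $\Phi_{L_x}(\R^n)$ is open in $\R^n$, which identifies the codomain as an open subset.

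Next I would establish that $\Phi_{L_x}$ is globally injective. This is the step that does not follow from the inverse function theorem and is the crux of the argument. I would use monotonicity of the gradient map: for distinct $p,q\in\R^n$, parametrize the segment between them by $\gamma(t)=q+t(p-q)$ and apply the fundamental theorem of calculus to $t\mapsto \Phi_{L_x}(\gamma(t))\cdot(p-q)$, obtaining
\[
\bigl(\Phi_{L_x}(p)-\Phi_{L_x}(q)\bigr)\cdot(p-q)=\int_0^1 (p-q)^T\,\mathrm{Hess}(L_x)(\gamma(t))\,(p-q)\,dt.
\]
Since $\mathrm{Hess}(L_x)$ is positive definite and $p-q\neq 0$, the integrand is strictly positive for all $t$, so the left-hand side is strictly positive. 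In particular $\Phi_{L_x}(p)-\Phi_{L_x}(q)\neq 0$, which gives injectivity.

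Finally I would combine the two pieces. The map $\Phi_{L_x}$ is a smooth bijection from $\R^n$ onto the open set $\Phi_{L_x}(\R^n)$, and at each point the inverse function theorem supplies a smooth local inverse. By uniqueness of inverses these local inverses agree on overlaps, so they patch together to a single globally defined smooth inverse of $\Phi_{L_x}$. Hence $\Phi_{L_x}:T_x\R^n\to\Phi_{L_x}(T_x\R^n)$ is a diffeomorphism, as claimed. The \emph{main obstacle} is the global injectivity step: local invertibility is automatic from positive-definiteness via the inverse function theorem, whereas global injectivity genuinely requires the monotonicity estimate above, which is precisely where the positivity of the Hessian along the entire segment is used.
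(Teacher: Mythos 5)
Your proposal is correct and follows essentially the same route as the paper: the inverse function theorem (via positive definiteness of $\mathrm{Hess}(L_x)=\mathrm{Jac}(\Phi_{L_x})$) gives local invertibility, and global injectivity comes from the same monotonicity identity $w^T(\Phi_{L_x}(q)-\Phi_{L_x}(p))=\int_0^1 w^T\,\mathrm{Hess}(L_x)(p+tw)\,w\,dt>0$. Your identification of global injectivity as the genuine crux, not supplied by the inverse function theorem alone, matches the structure of the paper's argument exactly.
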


\begin{proof}
 By definition, for arbitrary $v_0\in T_x\R^n\cong\R^{n}$ we have $\Phi_{L_x}(v_0)=\frac{\pd L_x}{\pd v}(v_0)\in T_x^\ast \R^n=\R^{n}$. By assumption, the Jacobian of $\Phi_{L_x}$ is positive definite. That is, Hess$\left.(L_x)\right|_{v_0}$ is positive definite so in particular  \[\det\left[\frac{\pd}{\pd v^j}\Phi_{L_x}(v_0)\right]=\det\left[\frac{\pd}{\pd v^j}\frac{\pd}{\pd v^i}L_x(v_0)\right]>0\]Thus, by the inverse function theorem, $\Phi_{L_x}$ is a local diffeomorphism. Since a bijective local diffeomorphism is a diffeomorphism, it suffices to show that $\Phi_{L_x}:T_x\R^n=\R^n\to\Phi_{L_x}(\R^n)$ is injective. So suppose that $p,q\in\R^n=T_x\R^n$ are such that $p\not=q$. Let $w=q-p$ so that $w\not=0$. Since $L_x$ is smooth, we have that $L_x$ is smooth on the line segment $\{p+tw \ ; \ 0\leq t\leq 1\}$. By the chain rule \[\dt \Phi_{L_x}(p+tw)=\left(\text{Jac}(\Phi_{L_x})(p+tw)\right)w=\text{Hess}(L_x(p+tw))w\]We also have that $\Phi_{L_x}(q)-\Phi_{L_x}(p)=\int_0^1\dt \Phi_{L_x}(p+tw)dt$. Putting this together yields
\begin{align*}
w^T\left(\Phi_{L_x}(q)-\Phi_{L_x}(p)\right)&=w^T\left(\int_0^1\dt \Phi_{L_x}(p+tw)dt\right)\\
&=w^T\left(\int_0^1\text{Hess}(L_x(p+tw))wdt\right)\\
&=\int_0^1w^T\text{Hess}(L_x(p+tw))w \ dt
\end{align*}However, Hess$(L_x)$ is positive definite on $\R^n$ and so this last expression is positive. Thus it can't be that $\Phi_{L_x}(p)=\Phi_{L_x}(q)$. Hence $\Phi_{L_x}=\frac{\pd}{\pd v}L_x$ is injective.
\end{proof}

\begin{proposition}
If $L_x$ is strongly convex then for all $\xi\in \Phi_{L_x}(\R^n)$ we have \[L_x^\ast(\xi)=\xi\cdot \Phi_{L_x}^{-1}(\xi)-L_x(\Phi_{L_x}^{-1}(\xi))\]
\end{proposition}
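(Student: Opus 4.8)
The plan is to show that the supremum defining $L_x^\ast(\xi)$ is in fact attained, and attained precisely at the point $p_0 := \Phi_{L_x}^{-1}(\xi)$. Since $\xi \in \Phi_{L_x}(\R^n)$ and Proposition 5.7 guarantees that $\Phi_{L_x}$ is a diffeomorphism onto its image, the point $p_0$ exists and is the unique solution of $\Phi_{L_x}(p) = \xi$. Substituting $p = p_0$ into $\xi \cdot p - L_x(p)$ already produces the right-hand side of the claimed identity, so the whole statement reduces to verifying that $p_0$ maximizes the function $g_\xi(p) := \xi \cdot p - L_x(p)$ over $\R^n$.

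First I would compute the gradient $\frac{\pd g_\xi}{\pd v}(p) = \xi - \frac{\pd L_x}{\pd v}(p) = \xi - \Phi_{L_x}(p)$, which vanishes exactly when $\Phi_{L_x}(p) = \xi$, that is, only at $p = p_0$. Thus $p_0$ is the unique critical point of $g_\xi$. To upgrade ``critical point'' to ``global maximum,'' I would pass to the negative and apply the work already done. Consider $-g_\xi(p) = L_x(p) - \xi \cdot p$. Because the linear term $\xi \cdot p$ contributes nothing to the second derivatives, we have $\text{Hess}(-g_\xi) = \text{Hess}(L_x)$, which is positive definite by strong convexity; hence $-g_\xi$ is itself strongly convex, with the same unique critical point $p_0$.

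Proposition 5.6, applied to $-g_\xi$ in place of $L_x$, then shows that the critical point $p_0$ is a local minimum of $-g_\xi$ and therefore its unique global minimum, equivalently the unique global maximum of $g_\xi$. In particular the supremum in the definition of $L_x^\ast(\xi)$ is genuinely attained, at $p_0$, which gives
\[
L_x^\ast(\xi) = g_\xi(p_0) = \xi \cdot \Phi_{L_x}^{-1}(\xi) - L_x(\Phi_{L_x}^{-1}(\xi)),
\]
as required.

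The one subtle point, and the step I expect to carry the real content, is exactly this passage from a stationary point to a bona fide global maximum: a priori the supremum of $g_\xi$ over all of $\R^n$ need not be attained at all. Strong convexity of $L_x$ (which yields strict convexity via Proposition 5.5) is what rules out any escape to infinity by forcing $p_0$ to dominate every competitor, and reducing to Proposition 5.6 lets me reuse the second-derivative-plus-convexity argument established there rather than reproving it from scratch.
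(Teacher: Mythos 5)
Your proposal is correct and follows essentially the same route as the paper: both introduce the auxiliary function $g(p)=\xi\cdot p-L_x(p)$, use Proposition 5.7 to locate its unique critical point at $\Phi_{L_x}^{-1}(\xi)$, and invoke Proposition 5.6 to promote that critical point to a global maximum. The only cosmetic difference is that you apply Proposition 5.6 to $-g$ (positive definite Hessian, global minimum) rather than directly to $g$ with its negative definite Hessian, which is arguably a slightly cleaner match with how Proposition 5.6 is stated.
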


\begin{proof}
Fix $\xi\in\Phi_{L_x}(\R^n)\subset T_x^\ast\R^n=\R^n$. Consider the function $g:\R^n\to\R$ defined by $g(v)=\xi\cdot v-L_x(v)$.  It's clear that $g$ is smooth.  We have that \[\frac{\pd}{\pd v}g(v)=\xi-\frac{\pd}{\pd v}L_x(v)=\xi-\Phi_{L_x}(v)\] and so \[\left.\text{Hess}(g)\right|_v=-\text{Hess}\left.(L_x)\right|_v\] Hence, by hypothesis, it follows Hess$(g(v))$ is negative definite. Also, by Proposition 5.7 there exists a unique $q\in \R^n$ such that $\Phi_{L_x}(q)=\frac{\pd L_x}{\pd v}(q)=\xi$. That is, $\frac{\pd }{\pd v}g(q)=0$. But since Hess$(g)$ is negative definite,  by Proposition 5.6, it follows $q$ is a global maximum for $g$. Hence $g(a)\leq g(q)$ for all $a\in \R^n$. But then 
\begin{align*}L_x^\ast(\xi)&=\sup\{\xi a-f(a)\ , \ a\in \R^n\}\\&=\sup\{g(a) \ , \ a\in \R^n\}\\
&=g(q)\\
&=\xi(\Phi_{L_x}^{-1}(\xi))-L_x(\Phi_{L_x}^{-1}(\xi))\end{align*}
\end{proof}

\del
{
One of the most important properties of the Legendre transform is that it takes any set of coordinates on the tangent bundle $T\R^n$ to a new set of coordinates on the cotangent bundle $T^\ast \R^n$. 
\begin{lemma}
If $(x^1,\dots, x^n)$ are standard coordinates on $\R^n$ and $(x^1,\dots, x^n,v^1,\dots,v^n)$ are the induced coordinates on $T\R^n$, then setting $\xi_i=(\Phi_{L_x}(v))_i=\frac{\pd L_x}{\pd v^i}$ we have that $(x^1,\dots,x^n,\xi_1,\dots,\xi_n)$ is a local coordinate system on $T^\ast \R^n$. That is, the Legendre transform gives an induced coordinate system on $T^\ast \R^n$.

\begin{proof}
This result follows from the inverse function theorem. By definition, \[\left[\frac{\pd}{\pd v^i}\left(\frac{\pd L_x}{\pd v^j}\right)\right]=\text{Hess}(L_x(v))\] is positive definite and so has positive determinant. Hence $\frac{\pd L_x}{\pd v^i}=(\Phi_{L_x}(v))_i$ forms a local coordinate chart.
\end{proof}
\end{lemma}

\begin{remark}
Notice that this argument also works in the reverse direction. That is, if $H\in C^\infty(T^\ast\R^n)$ is strongly convex and  $x^1,\dots,x^n$ are local coordinates on $\R^n$ and $x^1,\dots,x^n,\xi_1,\dots,\xi_n$ are the induced coordinates on $T^\ast \R^n$, then we get an induced coordinate system $x^1,\dots,x^n,v^1,\dots,v^n$ on $T\R^n$ by setting $v^i=\frac{\pd H}{\pd\xi_i}$.
\end{remark}

}

\begin{theorem}
If $L_x$ is strongly convex then $\Phi_{L_{x}^\ast}=\Phi_{L_x}^{-1}$. 
\end{theorem}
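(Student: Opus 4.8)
The plan is to compute $\Phi_{L_x^\ast}$ directly from the closed-form expression for $L_x^\ast$ furnished by Proposition 5.8, and to observe that all chain-rule contributions cancel except the one producing $\Phi_{L_x}^{-1}$. Since $L_x$ is strongly convex, Proposition 5.7 guarantees that $\Phi_{L_x}$ is a diffeomorphism onto its image; in particular $\Phi_{L_x}^{-1}$ is smooth, and because $\Phi_{L_x}$ is a local diffeomorphism at every point (its Jacobian $\text{Hess}(L_x)$ being positive definite, hence invertible) the image $\Phi_{L_x}(\R^n)$ is open. Thus $L_x^\ast$ is smooth on an open set and the partial derivatives defining $\Phi_{L_x^\ast}$ genuinely exist, so the computation below is legitimate.

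For the main step I would set $v(\xi):=\Phi_{L_x}^{-1}(\xi)$, so that by the very definition of $\Phi_{L_x}$ one has $\frac{\pd L_x}{\pd v^j}(v(\xi))=\xi_j$ for each $j$. Proposition 5.8 then reads
\[
L_x^\ast(\xi)=\xi_j\, v^j(\xi)-L_x(v(\xi)),
\]
and differentiating with respect to $\xi_i$ by the product and chain rules gives
\[
\frac{\pd L_x^\ast}{\pd \xi_i}
=v^i(\xi)+\xi_j\frac{\pd v^j}{\pd \xi_i}
-\frac{\pd L_x}{\pd v^j}(v(\xi))\,\frac{\pd v^j}{\pd \xi_i}.
\]
The first term is $v^i$ (from differentiating the explicit $\xi_j$), and the last two terms cancel exactly because $\frac{\pd L_x}{\pd v^j}(v(\xi))=\xi_j$. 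Hence $\frac{\pd L_x^\ast}{\pd \xi_i}(\xi)=v^i(\xi)=\big(\Phi_{L_x}^{-1}(\xi)\big)_i$, which is precisely $\Phi_{L_x^\ast}(\xi)=\Phi_{L_x}^{-1}(\xi)$.

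The algebra is a single cancellation, so the one point I would be careful about is its justification rather than its execution: one must know that $v=\Phi_{L_x}^{-1}$ is differentiable (to apply the chain rule) and that $\Phi_{L_x}(\R^n)$ is open (so that $L_x^\ast$ can be differentiated in $\xi$ there). Both are consequences of Proposition 5.7 and the inverse function theorem. What really drives the result is the tautology $\frac{\pd L_x}{\pd v^j}(v(\xi))=\xi_j$ — the defining property of the inverse Legendre map — and I would stress that no further convexity estimate is needed once smoothness of the inverse is in place; strong convexity enters only through Proposition 5.7 to make $\Phi_{L_x}^{-1}$ available in the first place.
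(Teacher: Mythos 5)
Your proposal is correct and follows essentially the same route as the paper: both differentiate the formula $L_x^\ast(\xi)=\xi\cdot\Phi_{L_x}^{-1}(\xi)-L_x(\Phi_{L_x}^{-1}(\xi))$ from Proposition 5.8 and use the identity $\frac{\pd L_x}{\pd v}(\Phi_{L_x}^{-1}(\xi))=\xi$ to cancel the chain-rule terms. Your added remarks on why $\Phi_{L_x}^{-1}$ is differentiable and why the image is open are justifications the paper leaves implicit, but they do not change the argument.
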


\begin{proof}
By definition, the Legendre transform of $L^\ast$ is\[\Phi_{L^\ast_x}: T_x^\ast\R^n\to T_x\R^n \ , \ \xi\mapsto \frac{\pd L_x^\ast}{\pd \xi}\]

Proposition 5.8 showed that for all $\xi\in \Phi_{L_x}(U)$ \[ L_x^\ast(\xi)=\xi\cdot\Phi_{L_x}^{-1}(\xi)-L_x(\Phi^{-1}_{L_x})(\xi)\]

Thus 
\begin{align*}
\Phi_{L^\ast_x}(\xi)&=\frac{\pd L_x^\ast}{\pd \xi}(\xi)\\
&=\Phi_{L_x}^{-1}(\xi)+\xi\left(\frac{\pd}{\pd \xi}\Phi_{L_{x}}^{-1}(\xi)\right)-\left(\frac{\pd L_x}{\pd v}(\Phi^{-1}_{L_x}(\xi))\cdot \left(\frac{\pd \Phi_{L_x}{-1}}{\pd \xi}(\xi)\right)\right)&\text{by the chain rule}\\
&=\Phi_{L_x}^{-1}(\xi)+\xi\left(\frac{\pd}{\pd \xi}\Phi_{L_{x}}^{-1}(\xi)\right)-\left(\Phi_{L_x}(\Phi_{L_x}^{-1}(\xi))\cdot\left(\frac{\pd \Phi_{L_x}^{-1}}{\pd \xi}(\xi)\right)\right)&\text{by definition}\\
&=\Phi_{L_x}^{-1}(\xi)+\xi\left(\frac{\pd}{\pd \xi}\Phi_{L_{x}}^{-1}(\xi)\right)-\xi\cdot\left(\frac{\pd \Phi_{L_x}^{-1}}{\pd \xi}(\xi)\right)\\
&=\Phi_{L_x}^{-1}(\xi)
\end{align*}
\end{proof}

\begin{remark}Let $L\in C^\infty(\R^n)$ be a strongly convex function. We say that $L$ has \bf{quadratic growth at infinity} \rm if there exists a positive definite quadratic form $Q$ on $\R^n$ and a constant $K$ such that $L(p)\geq Q(p)-K$ for all $p\in \R^n$. If $L$ has quadratic growth at infinity, then $\Phi_L(\R^n)=(\R^n)^\ast$. That is, if $L$ is strongly convex and has quadratic growth at infinity, then $\Phi_L:T\R^n\to T^\ast\R^n$ is a diffeomorphism. This is exercise 54 in \cite{Da Silva}. We do not give a proof of this result as we do not need it for what is to follow.
\end{remark}
Using the above theorems we can show that the dual function of a strongly convex Lagrangian is strongly convex. 

\begin{proposition}
If $L\in C^\infty(T\R^n)$ is strongly convex, then $H=L^\ast\in C^\infty(T^\ast\R^n)$ is strongly convex.
\end{proposition}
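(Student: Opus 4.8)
The plan is to fix $x \in \R^n$ and show that the fibre function $H_x = L_x^\ast$ is strongly convex in the sense of Definition 5.4, i.e. that $\text{Hess}(H_x)$ is positive definite at every point of its domain. Since $x$ is arbitrary this yields that $H = L^\ast$ is strongly convex. The guiding idea is to relate the Hessian of $L_x^\ast$ to the Hessian of $L_x$ through the Legendre transform and then invoke the elementary fact that the inverse of a symmetric positive definite matrix is again symmetric positive definite.

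First I would record that $L_x^\ast$ is smooth where it is defined, so that its Hessian makes sense. By Proposition 5.7 the map $\Phi_{L_x}$ is a diffeomorphism onto the open set $\Phi_{L_x}(\R^n) \subset T_x^\ast \R^n$, so its inverse $\Phi_{L_x}^{-1}$ is smooth there; combining this with Proposition 5.8, which gives $L_x^\ast(\xi) = \xi \cdot \Phi_{L_x}^{-1}(\xi) - L_x(\Phi_{L_x}^{-1}(\xi))$, shows that $L_x^\ast$ is smooth on $\Phi_{L_x}(\R^n)$.

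Next, applying Remark 5.3 to $L_x^\ast$ gives $\text{Hess}(H_x) = \text{Jac}(\Phi_{H_x}) = \text{Jac}(\Phi_{L_x^\ast})$, while Theorem 5.10 identifies $\Phi_{L_x^\ast} = \Phi_{L_x}^{-1}$. Thus, for $\xi \in \Phi_{L_x}(\R^n)$ and $v = \Phi_{L_x}^{-1}(\xi)$, the inverse function theorem yields
\[
\text{Hess}(H_x)\big|_\xi = \text{Jac}(\Phi_{L_x}^{-1})\big|_\xi = \left( \text{Jac}(\Phi_{L_x})\big|_v \right)^{-1} = \left( \text{Hess}(L_x)\big|_v \right)^{-1},
\]
where the final equality is Remark 5.3 applied to $L_x$. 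Since $L$ is strongly convex, $\text{Hess}(L_x)|_v$ is symmetric positive definite, hence so is its inverse, and therefore $\text{Hess}(H_x)|_\xi$ is positive definite for every $\xi$. This establishes that $H_x$ is strongly convex, as desired.

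I do not expect a genuine obstacle here: the only point requiring care is bookkeeping around the domain, namely checking that $L^\ast$ is smooth on the image of $\Phi_{L_x}$ so that the Hessian is defined, and verifying that the inverse function theorem is legitimately applicable. Both are guaranteed by Proposition 5.7, so the argument reduces to chaining together Remark 5.3, Theorem 5.10, and the stability of positive definiteness under matrix inversion.
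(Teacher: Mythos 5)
Your proposal is correct and follows essentially the same route as the paper: both identify $\mathrm{Hess}(H_x)$ at $\xi$ with the inverse of $\mathrm{Hess}(L_x)$ at $\Phi_{H_x}(\xi)$ --- the paper by differentiating the identity $\Phi_{L_x}(\Phi_{H_x}(\xi))=\xi$ with the chain rule, you by quoting the inverse function theorem, which is the same computation --- and then conclude via the stability of positive definiteness under matrix inversion. (The result $\Phi_{L_x^\ast}=\Phi_{L_x}^{-1}$ you cite is Theorem 5.9 in the paper's numbering, not 5.10.)
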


\begin{proof}Suppose that $L\in C^\infty(\R^n)$ is strongly convex. We need to show that $H=L^\ast\in C^\infty((\R^n)^\ast)$ is strongly convex. That is, for fixed $x\in\R^n$, we need to show that for all $\xi\in\R^n$ we have Hess$(H_x(\xi))$ is positive definite. We showed in the proof of Theorem 5.9 that $\frac{\pd L_x^\ast}{\pd\xi}(\xi)=\Phi_{L_x}^{-1}(\xi)$. We also have that \[\Phi_{L_x}(\Phi_{L_x}^{-1}(\xi))=\Phi_{L_x}(\Phi_{H_x}(\xi))=\xi.\] Differentiating this equality with respect to $\xi$, the chain rule gives that \[\left(\frac{\pd \Phi_{L_x}}{\pd v}(\Phi_{H_x}(\xi))\right)\left(\frac{\pd}{\pd\xi}\Phi_{H_x}(\xi)\right)=1.\]Since $\Phi_{L_x}=\frac{\pd L_x}{\pd v}$ and $\Phi_{H_x}=\frac{\pd L_x^\ast}{\pd\xi}=\frac{\pd H_x}{\pd\xi}$ this equation is the same as\[\left(\frac{\pd^2L_x}{(\pd v)^2}(\Phi_{H_x}(\xi))\right)\left(\frac{\pd^2H_x}{(\pd\xi)^2}(\xi)\right)=1.\]That is we have shown that \[\text{Hess}(H_x(\xi))=\text{Hess}(L_x\left(\Phi_{H_x}(\xi)\right))^{-1}.\]Since the inverse of a positive definite matrix is positive definite, the result follows.
\end{proof}

\begin{corollary}\bf{(Involutivity of the Legendre Transform and Dual Function )}\rm \\
Let $L\in C^\infty(\R^n)$ be strongly convex. If $H=L^\ast$, then $H^\ast=L$. In particular, this means that $\Phi_{H^\ast}=\Phi_{L^{\ast\ast}}=\Phi_L$.
\end{corollary}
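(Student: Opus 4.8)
The plan is to reduce the claim to the pointwise statement $H_x^\ast = L_x$ for each fixed $x\in\R^n$, and then to chain together the three results already established for strongly convex functions. The crucial enabling observation is Proposition 5.11: since $L$ is strongly convex, so is $H=L^\ast$, and therefore Proposition 5.7, Proposition 5.8 and Theorem 5.9 all apply to $H$ exactly as they do to $L$.

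First I would fix $x$ and apply Proposition 5.8 to the strongly convex function $H_x$, obtaining, for $v$ in $\Phi_{H_x}(\R^n)$,
\[
H_x^\ast(v) = v\cdot \Phi_{H_x}^{-1}(v) - H_x\left(\Phi_{H_x}^{-1}(v)\right).
\]
Next I would invoke Theorem 5.9 applied to $L_x$, which gives $\Phi_{H_x}=\Phi_{L_x^\ast}=\Phi_{L_x}^{-1}$ and hence $\Phi_{H_x}^{-1}=\Phi_{L_x}$. Substituting this and recalling $H_x=L_x^\ast$ turns the right-hand side into $v\cdot\Phi_{L_x}(v) - L_x^\ast(\Phi_{L_x}(v))$.

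The final step is to evaluate $L_x^\ast(\Phi_{L_x}(v))$ using Proposition 5.8 once more, this time for $L_x$ itself: since $\Phi_{L_x}^{-1}(\Phi_{L_x}(v))=v$, the formula yields $L_x^\ast(\Phi_{L_x}(v)) = \Phi_{L_x}(v)\cdot v - L_x(v)$. Plugging this in, the two copies of $v\cdot\Phi_{L_x}(v)$ cancel and we are left with $H_x^\ast(v)=L_x(v)$. As $x$ was arbitrary this gives $H^\ast=L$; the ``in particular'' clause then follows immediately, since $H=L^\ast$ forces $H^\ast=L^{\ast\ast}$, and equal functions have equal Legendre transforms, so $\Phi_{L^{\ast\ast}}=\Phi_{H^\ast}=\Phi_L$.

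I expect the only genuine obstacle to be the bookkeeping of domains: Proposition 5.8 only describes $L_x^\ast$ on the image $\Phi_{L_x}(\R^n)$, so I would need to check that each point at which I apply an instance of the formula actually lies in the relevant image, equivalently that $\Phi_{H_x}^{-1}=\Phi_{L_x}$ is defined on all of $\Phi_{H_x}(\R^n)$ and exhausts $\R^n$. This is precisely what the diffeomorphism statement of Proposition 5.7, combined with Theorem 5.9, guarantees. Once the domains are matched up, the computation itself is just the cancellation described above.
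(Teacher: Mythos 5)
Your proposal is correct and follows essentially the same route as the paper's proof: both apply Proposition 5.8 twice (once to $H_x$, once to $L_x$), use Proposition 5.11 to justify that $H$ is strongly convex, and use Theorem 5.9 in the form $\Phi_{H_x}^{-1}=\Phi_{L_x}$ to make the two dual-pairing terms cancel. Your extra attention to the domain bookkeeping is a point the paper leaves implicit, but the underlying argument is the same.
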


\begin{proof}
Let $L\in C^\infty(\R^n)$ be strongly convex.  From Proposition 5.11 we have that $H=L^\ast$ is strongly convex. Hence
\begin{align*}
H^\ast(x,v)&=v\cdot\Phi_{H_x}^{-1}(v)-H_x(\Phi_{H_x}^{-1}(v))&\text{by Proposition 5.8 and 5.11}\\
&=v\cdot \Phi_{H_x}^{-1}(v)-\left(\left(\Phi_{H_x}^{-1}(v)\right)\cdot\left(\Phi_{L_x}^{-1}(\Phi_{H_x}^{-1}(v))\right)-L_x\left(\Phi_{L_x}^{-1}(\Phi_{H_x}^{-1}(v))\right)\right)&\text{by Proposition 5.8}\\
&=v\cdot \Phi_{H_x}^{-1}(v)-\left(\left(\Phi_{H_x}^{-1}(v)\right)\cdot v-L_x(v)\right)&\text{since $\Phi_{L_x}^{-1}=\Phi_{H_x}$}\\
&=L(x,v)
\end{align*}
\end{proof}

\subsection{The Legendre Transform on Manifolds}

The Legendre transform can be extended naturally to act on manifolds since, at every point, the tangent and cotangent spaces are vector spaces.
\vspace{0.2cm} 

Let $M$ be a $n$-dimensional manifold and $(U,x^1,\dots,x^n)$ an arbitrary coordinate chart. By definition, $U$ is diffeomorphic to an open subset of $\R^n$. We have the induced coordinate charts $(TU,x^1,\dots,x^n,v^1,\dots,v^n)$ and $(T^\ast U,x^1,\dots,x^n,\xi_1,\dots,\xi_n)$ on $TM$ and $T^\ast M$ respectively, and we know that $TU\cong U\times\R^n\cong T^\ast U$. Suppose that $L\in C^\infty(TU)$ is strongly convex. We define the Legendre transform associated to $L$ to be the map \[\Phi_L:TU\to \Phi_L(TU) \ \ \ \ \ \ \ (x,v)\mapsto \frac{\pd L}{\pd v}(x,v).\] For a fixed $x\in M$, we have that $L_x\in C^\infty(T_xU)$ is strongly convex with respect to $v^1,\dots,v^n$. The Legendre transform induces the map \[\Phi_{L_x}:T_xU\to T_x^\ast U \ \ \ \ \ \ \ W_x\mapsto \Phi_{L_x}(W_x)=\frac{\pd L_x}{\pd v}(W_x)\]That is, $\frac{\pd L_x}{\pd v}(W_x)$ is the $n$-tuple $\left(\frac{\pd L_x}{\pd v^1}(W_x),\cdots, \frac{\pd L_x}{\pd v^n}(W_x)\right)$. For each $x\in M$ the dual function associated to $L$ is again defined to be the map \[L_x^\ast:T_x^\ast U\to\R \ \ \ \ \ \ \ \xi_x\mapsto \sup\{\xi_x\cdot W_x-L_x(W_x) \ ; \ W_x\in T_x U\}\]All of the results from the previous section still hold. That is, for each $x\in M$, we have 

\begin{itemize}
\item $L_x$ has a critical point $\iff L_x$ has a local minimum $\iff L_x$ has a unique global minimum.
\item $\Phi_{L_x}:T_xU\to \Phi_{L_x}\left(T_xU\right)$ is a diffeomorphism.
\item For all $\xi_x\in\Phi_{L_x}\left(T_xU\right)$ we have $L_x^\ast(\xi_x)=\xi_x\cdot\Phi_{L_x}^{-1}(\xi_x)-L_x(\Phi_{L_x}^{-1}(\xi_x)).$
\item $\Phi_{L^\ast_x}=\Phi_{L_x}^{-1}.$
\item If $L\in C^\infty(TU)$ is strongly convex then $H=L^\ast\in C^\infty(\Phi_L(TU))$ is strongly convex.
\item If $L\in C^\infty(TU)$ is strongly convex and $H=L^\ast\in C^\infty(\Phi_L(TU))$ then $H^\ast=L$. That is, the dual function and Legendre transform are involutive.
\del
{
\item The Legendre transform provides induced local coordinates on $T^\ast M$ given by $\xi_i=\frac{\pd L}{\pd v^i}$.
}
\end{itemize}






\subsection{The Legendre Transform Relates Lagrangian and Hamiltonian Mechanics}

Let $(M,L)$ be an arbitrary Lagrangian system, where $L\in C^\infty(TM)$ is strongly convex. In the previous section we defined $L^\ast\in C^\infty(T^\ast M)$, the dual function of $L$. Moreover, we know that $T^\ast M$ is a symplectic manifold when equipped with the canonical $2$-form $\omega$. Hence, we see that the Hamiltonian system $(T^\ast M,\omega, H=L^\ast)$ arises naturally from the Lagrangian system $(M,L)$. Similarly, given a Hamiltonian system of the form $(T^\ast M,\omega,H)$ for some strongly convex $H\in C^\infty(T^\ast M)$, we can define the dual function $H^\ast\in C^\infty(TM)$. This gives the Lagrangian system $(M,L=H^\ast)$. Since the Legendre transform and the dual function are involutive, we see that these induced systems are well defined and `inverse' to each other. This motivates the following definition.

\begin{definition}
Given a Lagrangian system $(M,L)$ the \bf{induced Hamiltonian system} \rm is the triple $(T^\ast M,\omega, H:= L^\ast)$ where $\omega$ is the canonical $2$-form and $T^\ast M$. Similarly, given a Hamiltonian system of the form $(T^\ast M,\omega, H)$, the \bf{induced Lagrangian system} \rm is the pair $(M,L:=H^\ast)$.
\end{definition} 
\vspace{0.1cm}

\begin{remark}
Let $(U,x^1,\dots,x^n)$ be a coordinate chart in a manifold $M$. We have the induced coordinate charts $(TU,x^1,\dots,x^n,v^1,\dots,v^n)$ and $(T^\ast U,x^1,\dots,x^n,\xi_1,\dots,\xi_n)$ on $TM$ and $T^\ast M$ respectively. Proposition 5.7 showed that $\Phi_{L}:TU\to \Phi_L(TU)$ is a diffeomorphism, while Theorem 5.9 showed that $\Phi_L^{-1}=\Phi_{L^\ast}$. Hence, in the induced Hamiltonian system $(T^\ast M,\omega, H=L^\ast)$ we have the coordinate chart $(\Phi_L(TU),x^1,\dots,x^n,\xi_1,\dots,\xi_n)$ where each $\xi_i$ satisfies $\xi_i=\frac{\pd L}{\pd v^i}$. We also have that $\Phi_L^{-1}=\Phi_H$. Similarly, if we are given a Hamiltonian system of the form $(T^\ast M,\omega, H)$, then the Legendre transform $\Phi_H:T^\ast U\to \Phi_H(T^\ast U)$ gives an induced coordinate chart $(\Phi_H(T^\ast U),x^1,\dots,x^n,v^1,\dots,v^n)$ where each $v^i=\frac{\pd H}{\pd\xi_i}$. Also, we have that $\Phi_H^{-1}=\Phi_L$.
\end{remark}
Given a natural Lagrangian system, the Hamiltonian function in the induced Hamiltonian system is always the total energy.
\begin{proposition}
Let $(M,g)$ be a Riemannian manifold with Lagrangian $L=K-U\in C^\infty(T^\ast M)$. Then $H:=L^\ast=E=K+U$.
\end{proposition}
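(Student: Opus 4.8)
The plan is to apply Proposition 5.8, which computes the dual function once strong convexity is in hand, and then to exploit the fact that the kinetic energy is homogeneous of degree two in the velocity variables.

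First I would verify that the natural Lagrangian $L = K - U$ is strongly convex, so that Proposition 5.8 applies fibrewise. Fixing $x \in M$ and working in the induced coordinates $(x^1,\dots,x^n,v^1,\dots,v^n)$, one computes
\[
\frac{\pd^2 L_x}{\pd v^i \pd v^j} = m g_{ij}(x),
\]
since $U$ is independent of the velocity. Because $g$ is a Riemannian metric the matrix $[g_{ij}(x)]$ is positive definite, and $m > 0$, so $\text{Hess}(L_x) = [m g_{ij}(x)]$ is positive definite; hence $L_x$ is strongly convex for every $x$. In particular the Legendre transform $\Phi_{L_x}(v) = \frac{\pd L_x}{\pd v} = m g(v,\cdot)$ is linear and a diffeomorphism onto $T_x^\ast M$, so $\Phi_{L_x}(T_x M) = T_x^\ast M$ and the formula of Proposition 5.8 is valid for every $\xi_x \in T_x^\ast M$.

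Next I would compute the dual function directly. Given $\xi_x \in T_x^\ast M$, write $v = \Phi_{L_x}^{-1}(\xi_x)$, so that $\xi_i = \frac{\pd L}{\pd v^i}(x,v) = m g_{ij}(x)\,v^j$. By Proposition 5.8,
\[
H(x,\xi_x) = L_x^\ast(\xi_x) = \xi_x \cdot v - L_x(v) = \frac{\pd L}{\pd v^i}(x,v)\, v^i - \big(K(x,v) - U(x)\big).
\]
The key observation is that $K$ is homogeneous of degree two in $v$, so by Euler's relation $\frac{\pd K}{\pd v^i} v^i = 2K$; equivalently one checks directly that $\frac{\pd L}{\pd v^i} v^i = m g_{ij} v^j v^i = 2K$, using that $U$ contributes nothing in the velocity variables. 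Substituting gives
\[
H = 2K - (K - U) = K + U = E,
\]
which is the claimed identity.

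The computation is essentially routine once strong convexity is secured; there is no real obstacle beyond correctly invoking Proposition 5.8 (which needs the Legendre transform to be invertible on the whole fibre, guaranteed here by the positive-definiteness of $g$) and recognising the degree-two homogeneity of $K$ that forces $\frac{\pd L}{\pd v^i}v^i = 2K$. The one point worth stating carefully is that all of this happens fibrewise, for each fixed $x$, with the potential $U(x)$ simply passing through the transform as a constant in $v$.
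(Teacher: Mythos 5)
Your proposal is correct and follows essentially the same route as the paper: invoke Proposition 5.8, compute $\xi_i = m g_{ij} v^j$, and observe that $\xi_i v^i = m g_{ij}v^iv^j = 2K$ so that $L^\ast = 2K - (K-U) = K+U$. The only difference is cosmetic — you package the key identity as Euler's relation for the degree-two homogeneity of $K$ and explicitly check strong convexity and surjectivity of $\Phi_{L_x}$ before applying Proposition 5.8, hypotheses the paper leaves implicit.
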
 

\begin{proof}
By definition, $L=\frac{1}{2}mg_{ij}v^iv^j-U$. For $(x,v)\in TM$, let $(x,\xi)=\Phi_L(x,v)\in\Gamma(T^\ast M)$. That is, 
\begin{align*}
\xi_k&=\frac{\pd L_x}{\pd v^k}\\
&=\frac{\pd}{\pd v^k}\left(\frac{1}{2}mg_{ij}v^iv^j\right)&\text{since $U$ is independent of $v^1,\dots,v^n$}\\
&=\frac{1}{2}mv^jg_{ij}\delta_k^i+\frac{1}{2}mv^ig_{ij}\delta^j_k&\text{by the product rule}\\
&=mg_{ik}v^i
\end{align*}

Letting $g^{ij}=[g^{-1}]_{ij}$ it follows that \[v^i=\frac{1}{m}g^{ik}\xi_k.\]
Since $\Phi_L^{-1}(x,\xi)=\Phi_L^{-1}(\Phi_L(x,v))=(x,v)$, by Proposition 5.8 it follows that
\begin{align*}
H(x,\xi)&=L^\ast(x,\xi)\\
&=\xi_iv^i-L(x,v)\\
&=mg_{ij}v^jv^i-\frac{1}{2}mg_{ij}v^iv^j+U(x)\\
&=\frac{1}{2}mg_{ij}v^iv^j+U(x)\\
&=K+U
\end{align*}

\end{proof}
\begin{remark}
Notice that if the metric is the standard one and the manifold is $\R^n$, then in $TM=\R^{2n}$ motions are described by their position and velocity coordinates. However, in the induced Hamiltonian system we have $\xi_i=mv^i$ and so motions in here are described at each time by specifying position and momentum coordinates.
\end{remark}
A simple calculation, which we show now, demonstrates that if the natural Lagrangian is time independent then the total energy is conserved. In section 6 we will see that time independence of the natural Lagrangian can be thought of as a `symmetry' and so conservation of energy can also be seen as a consequence of Noether's theorem. 
\begin{proposition}
In a natural Lagrangian system, if the Lagrangian is independent of time then energy is conserved. That is, if $L(x^1,\dots,x^n,v^1,\dots,v^n,t)$ is such that $\frac{\pd L}{\pd t}=0$ then $\dt H=0$.
\end{proposition}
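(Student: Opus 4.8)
The plan is to realize the Hamiltonian $H$ as an energy function along a motion and differentiate it directly, using the Euler--Lagrange equations to kill the resulting terms. Let $\gamma$ be a motion of $(M,L)$, so that its lift $\widetilde\gamma(t)=(\gamma(t),\gamma'(t))$ satisfies the Euler--Lagrange equations. Setting $\xi_i=\frac{\pd L}{\pd v^i}\big|_{\widetilde\gamma(t)}$ we have $\xi=\Phi_L(\widetilde\gamma(t))$, hence $\Phi_L^{-1}(\xi)=\gamma'(t)$, and Proposition 5.8 (with $\Phi_L^{-1}=\Phi_H$) identifies the abstract Hamiltonian on $T^\ast M$ with the expression $\xi_i v^i - L$ on $TM$ evaluated along the motion:
\[
H = \xi_i v^i - L = \dot\gamma^i\,\frac{\pd L}{\pd v^i}\Big|_{\widetilde\gamma(t)} - L(\widetilde\gamma(t)).
\]
This reduces the claim $\dt H = 0$ to differentiating a scalar function along the curve $\widetilde\gamma$.

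First I would compute $\dt H$, applying the product rule to the first term and the full chain rule (including the explicit $t$-dependence) to $L(\gamma,\dot\gamma,t)$, giving
\[
\dt H = \ddot\gamma^i \frac{\pd L}{\pd v^i} + \dot\gamma^i \dt\!\left(\frac{\pd L}{\pd v^i}\right) - \left(\frac{\pd L}{\pd x^i}\dot\gamma^i + \frac{\pd L}{\pd v^i}\ddot\gamma^i + \frac{\pd L}{\pd t}\right),
\]
where all partials are evaluated at $\widetilde\gamma(t)$. The two terms containing $\ddot\gamma^i$ cancel at once, leaving $\dot\gamma^i \dt\!\left(\frac{\pd L}{\pd v^i}\right) - \frac{\pd L}{\pd x^i}\dot\gamma^i - \frac{\pd L}{\pd t}$. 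Next I would invoke the Euler--Lagrange equations, which hold because $\gamma$ is a motion, in the form $\frac{\pd L}{\pd x^i}\big|_{\widetilde\gamma(t)} = \dt\!\left(\frac{\pd L}{\pd v^i}\big|_{\widetilde\gamma(t)}\right)$. Substituting this makes the first two surviving terms cancel, and we are left with $\dt H = -\frac{\pd L}{\pd t}$. By hypothesis $\frac{\pd L}{\pd t}=0$, so $\dt H = 0$ and the energy is conserved.

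The computation itself is routine; the bookkeeping in the chain rule is the only thing to watch, in particular recognizing that along the lifted curve the $v^i$-slot is filled by $\dot\gamma^i$, so that differentiating $L(\gamma,\dot\gamma,t)$ produces both the $\frac{\pd L}{\pd v^i}\ddot\gamma^i$ term (which cancels against the product-rule term) and the crucial explicit $\frac{\pd L}{\pd t}$ term carrying the hypothesis. The one conceptual point worth stating carefully is the identification via Proposition 5.8 of the Hamiltonian on $T^\ast M$ with $v^i\frac{\pd L}{\pd v^i}-L$ on $TM$ along the motion; everything else is elementary. (For a time-independent natural Lagrangian one may alternatively appeal to Proposition 5.16 to read $H$ as $K+U$, but the Legendre identity is the cleaner bookkeeping device and handles the explicit $t$-dependence transparently.)
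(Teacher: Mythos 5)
Your proof is correct and is essentially the same as the paper's: both identify $H$ with $\xi_i v^i - L$ along the motion via Proposition 5.8, differentiate using the chain and product rules, and invoke the Euler--Lagrange equations to reduce everything to $\dt H = -\frac{\pd L}{\pd t}$. The only difference is presentational — the paper starts from $\dt L$ and rearranges to reach $\dt(L-\xi_i v^i)=\frac{\pd L}{\pd t}$, while you differentiate $\xi_i v^i - L$ directly.
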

\begin{proof}
Let $\gamma(t)$ be a motion in $(M,L)$. That is, suppose $\widetilde\gamma(t)=(\gamma(t),\gamma^\prime(t))$ satisfies the Euler Lagrange equations. We have that
\begin{align*}
\dt L(\gamma(t),\gamma^\prime(t),t)&=\left.\frac{\pd L}{\pd x^i}\right|_{(\widetilde\gamma(t),t)}\gamma^\prime(t)+\left.\frac{\pd L}{\pd v^i}\right|_{(\widetilde\gamma(t),t)}\gamma^{\prime\prime}(t)+\left.\frac{\pd L}{\pd t}\right|_{(\widetilde\gamma(t),t)}\\
&=\dt\left.\frac{\pd L}{\pd v^i}\right|_{(\widetilde\gamma(t),t)}\gamma^\prime(t)+\left.\frac{\pd L}{\pd v^i}\right|_{(\widetilde\gamma(t),t)}\gamma^{\prime\prime}(t)+\left.\frac{\pd L}{\pd t}\right|_{(\widetilde\gamma(t),t)}&\text{by the Euler-Lagrange equations}\\
&=\dt\left(\left.\frac{\pd L}{\pd v^i}\right|_{(\widetilde\gamma(t),t)}\gamma^\prime(t)\right)+\left.\frac{\pd L}{\pd t}\right|_{(\widetilde\gamma(t),t)}&\text{by the product rule}\\
&=\dt\left(\xi_i(\widetilde\gamma(t))v^i(\widetilde\gamma(t))\right)+\left.\frac{\pd L}{\pd t}\right|_{(\widetilde\gamma(t),t)}&\text{by definition}\\
\end{align*}

That is \[\dt\left(L(x,v,t)-\xi_iv^i\right)=\left.\frac{\pd L}{\pd t}\right|_{(\widetilde\gamma(t),t)}.\]However, by proposition 5.8 we have that $L(x,v,t)-\xi_iv^i=-L^\ast:=-H$. Hence if $\frac{\pd L}{\pd t}=0$ then $\dt H=0$. By Proposition 5.15 this means that the total energy is conserved.
\end{proof}

The first example we give that demonstrates how the Legendre transform relates Hamiltonian mechanics and Lagrangian mechanics is by showing how it translates motions in one formulation to motions in the other.  To see this we first need the following lemma.
\begin{lemma} Assume that $L\in C^\infty(TM)$ is strongly convex. As above, if $(TU,x^1,\dots,x^n,v^1,\dots,v^n)$ is a coordinate chart on $TM$ we get the induced chart $(\Phi_{L}(TU),x^1,\dots,x^n,\xi_1,\dots,\xi_n)$ on $T^\ast U$, where by definition $(x,\xi)=\Phi_{L}(x,v)$. Let $H=L^\ast\in C^\infty(T^\ast U)$. The claim is that $\frac{\pd L}{\pd x}(x,v)=-\frac{\pd H}{\pd x}(x,\xi)$
\end{lemma}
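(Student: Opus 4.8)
The plan is to differentiate the explicit formula for $H$ supplied by Proposition 5.8 and then use the defining relation of the Legendre transform to cancel the unwanted terms. First I would work in a fixed coordinate chart and regard the fibre coordinate $v$ as a smooth function of the pair $(x,\xi)$, writing $v=\Phi_L^{-1}(x,\xi)$ with components $v^i=v^i(x,\xi)$. This is legitimate because $\Phi_L:TU\to\Phi_L(TU)$ is a diffeomorphism (Proposition 5.7, together with Theorem 5.9 identifying its inverse with $\Phi_{L^\ast}$), so its inverse is smooth and satisfies the identity $\xi_i=\frac{\pd L}{\pd v^i}(x,v(x,\xi))$ for all $(x,\xi)$ in the relevant chart.

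With this set up, Proposition 5.8 gives $H(x,\xi)=\xi_i v^i-L(x,v)$, where now $v=v(x,\xi)$ carries an implicit $x$-dependence. The key step is to differentiate this identity with respect to $x^j$ while holding $\xi$ fixed. By the chain rule,
\[
\frac{\pd H}{\pd x^j}(x,\xi)=\xi_i\,\frac{\pd v^i}{\pd x^j}-\frac{\pd L}{\pd x^j}(x,v)-\frac{\pd L}{\pd v^i}(x,v)\,\frac{\pd v^i}{\pd x^j}.
\]
Substituting the defining relation $\frac{\pd L}{\pd v^i}(x,v)=\xi_i$ into the last term, the first and third terms cancel identically, leaving $\frac{\pd H}{\pd x^j}(x,\xi)=-\frac{\pd L}{\pd x^j}(x,v)$. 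Rearranging and collecting the components yields exactly the claimed equality $\frac{\pd L}{\pd x}(x,v)=-\frac{\pd H}{\pd x}(x,\xi)$.

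The one point requiring genuine care---and the step I would be most deliberate about---is the bookkeeping of which variables are held constant. In computing $\frac{\pd H}{\pd x}(x,\xi)$ the covector $\xi$ is fixed, so the implicit $x$-dependence of $v$ must be tracked; it is precisely this implicit dependence that produces the two terms which cancel. By contrast $\frac{\pd L}{\pd x}(x,v)$ is the partial derivative of $L$ taken with $v$ held fixed. This cancellation is the whole content of the lemma; everything else is routine once the functional dependence $v=v(x,\xi)$ has been correctly recorded.
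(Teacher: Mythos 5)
Your proof is correct and is essentially the paper's argument: both differentiate the identity $H(x,\xi)=\xi_iv^i-L(x,v)$ from Proposition 5.8 with respect to $x$ and use the Legendre relations to cancel the cross terms. The only (cosmetic) difference is that you take $(x,\xi)$ as the independent variables and cancel via $\frac{\pd L}{\pd v^i}=\xi_i$, whereas the paper holds $v$ fixed, lets $\xi$ depend on $(x,v)$, and cancels via $\frac{\pd H}{\pd\xi_i}=v^i$; the two computations are mirror images of one another.
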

\begin{proof} By Proposition 5.8, $H_x(\xi)=L^\ast_x(\xi)=\xi (\Phi_{L_x}^{-1}(\xi))-L(\Phi_{L_x}^{-1}(\xi))$. However, $\Phi_{L_x}^{-1}(\xi)=\Phi_{L_x}^{-1}(\xi)=\Phi_{L_x}^{-1}(\Phi_{L_x}(v))=v$ and so \begin{align}H(x,\xi)=\xi\cdot v-L(x,v)=\xi_iv^i-L(x,v)\end{align} We know that $\frac{\pd v^j}{\pd x^i}=0$ for all $1\leq i,j\leq n$; however, by definition $\xi$ is dependent on $x$ and $v$. Hence, taking the total derivative of $H(x,\xi)$ with respect to $x^i$, the left hand side of $(5.1)$ is \[\frac{\pd H}{\pd x^i}(x,\xi)+\frac{\pd H}{\pd \xi_j}(x,\xi)\frac{\pd \xi_j}{\pd x^i}(x,v)\] while the total derivative of the right hand side is
\[\frac{\pd\xi_j}{\pd x^i}(x,v)v^j-\left(\frac{\pd L}{\pd x^i}(x,v)\right)\]

However, by hypothesis we have that $\frac{\pd H}{\pd \xi_i}(x,\xi)=\left(\Phi_{L}^{-1}(\Phi_L(x,v))\right)^i=v^i$ and so combining these equalities finishes the proof.
\end{proof}

This result gives us the following two theorems.

\begin{theorem} If a curve $\gamma :\R\to U$ satisfies the Euler-Lagrange equations on some chart $U\subset M$, then $\Phi_L\circ\widetilde\gamma:[a,b]\to T^\ast M$ is an integral curve of the Hamiltonian vector field $V_H$.

\end{theorem}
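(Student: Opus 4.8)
The plan is to show that $\Phi_L\circ\widetilde\gamma$ satisfies Hamilton's equations in the induced chart $(\Phi_L(TU),x^1,\dots,x^n,\xi_1,\dots,\xi_n)$ on $T^\ast U$, since by the derivation in Section 4.2 a curve is an integral curve of $V_H$ if and only if it satisfies Hamilton's equations, and in this chart $\omega=dx^i\wedge d\xi_i$ as computed in Section 2.4. First I would write the curve explicitly: from the definition of the Legendre transform together with $\widetilde\gamma(t)=(\gamma(t),\gamma^\prime(t))$ we get $\Phi_L\circ\widetilde\gamma(t)=(\gamma(t),\xi(t))$, where $\xi_i(t)=\frac{\pd L}{\pd v^i}\big|_{\widetilde\gamma(t)}$. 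Writing $(\alpha(t),\beta(t)):=(\gamma(t),\xi(t))$, it then suffices to verify the two Hamilton equations $\dot\alpha^i=\frac{\pd H}{\pd\xi_i}$ and $\dot\beta_i=-\frac{\pd H}{\pd x^i}$ along this curve.

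For the first equation I would use the relation $\frac{\pd H}{\pd\xi_i}(x,\xi)=v^i$ that appears in the proof of Lemma 5.17, which itself rests on $\Phi_H=\Phi_L^{-1}$ (Theorem 5.9): since $(x,\xi)=\Phi_L(x,v)$ we have $\frac{\pd H}{\pd\xi_i}(x,\xi)=\left(\Phi_L^{-1}(\Phi_L(x,v))\right)^i=v^i$. Evaluating along our curve with $v=\gamma^\prime(t)$ yields $\frac{\pd H}{\pd\xi_i}(\alpha(t),\beta(t))=v^i=\dot\alpha^i(t)$, which is exactly the first Hamilton equation.

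For the second equation I would differentiate $\beta_i(t)=\frac{\pd L}{\pd v^i}\big|_{\widetilde\gamma(t)}$ in $t$ and invoke the Euler-Lagrange equations, which give $\dot\beta_i(t)=\frac{d}{dt}\big(\frac{\pd L}{\pd v^i}\big|_{\widetilde\gamma(t)}\big)=\frac{\pd L}{\pd x^i}\big|_{\widetilde\gamma(t)}$. Then Lemma 5.17, which states $\frac{\pd L}{\pd x}(x,v)=-\frac{\pd H}{\pd x}(x,\xi)$, converts the right-hand side into $-\frac{\pd H}{\pd x^i}(\alpha(t),\beta(t))$, giving the second Hamilton equation. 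Combining the two verifications shows that $\Phi_L\circ\widetilde\gamma$ is an integral curve of $V_H$.

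Once Lemma 5.17 and the identity $\frac{\pd H}{\pd\xi}=v$ are in hand the argument is essentially bookkeeping, so I do not expect a genuine obstacle. The one place that requires care is keeping straight the points at which the partial derivatives are evaluated: $\frac{\pd L}{\pd x^i}$ and $\frac{\pd L}{\pd v^i}$ are evaluated at the tangent-bundle point $\widetilde\gamma(t)=(\gamma(t),\gamma^\prime(t))$, whereas $\frac{\pd H}{\pd x^i}$ and $\frac{\pd H}{\pd\xi_i}$ are evaluated at the cotangent-bundle point $\Phi_L(\widetilde\gamma(t))$, the two being linked precisely by $\xi(t)=\Phi_L(\gamma(t),\gamma^\prime(t))$.
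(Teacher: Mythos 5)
Your proposal is correct and follows essentially the same route as the paper's proof: verify the first Hamilton equation from $\Phi_H=\Phi_L^{-1}$ (so that $\frac{\pd H}{\pd \xi_i}$ evaluated along $\Phi_L\circ\widetilde\gamma$ recovers $\gamma^{\prime}$), and obtain the second by differentiating $\xi_i(t)=\frac{\pd L}{\pd v^i}\big|_{\widetilde\gamma(t)}$, applying the Euler--Lagrange equations, and then invoking Lemma 5.17. Your closing caution about where each partial derivative is evaluated matches the bookkeeping the paper carries out explicitly.
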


\begin{proof}
Let $(U,x_1,\dots,x_n)\subset M$ be an arbitrary chart. We have the induced charts $(TU,x^1,\dots,x^n,v^1,\dots,v^n)$ and $(\Phi_L(TU),x^1,\dots,x^n,\xi_1,\dots,\xi_n)$ on $TM$ and $T^\ast M$ respectively. By hypothesis, $\widetilde\gamma(t)=(\gamma(t),\gamma^\prime(t))$ satisfies the Euler-Lagrange equations. That is \[\frac{\pd L}{\pd x^i}(\widetilde\gamma(t))=\dt\left(\frac{\pd L}{\pd v^i}(\widetilde\gamma(t))\right)\]Let $\Psi(t)=\Phi_L(\widetilde\gamma(t))=\Phi_L(\gamma(t),\gamma^\prime(t))=(\gamma(t),\Phi_{L_{\gamma(t)}}(\gamma^\prime(t)))$. It needs to be shown that $\Psi(t)$ satisfies Hamilton's equations. That is, it needs to be shown that
\[
\begin{array}{r c l}
\dt \gamma^i(t)&=&\frac{\pd H}{\pd \xi_i}(\Psi(t))\\
\\
\dt (\Phi_{L_{\gamma(t)}}(\gamma^\prime(t)))_i(t)&=&-\frac{\pd H}{\pd x^i}(\Psi(t))
\end{array}
\]But since $\Psi(t)=\Phi_L(\widetilde\gamma(t))$ it follows that $\Phi_H(\Psi(t))=\widetilde\gamma(t)$. Hence \[\gamma^\prime(t)=\frac{\pd H}{\pd \xi}(\Psi(t))\]This is precisely the first line of Hamilton's equations. The second line also holds since
\begin{align*}
\dt(\Phi_{L_{\gamma(t)}}(\gamma^\prime(t)))_i&=\dt\frac{\pd L}{\pd v^i}(\widetilde\gamma(t)) &\text{by definition}\\
&=\frac{\pd L}{\pd x^i}(\widetilde\gamma(t)) &\text{by the Euler-Lagrange equations}\\
&=-\frac{\pd H}{\pd x^i}(\gamma(t),\Phi_{L_{\gamma(t)}}(\gamma^\prime(t)))&\text{by the Lemma}\\
&=-\frac{\pd H}{\pd x^i}(\Psi(t))&\text{by definition} 
\end{align*}

\end{proof}

\noindent A stronger version of the converse is also true:

\begin{theorem} Given a Lagrangian system $(M,L)$, where $L\in C^\infty(TM)$ is strongly convex, let $(T^\ast M,\omega, H=L^\ast)$ be the induced Hamiltonian system. If $\Psi:\R\to T^\ast M$ is an integral curve for $V_H$ then $\Psi=\Phi_L\circ\gamma$ for some motion $\gamma$ in $(M,L)$.
\end{theorem}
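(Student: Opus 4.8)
The plan is to run the argument of the preceding theorem (Theorem 5.21) in reverse: instead of starting with a motion and pushing it forward by $\Phi_L$, I start with the integral curve $\Psi$ and recover the motion by applying the inverse Legendre transform. Let $\pi:T^\ast M\to M$ be the projection and define the candidate curve downstairs by $\gamma:=\pi\circ\Psi$. Since $\Psi$ is smooth and $\pi$ is a submersion, $\gamma$ is a smooth curve in $M$, and $\Psi(t)$ lies in $T^\ast_{\gamma(t)}M$ for each $t$. The goal splits into two independent assertions: first, that $\Phi_H\circ\Psi=\widetilde\gamma$ (equivalently $\Psi=\Phi_L\circ\widetilde\gamma$, using $\Phi_L^{-1}=\Phi_{L^\ast}=\Phi_H$ from Theorem 5.9); and second, that $\gamma$ is in fact a motion, i.e.\ $\widetilde\gamma$ solves the Euler--Lagrange equations.

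For the first assertion I would work in an induced chart $(\Phi_L(TU),x^1,\dots,x^n,\xi_1,\dots,\xi_n)$ and write $\Psi(t)=(x(t),\xi(t))$, so that $\gamma(t)=x(t)$ and $\gamma'(t)=\dot x(t)$. Because $\Psi$ is an integral curve of $V_H$, it satisfies Hamilton's equations $\dot x^i=\tfrac{\pd H}{\pd\xi_i}(\Psi)$ and $\dot\xi_i=-\tfrac{\pd H}{\pd x^i}(\Psi)$. Now $\Phi_H$ acts by $(x,\xi)\mapsto(x,\tfrac{\pd H}{\pd\xi})$, so the velocity component of $\Phi_H(\Psi(t))$ is exactly $\tfrac{\pd H}{\pd\xi}(\Psi(t))$, which by the \emph{first} Hamilton equation equals $\dot x(t)=\gamma'(t)$. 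Hence $\Phi_H(\Psi(t))=(\gamma(t),\gamma'(t))=\widetilde\gamma(t)$, and applying the diffeomorphism $\Phi_L=\Phi_H^{-1}$ gives $\Psi=\Phi_L\circ\widetilde\gamma$ on the chart. Since $\Phi_L$ is a global diffeomorphism between $TU$ and $\Phi_L(TU)$ and this identity is coordinate independent, it patches to all of the domain of $\Psi$.

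For the second assertion, the identity $\Psi=\Phi_L\circ\widetilde\gamma$ just established means that the momentum coordinate of $\Psi$ is $\xi_i(t)=\tfrac{\pd L}{\pd v^i}(\widetilde\gamma(t))$, since $\Phi_L$ sends $(x,v)\mapsto(x,\tfrac{\pd L}{\pd v})$. Differentiating and invoking the \emph{second} Hamilton equation, then Lemma 5.20 (which gives $\tfrac{\pd L}{\pd x^i}(x,v)=-\tfrac{\pd H}{\pd x^i}(x,\xi)$ under $(x,\xi)=\Phi_L(x,v)$), I obtain
\[
\dt\!\left(\frac{\pd L}{\pd v^i}(\widetilde\gamma(t))\right)=\dot\xi_i(t)=-\frac{\pd H}{\pd x^i}(\Psi(t))=\frac{\pd L}{\pd x^i}(\widetilde\gamma(t)),
\]
which is precisely the $i$-th Euler--Lagrange equation. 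Thus $\widetilde\gamma$ is a motion, completing the proof.

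I do not expect a serious obstacle here: the argument is a formal inversion, and every step is either one half of Hamilton's equations, the coordinate description of the Legendre transform, or Lemma 5.20. The only point requiring a little care is globality — that the chartwise identity $\Psi=\Phi_L\circ\widetilde\gamma$ and the Euler--Lagrange conclusion are independent of the chart chosen. This is handled by Proposition 3.19 (coordinate invariance of the Euler--Lagrange equations) and by the fact that $V_H$ and $\Phi_H$ are globally defined, so the local constructions agree on overlaps and assemble into the single curve $\gamma=\pi\circ\Psi$.
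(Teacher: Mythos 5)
Your proposal is correct and follows essentially the same route as the paper's own proof: take $\gamma=\pi\circ\Psi$, use the first Hamilton equation together with $\Phi_L=\Phi_H^{-1}$ to get $\Psi=\Phi_L\circ\widetilde\gamma$, then combine the second Hamilton equation with the lemma $\frac{\pd L}{\pd x^i}(x,v)=-\frac{\pd H}{\pd x^i}(x,\xi)$ to verify the Euler--Lagrange equations. The only discrepancies are cosmetic (your internal numbering of the lemma and preceding theorem), not mathematical.
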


\begin{proof}
Let $\Psi(t)=(\alpha(t),\beta(t))\in T^\ast M$ be an integral curve for $V_H$. Then $\Psi(t)$ satisfies the Hamilton equations
\[
\begin{array}{r c l}
\dt \alpha^i(t)&=&\frac{\pd H}{\pd \xi_i}(\Psi(t))\\
\\
\dt \beta_i(t)&=&-\frac{\pd H}{\pd x^i}(\Psi(t))
\end{array}
\]
Here $\alpha$ is a curve $\alpha:\R\to M$. It follows that \[\widetilde\alpha(t)=(\alpha(t),\alpha^\prime(t))=(\alpha(t),\frac{\pd H}{\pd \xi}(\Psi(t)))=(\alpha(t),\Phi_{H_{\alpha(t)}}(\beta(t)))\]so that \[\Phi_L(\widetilde\alpha(t))=(\alpha(t),\Phi_{L_{\alpha(t)}}\Phi_{H_{\alpha(t)}}(\beta(t))=(\alpha(t),\beta(t)))=\Psi(t)\]It suffices to show that $\widetilde\alpha(t)$ satisfies the Euler-Lagrange equations. Indeed
\begin{align*}
\frac{\pd L}{\pd x}(\widetilde\alpha(t))&=\frac{\pd L_{\alpha(t)}}{\pd x}(\dt\alpha(t))&\text{by definition}\\
&=-\frac{\pd H_{\alpha(t)}}{\pd x}(\beta(t)) &\text{by the Lemma 5.17}\\
&=-\frac{\pd H}{\pd x}(\Psi(t))&\text{by definition}\\
&=\dt(\beta(t)) &\text{by Hamilton's equations}\\
&=\dt\Phi_{L_{\alpha(t)}}\left(\Phi_{H_{\alpha(t)}}(\beta(t))\right)&\text{since $\Phi_{L^\ast}=\Phi_L^{-1}$}\\
&=\dt\Phi_{L_{\alpha(t)}}(\frac{\pd H}{\pd \xi}(\Psi(t)))&\text{by definition}\\
&=\dt \Phi_{L_{\alpha(t)}}(\dt(\alpha(t)))&\text{by Hamliton's equation}\\
&=\dt\frac{\pd L}{\pd v}(\widetilde\alpha(t))&\text{by definition}
\end{align*}
\end{proof}

\begin{example}\bf{(Geodesic Flow in Hamiltonian Mechanics)} \rm
\vspace{0.1cm}

We can now show how the Legendre transform relates the concept of geodesic flow in Lagrangian and Hamiltonian mechanics. Let $(M,g,L)$ be a natural Lagrangian system, where $L\in C^\infty (TM)$ is strongly convex. Recall that in section 2.7 we derived the geodesic flow as the symplectomorphism generated by the Riemann distance function.  In other words, we set our Lagrangian to be \[L:TM\to\R \ \ \ \ \ \ \ (x,V_x)\mapsto \frac{1}{2}g_x(V_x,V_x).\]
As demonstrated in Example 3.19 if the net force on the mechanical system is 0, then the solutions to the Euler-Lagrange equations are geodesics. Consider what happens if we translate this system into the Hamiltonian setting. Using the argument in the proof of Proposition 5.15 we have that \[L^\ast(x,\xi)=\frac{1}{2}g^{ij}\xi_i\xi_j.\]By definition our Hamiltonian vector field is \[V_H:=\frac{\pd H}{\pd \xi_i}\frac{\pd}{\pd x^i}-\frac{\pd H}{\pd x^i}\frac{\pd}{\pd \xi_i}.\]The integral curves, $\gamma(t)=(x(t),\xi(t))$, of $V_H$ must satisfy Hamilton's equations:
\[
\begin{array}{ l c r }
\dt x^k(t)&=&\frac{\pd H}{\pd \xi_k}\\
\!&\!&\! \\
\dt \xi_k(t)&=&-\frac{\pd H}{\pd x^k}
\end{array}
\]
We have that \begin{align}\frac{\pd H}{\pd\xi_k}=\frac{1}{2}g^{ij}\delta^k_i\xi_j+\frac{1}{2}g^{ij}\xi_i\delta^k_j=g^{kj}\xi_j\end{align}and\begin{align}-\frac{\pd H}{\pd x^k}=-\frac{1}{2}\frac{\pd g^{ij}}{\pd x^k}\xi_i\xi_j\end{align} 

To make the notation clearer, we will denote the time derivative using dot notation. If $\gamma(t)$ satisfies Hamilton's equations then by equation $(5.2)$ we have that $\xi_k=g_{ak}\dot x^a$. Plugging this into the second line of Hamilton's equations and using $(5.3)$ we get that \begin{align}\dot{\xi_k}=\frac{\pd g_{ak}}{\pd x^q}\dot x^a\dot x^q+g_{ak}\ddot x^a=-\frac{1}{2}\frac{\pd g^{ij}}{\pd x^k}g_{ia}g_{jp}\dot x^a\dot x^p\end{align}
We can simplify this expression using the following claim.

\begin{claim}
We have that $-\frac{\pd g^{ij}}{\pd x^k}g_{ia}g_{jp}=\frac{\pd g_{ap}}{\pd x^k}$
\end{claim}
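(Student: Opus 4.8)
The plan is to prove this by differentiating the defining relation of the inverse metric, which is the only fact about $g^{ij}$ that is genuinely needed. Since $g^{ij}$ denotes the entries of the matrix inverse of $[g_{ij}]$, we have the pointwise identity $g^{ij}g_{jp}=\delta^i_p$ for all $i,p$, holding at every point of the chart. Because this is an identity of smooth functions of $x$, I may differentiate both sides with respect to $x^k$.

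First I would apply the product rule to $g^{ij}g_{jp}=\delta^i_p$. As the right-hand side is independent of $x$, its derivative vanishes, so
\[
\frac{\pd g^{ij}}{\pd x^k}\,g_{jp}+g^{ij}\,\frac{\pd g_{jp}}{\pd x^k}=0,
\]
which rearranges to
\[
\frac{\pd g^{ij}}{\pd x^k}\,g_{jp}=-\,g^{ij}\,\frac{\pd g_{jp}}{\pd x^k}.
\]
Next I would contract both sides with $g_{ia}$, summing over $i$. The left-hand side becomes exactly the quantity $g_{ia}g_{jp}\,\dfrac{\pd g^{ij}}{\pd x^k}$ that appears in the claim, while on the right I use the inverse relation in the form $g_{ia}g^{ij}=\delta^j_a$ to collapse the sum over $j$:
\[
g_{ia}\frac{\pd g^{ij}}{\pd x^k}\,g_{jp}
=-\,g_{ia}g^{ij}\,\frac{\pd g_{jp}}{\pd x^k}
=-\,\delta^j_a\,\frac{\pd g_{jp}}{\pd x^k}
=-\,\frac{\pd g_{ap}}{\pd x^k}.
\]
Moving the sign across yields $-\dfrac{\pd g^{ij}}{\pd x^k}g_{ia}g_{jp}=\dfrac{\pd g_{ap}}{\pd x^k}$, which is the claim.

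There is no substantive obstacle here; the computation is a routine application of the derivative-of-an-inverse identity under the Einstein summation convention. The only point requiring care is the index bookkeeping—correctly contracting with $g_{ia}$ and recognizing $g_{ia}g^{ij}=\delta^j_a$ so that the summed index $j$ is replaced by $a$ in $\pd g_{jp}/\pd x^k$. I would emphasize that the symmetry of $g$ plays no essential role beyond ensuring the expressions are well defined; the argument is purely the chain/product rule applied to the inverse-matrix relation.
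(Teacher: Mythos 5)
Your proof is correct and is essentially the same as the paper's: both differentiate the inverse relation $g^{ij}g_{jp}=\delta^i_p$ and then contract with the metric to isolate $\frac{\pd g_{ap}}{\pd x^k}$, differing only in which indices are relabeled and contracted. No gaps.
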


\begin{proof}
We know that $g_{ap}g^{pj}=\delta^j_a$. Differentiating this with respect to $x^k$ we get that \[\frac{\pd g_{ap}}{\pd x^k}g^{pj}+g_{ap}\frac{\pd g^{pj}}{\pd x^k}=0.\] Multiplying both sides by $g_{ji}$ and summing over $j$ gives \[\frac{\pd g_{ap}}{\pd x^k}\delta^p_i=-\frac{\pd g^{pj}}{\pd x^k}g_{ap}g_{ij}\]
\end{proof}

\end{example}

Using this claim, equation $(5.4)$ becomes \[\frac{\pd g_{ak}}{\pd x^q}\dot x^a\dot x^q+g_{ak}\ddot x^a=\frac{1}{2}\frac{\pd g_{ap}}{\pd x^k}\dot x^a\dot x^p.\]Rearranging, we get that if $\gamma(t)$ satisfies Hamilton's equations then \[\ddot x^b=g^{kb}\left(\frac{1}{2}\frac{\pd g_{ap}}{\pd x^k}\dot x^a\dot x^p-\frac{\pd g_{ak}}{\pd x^q}\dot x^a\dot x^q\right)=-\frac{1}{2}g^{kb}\left(\frac{\pd g_{ak}}{\pd x^p}\dot x^a\dot x^p+\frac{\pd g_{pk}}{\pd x^a}\dot x^a\dot x^p-\frac{\pd g_{ap}}{\pd x^k}\dot x^a\dot x^p\right)\]This is precisely the geodesic equation. Hence, a curve satisfying the Hamilton equations is a geodesic. Conversely, let $\gamma(t)=(x^1(t),\dots,x^n(t))$ be a geodesic in $M$. Then if we set $\xi_k=g_{ak}\dot x^a$, applying the above argument to $\psi(t)=(x^1(t),\dots,x^n(t),\xi_1(t),\dots,\xi_n(t))$ shows that $\psi$ is an integral curve of $V_H$.

\begin{remark} As in the previous example, consider a Lagrangian of the form $L=K$. Then, as in the proof of Proposition 5.15,  we have for fixed $x\in M$ \[(\Phi_{L}(x,v))_i=g_{ji}v^j\]That is, for $W_x=W^i\left.\frac{\pd}{\pd x^i}\right|_x\in T_xM$ we have that \[\Phi_{L_x}(W_x)=g_{ji}W^idx^j.\]Hence in this case the Legendre transform is just the musical isomorphism \[\Phi_{L_x}:T_xM\to T_x^\ast M \ \ \ \ \ \ \ W\mapsto g(W,\cdot ).\]

In section 2.7 we showed that the symplectomorphism generated by the Riemann distance function was the geodesic flow \[\varphi:TM\to TM \ \ \ \ \ \ \ V\mapsto (\gamma_V(1),\gamma_V^\prime(1)).\]But in order to find this symplectomorphism we identified $T^\ast M$ with $TM$ via the musical isomorphism. In fact, now we can see that all we were doing in that section was solving the Hamilton equations. We were trying to find $V$ and $W$ such that $g(V,\cdot)=d_a L$ and $g(W,\cdot)=-d_bL$. We took geodesics (motions in $(M,L)$) and mapped them under the Legendre transform to motions in $(T^\ast M,\omega,H).$ It follows that an equivalent way to define the geodesic flow is as follows.

\end{remark}

\begin{definition}
Consider the smooth function\[H:T^\ast M\to\R \ \ \ \ \ \ \ (x,\xi)\mapsto \frac{1}{2}g^{ij}(x)\xi_i\xi_j\]and its Hamiltonian vector field\[V_H:=\frac{\pd H}{\pd \xi_i}\frac{\pd}{\pd x^i}-\frac{\pd H}{\pd x^i}\frac{\pd}{\pd\xi_i}.\]The flow generated by $V_H$ is called the \bf{geodesic flow}\rm.
\end{definition}
\vspace{0.1cm}

\begin{example}\bf{(The Simple Pendulum Under the Legendre Transform)} \rm
\vspace{0.1cm}

Recall that in Example 3.25 we found the Euler-Lagrange equations for the simple pendulum. We also saw in Example 4.10 how it was described in the Hamiltonian formulation. Noticing that the pendulum was constrained to $S^1$, we worked in a coordinate chart $(U,\theta)$ of $S^1$ and considered $(TS^1,\theta,\widetilde\theta)$. With the metric on $S^1$ induced from $\R^2$, i.e. $g=g_{22}=l^2(d\theta)^2$, we saw that the kinetic energy was $K=\frac{1}{2}ml^2(\widetilde\theta)^2$ and the potential energy was $U=mgl(1-\cos\theta)$. Recall that in the Hamiltonian setting, we didn't change the potential, but we set the kinetic energy to be $K=\frac{\xi^2}{2ml^2}$. To see why we did this, we apply the Legendre transform to the Lagrangian set up. Let $(\Phi_L(TU),\theta,\xi)$ be a chart in the induced Hamiltonian system so that $\xi=\frac{\pd L}{\pd\widetilde\theta}=\frac{\pd}{\pd\widetilde\theta}(\frac{1}{2}ml^2\widetilde\theta^2)=ml^2\widetilde\theta.$ It follows that $\widetilde\theta=\frac{\xi}{ml^2}$. Thus, \[K=\frac{1}{2}ml^2\frac{\xi^2}{m^2l^4}=\frac{\xi^2}{2ml^2}.\]Since in the induced Hamiltonian system the first coordinate is the same as in the Lagrangian setting, the potential energy remains unchanged.
\end{example}

\newpage

\section{Noether's Theorem}
Noether's theorem provides a relationship between symmetries and constants of motion. Before going into the details we first need the formal definitions. In this section we will always assume that our manifolds are geodesically complete. 

\subsection{Noether's Theorem in Lagrangian Mechanics}
First recall the different notions we have of `lifting' maps. Let $M$ be a manifold and $f:M\to M$ a diffeomorphism. In section $2.5$ we defined the lift of $f$ to the cotangent bundle to be the map \[f_\sharp:T^\ast M\to T^\ast M \ \ \ \ \ \ \ (x,\xi)\mapsto (f(x),(f^\ast)^{-1}(\xi)).\]Note that we can also lift $f$ to a map on the tangent bundle by taking the differential of $f$. To avoid confusion, we will denote the lift of $f$ to $TM$ by \[\widetilde f:TM\to TM \ \ \ \ \ \ \ (p,V)\mapsto (f(p),f_{\ast,p}(V)).\]Lastly, we defined in section 3.2 the lift of a curve $\gamma:\R\to M$ to the tangent bundle by\[\widetilde\gamma:\R\to TM \ \ \ \ \ \ \ t\mapsto (\gamma(t),\gamma^\prime(t)).\]
\begin{definition}
In a Lagrangian system $(M,L)$ a \bf{continuous symmetry} \rm is a one parameter family of diffeomorphisms $\left\{\theta_s:M\to M \ ; s\in\R\right\}$ such that for each $s\in\R$ we have $(\widetilde\theta_s)^\ast L=L$. That is, the family of maps is a continuous symmetry if for all $s\in \R$ we have that $L\circ\widetilde\theta_s=L$. 
\end{definition}

\del
{
\begin{remark}
A slightly weaker notion of continuous symmetry, but one which will be sufficient for the examples below, is by requiring that for each motion $\gamma:\R\to M$ we have that $L\left(\widetilde{(\theta_s\circ\gamma)}(t)\right)=L(\widetilde\gamma(t))$. That is, the one parameter family is a continuous symmetry if  for all $s\in\R$, \ $\theta_s\circ\gamma:\R\to M$ is also a motion in $(M,L)$. Notice that this is a weaker notion of symmetry as given in Definition 6.1 since $\widetilde{(\theta_s\circ\gamma)}(t)=\theta_{s,\ast}(\widetilde\gamma(t))=\widetilde\theta_{s,\gamma(t)}(\widetilde\gamma(t))$. 

\end{remark}
}

A continuous symmetry can be thought of as a symmetry of motion. The standard notion of symmetry is invariance under some sort of mapping, i.e. an object is called symmetric if there is a map that preserves it. But here the objects being acted on are motions and so this definition is referring to the preservation of solutions to the Euler-Lagrange equations. Some obvious examples that we observe in homogeneous space are the invariance of the laws of motion under space and time translations. Noether's theorem says that both these families of continuous symmetries (space and time translations) have corresponding conserved quantities. We will see below that they are conservation of momentum and conservation of energy respectively.


\begin{definition}
In a Lagrangian system $(M,L)$ a \bf{conserved quantity} \rm(or \bf{constant of the motion}\rm) is a smooth function $G\in C^\infty(TM)$ with the property that for any motion $\gamma:\R\to M$ in $(M,L)$, the total time derivative of $G$ vanishes on the image of $\gamma$. That is for all $t\in\R$ \[\dt G(\gamma(t))=0\]All of the conservation laws in physics correspond to a conserved quantity. 
\end{definition}

\begin{theorem} \bf{(N$\ddot {\text{o}}$ether)} \rm Let $(M,L)$ be a Lagrangian system and $\gamma:\R\to M$ a motion. Let $(x^1,\dots,x^n)$ be local coordinates for $M$ and $(x^1,\dots,x^n,v^1,\dots,v^n)$ the induced local coordinates on $TM$. For any continuous symmetry $\left\{\theta_s: M\to M \ , \ s\in\R\right\}$ in $(M,L)$ there exists a conserved quantity. The conserved quantity is given by the formula \[\left(\frac{\pd}{\pd v^i}L\right)\cdot\left(\frac{d}{ds}\circ\theta_{s,\ast}^i\right)\in C^\infty(TM).\]
\end{theorem}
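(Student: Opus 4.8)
The plan is to exploit the one-parameter family of curves obtained by pushing a fixed motion through the symmetry, and then to convert the infinitesimal invariance of $L$ into a total time derivative by invoking the Euler-Lagrange equations.

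First I would fix a motion $\gamma:\R\to M$ and form the varied curves $\gamma_s := \theta_s\circ\gamma$. The key observation is that lifting to $TM$ commutes with the symmetry: since $(\theta_s\circ\gamma)^\prime(t)=\theta_{s,\ast}(\gamma^\prime(t))$ by the chain rule, we have $\widetilde{\gamma_s}(t)=\widetilde\theta_s(\widetilde\gamma(t))$. Applying the defining property of a continuous symmetry, $L\circ\widetilde\theta_s=L$, then gives $L(\widetilde{\gamma_s}(t))=L(\widetilde\gamma(t))$ for every $s$ and $t$. In particular the function $s\mapsto L(\widetilde{\gamma_s}(t))$ is constant in $s$, so $\left.\frac{d}{ds}\right|_{s=0}L(\widetilde{\gamma_s}(t))=0$.

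Next I would expand this derivative in the local coordinates $(x^i,v^i)$. Writing $\gamma_s^i(t)$ for the coordinates of $\gamma_s$, the chain rule yields
\[
0=\left.\frac{\pd L}{\pd x^i}\right|_{\widetilde\gamma(t)}\left.\frac{d}{ds}\right|_{s=0}\gamma_s^i(t)+\left.\frac{\pd L}{\pd v^i}\right|_{\widetilde\gamma(t)}\left.\frac{d}{ds}\right|_{s=0}\dot\gamma_s^i(t).
\]
Because $\gamma$ is a motion, I can substitute the Euler-Lagrange equations $\frac{\pd L}{\pd x^i}|_{\widetilde\gamma(t)}=\dt(\frac{\pd L}{\pd v^i}|_{\widetilde\gamma(t)})$ into the first term, and interchange the order of differentiation $\frac{d}{ds}\dot\gamma_s^i=\dt\frac{d}{ds}\gamma_s^i$ in the second (valid since $(s,t)\mapsto\gamma_s(t)$ is smooth). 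Recognizing the outcome as a product-rule expansion, the two terms collapse into a single total time derivative:
\[
0=\dt\left(\left.\frac{\pd L}{\pd v^i}\right|_{\widetilde\gamma(t)}\cdot\left.\frac{d}{ds}\right|_{s=0}\gamma_s^i(t)\right).
\]

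Finally I would identify $\left.\frac{d}{ds}\right|_{s=0}\gamma_s^i(t)$ with the factor $\left(\frac{d}{ds}\circ\theta_{s,\ast}^i\right)$ in the statement: this is exactly the $i$-th base component of the infinitesimal generator of $\widetilde\theta_s$, evaluated along $\widetilde\gamma(t)$, so the bracketed expression equals $G(\widetilde\gamma(t))$ for $G=\left(\frac{\pd}{\pd v^i}L\right)\cdot\left(\frac{d}{ds}\circ\theta_{s,\ast}^i\right)\in C^\infty(TM)$. Since $\gamma$ was an arbitrary motion, $\dt G(\widetilde\gamma(t))=0$ along every motion, which is precisely the assertion that $G$ is a conserved quantity in the sense of Definition 6.3. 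The main obstacle I anticipate is not the computation but the bookkeeping in this last step: carefully justifying the interchange of $\frac{d}{ds}$ and $\dt$, and matching the coordinate expression $\frac{d}{ds}|_{s=0}\gamma_s^i(t)$ to the condensed notation $\frac{d}{ds}\circ\theta_{s,\ast}^i$, so that $G$ is genuinely a well-defined function on $TM$ rather than an object depending on the chosen motion.
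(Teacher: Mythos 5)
Your proposal is correct and follows essentially the same route as the paper: differentiate the invariance identity $L(\widetilde{(\theta_s\circ\gamma)}(t))=L(\widetilde\gamma(t))$ in $s$, substitute the Euler--Lagrange equations, and collapse the result into a total time derivative via the product rule. The only (harmless) difference is that you evaluate at $s=0$ and so need the Euler--Lagrange equations only for $\gamma$ itself, whereas the paper applies them to $\widetilde{(\theta_s\circ\gamma)}$ at general $s$; your version is if anything slightly cleaner on that point.
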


\begin{proof}By hypothesis, for any $s\in\R$, $L(\widetilde{(\theta_s\circ\gamma)}(t))=L(\widetilde\gamma(t))$. That is, \begin{align*}
0&=\frac{\pd}{\pd s}L(\widetilde{(\theta_s\circ\gamma)}(t))\\
&=\frac{\pd}{\pd s}L\left((\theta_s\circ\gamma)(t),(\theta_s\circ\gamma)^\prime(t)\right)\\
&=\pdx L\left(\widetilde{(\theta_s\circ\gamma)}(t)\right)\left(\frac{d}{ds}(\theta_s\circ\gamma)^i(t)\right)+\frac{\pd}{\pd v^i} L\left(\widetilde{(\theta_s\circ\gamma)}(t)\right)\left(\frac{d}{ds}\left(\dt(\theta_s\circ\gamma)^i(t)\right)\right)\\
\end{align*}
By hypothesis we have $\widetilde{\theta_s\circ\gamma}$ satisfies the Euler-Lagrange equations:\[ \pdx L\left(\widetilde{(\theta_s\circ\gamma)}(t)\right)=\dt \left(\frac{\pd}{\pd v^i} L\left(\widetilde{(\theta_s\circ\gamma)}(t)\right)\right)\]Plugging in the left hand side of this equation into the above gives 
\begin{align*}
0&=\dt \left(\frac{\pd}{\pd v^i} L\left(\widetilde{(\theta_s\circ\gamma)}(t)\right)\right)\left(\frac{d}{ds}(\theta_s\circ\gamma)^i(t)\right)+\frac{\pd}{\pd v^i} L\left(\widetilde{(\theta_s\circ\gamma)}(t)\right)\left(\frac{d}{ds}\left(\dt(\theta_s\circ\gamma)^i(t)\right)\right)\\
&=\dt\left[\left(\frac{\pd}{\pd v^i}L\left(\widetilde{(\theta_s\circ\gamma)}(t)\right)\right)\left(\frac{d}{ds}(\theta_{s,\ast}(\widetilde\gamma(t)))^i\right)\right]
\end{align*}That is, $\left(\frac{\pd}{\pd v^i}L\right)\cdot\left(\frac{d}{ds}\circ\theta_{s,\ast}^i\right)$ is a conserved quantity. 
\end{proof}

\begin{remark}
By noticing that $\frac{\pd L}{\pd v^i}$ is nothing but the Legendre transform of $L$, we can give a coordinate free description of the resulting conserved quantity. That is, setting \[G:=\left(\frac{\pd L}{\pd v^i}\right)\left(\frac{d}{ds}\circ(\theta_{s,\ast})^i\right)\] we have that for arbitrary $(p,v)\in TM$ \[G(p,v)=\left(\Phi_{L_p}(v)\right)\cdot\left(\theta_\ast^{(p)}\left(\left.\frac{d}{ds}\right|_{s=0}\right)\right).\]

\end{remark}
\begin{example} \bf{(SO(3) gives Continuous Symmetries under a Central Force)} \rm
\vspace{0.1cm}

Let $F$ be a central force acting on a particle in $\R^3$ with metric $g$ and natural Lagrangian $L=K-U$. Let $\gamma$ be the motion of the particle. A \emph{one-parameter subgroup} of $SO(3)$ is a collection of maps $\{\theta_s\in SO(3)\}$, where $s\in\R$, with the property that $\theta_{s+t}=\theta_s\circ\theta_t$. By definition, for every $x\in\R^3$ we have that $\|\theta_s(x)\|=\|x\|$ and hence for each $1\leq i\leq 3$ we have that $U(\theta_s(\gamma(t)))=U(\gamma(t))$. Since each $\theta_s$ is a linear operator on $\R^3$ each map has a matrix representation, say $[\theta_s]$. Note that 
\begin{align*}
v^i(\widetilde{(\theta_s\circ\gamma)}(t))v^j(\widetilde{(\theta_s\circ\gamma)}(t))&=\dt((\theta_s\circ\gamma(t))^i)\cdot\dt((\theta_s\circ\gamma(t))^i)\\
&=\left([(\gamma^i)^\prime(t)]^T[\theta_s]^T\right)\cdot\left([\theta_s][(\gamma^i)^\prime(t)]\right)&\text{since $\theta_s$ is linear}\\
&=v^i(\widetilde\gamma(t))v^j(\widetilde\gamma(t))&\text{since $[\theta_s]$ is orthogonal}
\end{align*}
It follows that $K(\widetilde{(\theta_s\circ\gamma)}(t))=K(\widetilde\gamma(t))$. Hence $L(\widetilde{(\theta_s\circ\gamma)}(t))=L(\gamma(t))$ showing that each one parameter subgroup of $SO(3)$ is a continuous symmetry on natural Lagrangian systems under a central force. This statement easily generalizes to $\R^n$. It also generalizes to a system of $k$ particles in $\R^n$ by considering the manifold $\R^{nk}$ so that the motion of the $k$ particles is described by one curve.
\end{example}

\begin{remark} In a natural Lagrangian system for which the potential energy is zero, every element of the Gallilean group SGal$(3)$ corresponds to a continuous symmetry. In the cases where the Euler-Lagrange equations reduce to Newton's second law, saying that the elements of SGal$(3)$ are continuous symmetries is equivalent to the statement that Newton's laws are invariant under the action of elements of SGal$(3)$.
\end{remark}

\begin{example} \bf{(Rotational Invariance Gives Conservation of Angular Momentum)}\rm
\vspace{0.1cm}

Example 6.5 showed that, under a central force, each one parameter subgroup of $SO(3)$ is a continuous symmetry. Hence by Noether's theorem, each has a corresponding conserved quantity.  For example consider rotation about the $x^3$-axis. That is, consider the one parameter family given by \[\theta_s=\left[\begin{array}{c c c} \cos s &-\sin s &0\\ \sin s&\cos s &0\\ 0&0& 1\end{array}\right].\]Noether's theorem shows that for any $t\in\R$ the corresponding conserved quantity is \[\left(\Phi_{L_{\gamma(t)}}(\gamma^\prime(t))\right)\cdot\left(\theta_\ast^{(\gamma(t))}\left(\left.\frac{d}{ds}\right|_{s=0}\right)\right).\]Computing, we get that the conserved quantity is
\begin{align*}&\left[m(\gamma^1)^\prime(t),m(\gamma^2)^\prime(t),m(\gamma^3)^\prime(t)\right]\cdot\left(\left[\begin{array}{c c c} 0 &-1 &0\\ 1&0 &0\\ 0&0& 0\\ \end{array}\right]\left[\begin{array}{c}(\gamma^1)^\prime(t)\\ (\gamma^2)^\prime(t)\\ (\gamma^3)^\prime(t)\end{array}\right]\right) \\&\! \\&=m(\gamma^2)^\prime(\gamma^1)^\prime-m(\gamma^1)^\prime(\gamma^2)^\prime\end{align*}This last value is precisely the angular momentum in the $x^3$ direction. That is, rotational symmetry about the $x^3$-axis has angular momentum in the $x^3$ direction as its corresponding conserved quantity.
\end{example}

\begin{example} \bf{(Translational Invariance Gives Conservation of Momentum)}\rm
\vspace{0.1cm}

Consider a closed system in $(\R^3, K-U)$ subject to a conservative force whose potential is independent of the $x^1$-coordinate. Notice that translation in the $x^1$-direction is given by the one parameter family $\{\theta_s:\R\to\R^3, (x^1,x^2,x^3)\mapsto (x^1+s,x^2,x^3)\}$. For a motion $\gamma$, it's clear that $\frac{d}{dt}(\theta_s\circ\gamma(t))=\gamma^\prime(t)$ and so \[\widetilde{(\theta_s\circ\gamma)}(t)=(\gamma^1(t)+s,\gamma^2(t),\gamma^3(t),(\gamma^1)^\prime(t),(\gamma^2)^\prime(t),(\gamma^3)^\prime(t)).\]By our hypothesis we have that $U(\theta_s(\gamma(t)))=U(\gamma(t))$. Hence $L(\widetilde{(\theta_s\circ\gamma)}(t))=L(\widetilde\gamma(t))$. By Noether's theorem we have that
\[\left(\frac{\pd L}{\pd v^i}(\widetilde{(\theta_s\circ\gamma)}(t))\right)\cdot\left(\left.\frac{d}{ds}\right|_{s=0}(\theta_{s,\ast}(\widetilde\gamma(t)))^i\right)\]  is a conserved quantity. But this is equal to
\[\left[m(\gamma^1)^\prime(t),m(\gamma^2)^\prime(t),m(\gamma^3)^\prime(t)\right]\cdot \left[\begin{array}{c}1\\0\\0\end{array}\right]=m(\gamma^1)^\prime(t).\]That is, momentum in the $x^1$-direction is the resulting conserved quantity of translation in the $x^1$-direction. This example easily generalizes to $\R^n$. Note also that in a system with $k$ particles in $\R^n$ interacting through conservative forces, we can replace our base space with $\R^{nk}$ and study the motion of $1$ particle.

\end{example}

\begin{example} \bf{(Time Invariances Gives Conservation of Energy)} \rm
\vspace{0.1cm}

We showed in proposition 5.17 that if the Lagrangian was time independent then the total energy was conserved. However, it is not obvious how to view time translation as a continuous symmetry. To see how this can be done, suppose first that in a Lagrangian system $(M,L)$ the Lagrangian is time dependent. That is, suppose $L\in C^\infty(T(M\times \R))$. Consider the Lagrangian system $(M\times \R, L_1)$ where $L_1\in C^\infty (T(M\times\R))$ is defined as follows. If $(x^1,\dots,x^n,v^1,\dots,v^n)$ are the induced coordinates on $TM$ and $(t,u)$ are the induced coordinates on $T\R$ then define $L_1$ by \[L_1:T(M\times\R) \to\R \ \ \ \ \ \ \ (t,x,u,v)\mapsto L(t,x,\frac{v}{u})u.\]Notice that if we have a motion $\gamma:\R\to M$, then the curve \[\hat\gamma:=\alpha:\R\to \R\times M \ \ \ \ \ \ \ \varepsilon\mapsto (\varepsilon,\gamma(\varepsilon))\]is a motion in $(\R\times M,L_1)$. This is because $\widetilde\alpha(\varepsilon)=(\varepsilon,\gamma(\varepsilon),1,\gamma^\prime(\varepsilon))$ and\[L_1(\varepsilon,\gamma(\varepsilon),1,\gamma^\prime(\varepsilon))=L(\varepsilon,\gamma(\varepsilon),\gamma^\prime(\varepsilon)).\] Hence it is justified to call $\gamma:\R\to M$ a motion in $(M,L)$ if $\hat\gamma=\alpha$ is a motion in $(\R\times M, L_1)$. Suppose that $\{\theta_s:\R\times M\to\R\times M\}$ is a continuous symmetry in $(\R\times M,L_1)$. Then by Noether's theorem we have that the following is a conserved quantity in $(\R\times M,L_1)$; \[\frac{\pd L_1}{\pd u}(t,x,u,v)\cdot \frac{d}{ds}\theta_s^0+\frac{\pd L_1}{\pd v^i}L(t,x,u,v)\cdot\frac{d}{ds}\theta_s^i\]which is \[L(t,x,\frac{v}{u})\frac{d}{ds}\theta_s^0-\frac{v^i}{u}\frac{\pd L}{\pd v^i}(t,x,\frac{v}{u})\frac{d}{ds}\theta_s^0+\frac{\pd L}{\pd v^i}(t,x,\frac{v}{u})\frac{d}{ds}\theta_s^i.\] If the Lagrangian is time dependent we define conserved quantities as follows. Given a motion $\gamma$ in $(M,L)$ we get the motion $\alpha=\hat\gamma$ in $(\R\times M,L_1)$ defined above. We observed that a continuous symmetry / conserved quantity in $(\R\times M,L_1)$ is also a continuous symmetries / conserved quantity in $(M,L)$ since $L_1=L$ when we set $u=1$. By Noether's theorem, a continuous symmetry $\{\theta_s\}$ in $(\R \times M,L_1)$ gives a conserved quantity in $(\R\times M,L_1)$ which in turn gives a conserved quantity, $G$, in $(M,L)$ by setting $u=1$. That is we have that $\{\theta_s\}$ is a continuous symmetry in $(M,L)$ with corresponding conserved quantity $G$. We can now view time translation as a continuous symmetry and compute its corresponding conserved quantity. It's clear that the continuous symmetry of $(\R\times M, L_1)$ given by $\{\theta_s:\R\times M\to \R\times M \ ,\ (t,x,u,v)\mapsto (t+s,x,u,v)\}$ is representing time translation. Now suppose that $\gamma:\R\to M$ is a motion in $(M,L)$ and $L$ is time independent. Consider the induced motion $\hat\gamma=\alpha:\R\to \R\times M$. We have that 
\begin{align*}
L_1(\widetilde{(\theta_s\circ\alpha)}(t)&=L_1(\varepsilon+s,\gamma(\varepsilon),1,\gamma^\prime(\varepsilon))\\
&=L(\varepsilon+s,\gamma(\varepsilon),\gamma^\prime(\varepsilon)))\\
&=L(\varepsilon,\gamma(\varepsilon),\gamma^\prime(\varepsilon))&\text{since $L$ is time independent}\\
&=L_1(\varepsilon,\gamma(\varepsilon),1,\gamma^\prime(\varepsilon))\\
&=L_1(\widetilde\alpha)
\end{align*}
But we have that $\frac{d}{ds}\theta_s^i=0$ for all $i\geq 1$. As well, $\frac{d}{ds}\theta_s^0=1$ and so the conserved quantity is just $L(t,x,\frac{v}{u})-\frac{v^i}{u}\frac{\pd L}{\pd v^i}(t,x,\frac{v}{u})$. By setting $u=1$, it follows that the conserved quantity in $(M,L)$ is $L(t,x,v)-v^i\frac{\pd L}{\pd v^i}(t,x,v)=-L^\ast$. But in a natural Lagrangian system, Proposition 5.15 showed that $L^\ast$ is the total energy.
\end{example}

\subsection{Noether's Theorem in Hamiltonian Mechanics}


Fix a Hamiltonian system $(X,\omega,H)$. We first define the notions of continuous symmetry and conserved quantity in the symplectic setting.

\begin{definition} A \bf{continuous symmetry} \rm is a vector field $W\in\Gamma(TX)$ such that \[\mathcal L_W\omega=0 \ \ \text{ and } \ \ \mathcal L_WH=0\] 
\end{definition}

\begin{definition} A \bf{conserved quantity (}\rm or \bf{constant of motion)} \rm is a function $f\in C^\infty(X)$ that Poisson commutes with $H$. By the antisymmetry of the Poisson bracket, this means that \[ \{f,H\}=0=\{H,f\}.\]
\end{definition}
Since $\{H,f\}=\omega(V_H,V_f)=(V_H\hk \omega)(V_f)=(dH)(V_f)=V_fH=\mathcal L_{V_f}H$, we have that $f$ Poisson commutes with $H$ if and only if \[ \mathcal L_{V_f}H=V_fH=0=V_Hf=\mathcal L_{V_H}f.\]

\begin{theorem} \bf{(Noether)} \rm If $W\in\Gamma(TX)$ is a continuous symmetry, then $W$ is locally Hamiltonian and its Hamiltonian function is a constant of the motion. Conversely, given a constant of the motion $f\in C^\infty(X)$, its Hamiltonian vector field, $V_f$, is a continuous symmetry.
\end{theorem}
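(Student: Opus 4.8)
The plan is to pass between the Lie-derivative formulation of the two hypotheses and the interior-product/Poisson-bracket formulation, leaning throughout on Cartan's magic formula together with the closedness of $\omega$.

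For the forward direction, I would first feed the condition $\mathcal{L}_W\omega = 0$ through Cartan's magic formula: since $d\omega = 0$, we have $\mathcal{L}_W\omega = d(W\hk\omega) + W\hk d\omega = d(W\hk\omega)$, so $\mathcal{L}_W\omega = 0$ says precisely that $W\hk\omega$ is closed, i.e. that $W$ is symplectic. To upgrade this to \emph{locally} Hamiltonian I would invoke the Poincar\'e lemma: on any contractible neighbourhood the closed $1$-form $W\hk\omega$ is exact, say $W\hk\omega = dg$, and then non-degeneracy of $\omega$ forces $W = V_g$ there, so $g$ is the sought local Hamiltonian function. To check that $g$ is a constant of the motion I would compute, using antisymmetry of $\omega$ and the second hypothesis $\mathcal{L}_W H = WH = 0$,
\[
\{g,H\} = \omega(V_g,V_H) = \omega(W,V_H) = -(V_H\hk\omega)(W) = -dH(W) = -\mathcal{L}_W H = 0.
\]

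For the converse, let $f$ be a constant of the motion, so $\{f,H\} = 0$. Its Hamiltonian vector field satisfies $V_f\hk\omega = df$, which is exact and hence closed, so Cartan's formula again gives $\mathcal{L}_{V_f}\omega = d(V_f\hk\omega) = d(df) = 0$. For the second defining condition of a symmetry I would use the bracket identity $\mathcal{L}_{V_f}H = V_f H = \{H,f\}$, which by antisymmetry of the Poisson bracket equals $-\{f,H\} = 0$. Hence $V_f$ satisfies both conditions and is a continuous symmetry.

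The only real subtlety is the word \emph{locally}: $\mathcal{L}_W\omega = 0$ makes $W\hk\omega$ closed but not necessarily exact, so the Hamiltonian function $g$ is produced only on contractible patches by the Poincar\'e lemma, and the earlier torus example of a symplectic vector field that is not Hamiltonian shows it genuinely need not exist globally. Everything else is bookkeeping; the one point demanding care is consistency of sign conventions, in particular that $\{f,g\} = \omega(V_f,V_g) = V_g f$, so that $\mathcal{L}_{V_f}H$ pairs with $\{H,f\}$ rather than with $\{f,H\}$.
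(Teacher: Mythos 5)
Your argument is correct and follows essentially the same route as the paper's own proof: Cartan's magic formula plus closedness of $\omega$ to show $W\hk\omega$ is closed, the Poincar\'e lemma to produce the local Hamiltonian function, and the identity relating $\mathcal{L}_{V_f}H$ to the Poisson bracket for both directions. Your sign bookkeeping is consistent with the paper's convention $\{f,g\}=\omega(V_f,V_g)=V_g f$, so nothing further is needed.
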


\begin{proof} Let $W$ be a continuous symmetry. By hypothesis, \[0=\mathcal L_W\omega=d(W\hk \omega)+W\hk d\omega=d(W\hk\omega)\]Hence, by Poincare's lemma, around every point there exists a neighbourhood $U$ and a function $f\in C^\infty(U)$ such that $W\hk\omega=df$. That is, locally $W=V_f$ so that $W$ is locally Hamiltonian. By hypothesis, $\mathcal L_WH=0=\mathcal L_{V_f}H$ so that $\{f,H\}=0=\{H,f\}$. Conversely, let $f\in C^\infty(X)$ be a conserved quantity so that $\{f,H\}=0=\{H,f\}$. Consider the corresponding vector field $V_f$. It was just shown that $\mathcal L_{V_f}H=0$ while \[\mathcal L_{V_f}\omega=d(V_f\hk\omega)+V_f\hk d\omega=d(V_f\hk\omega)=d(df)=0\]
\end{proof}

\begin{example}\bf{(Symmetries on the $2$-Torus)} \rm
\vspace{0.1cm}

Consider the $2$-torus $\TT^2$ with local coordinate chart $(U,\theta,\varphi)$. Here $U=A\times B$, where $(A,\theta)$ and $(B,\varphi)$ are two different local coordinate charts on $S^1$. Consider the Hamiltonian system $(U,\omega=d\theta\wedge d\varphi, H)$ where $H\in C^\infty (TU)$ is defined by $H(\theta,\varphi):=\theta$. It follows $dH=d\theta$ and so the Hamiltonian vector field is $V_H=-\frac{\pd}{\pd\varphi}$. Consider the vector field $W=\frac{\pd}{\pd\varphi}\in\Gamma(TU)$. The flow of $W$ is \[\theta_s:U\to U \ , \ (\theta,\varphi)\mapsto (\theta,\varphi+s).\] By Cartan's magic formula, we have that \[\mathcal L_{\frac{\pd}{\pd\varphi}}\omega=d(d\theta)+\frac{\pd}{\pd\varphi}\hk d\omega=0\] and \[\mathcal L_WH=\frac{\pd\theta}{\pd\varphi}=0.\] That is, $W$ is a continuous symmetry in this Hamiltonian system. Since $W\hk \ \omega=d\theta$ it follows that on $U$ we have $W=V_f$ where $f:U\to \R \ , \ (\theta,\varphi)\mapsto \theta$. That is, $f=H$ and hence $\{f,H\}=0$. That is, $f$ is the corresponding conserved quantity. Notice that $f=H$ is not a global function on $\TT^2$ and so we can only say that $W$ is locally Hamiltonian.
\end{example}

Noether's theorem in the Hamiltonian setting says that each continuous symmetry on a symplectic manifold is locally Hamiltonian. That is, if $W\in\Gamma(TX)$ is a continuous symmetry, then for each $p\in X$ there exists an open set $U_p$ and $f_p\in C^\infty(U_p)$ (the conserved quantity) such that $W=V_{f_p}$ on $U_p$. However, when the symplectic manifold is a cotangent bundle, the following proposition gives a condition on when the conserved quantity is global.
 
 \begin{proposition}
 Given a manifold $M$, we know that the cotangent bundle $(T^\ast M,\omega=-d\alpha)$ is a symplectic manifold, where $\omega$ is the tautological $2$-form. Fix a Hamiltonian function $H\in C^\infty(T^\ast M)$. Let $W\in\Gamma(T(T^\ast M))$ be a continuous symmetry in the Hamiltonian system $(T^\ast M,\omega,H)$ which preserves that tautological $1$-form $\alpha$. Then the corresponding conserved quantity (Hamiltonian function) is $\alpha(W)$. Note that this conserved quantity is globally defined. In other words, if $W\in\Gamma(T(T^\ast M))$ is a continuous symmetry which preserves $\alpha$, then \[W=V_{\alpha(W)}.\]
 \end{proposition}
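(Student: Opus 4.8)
The goal is to identify $W$ with the Hamiltonian vector field of the globally-defined function $\alpha(W) = W \hk \alpha$. By the definition of the symplectic dual (Definition 2.10), the Hamiltonian vector field $V_{\alpha(W)}$ is characterized as the unique vector field satisfying $V_{\alpha(W)} \hk \omega = d(\alpha(W))$, so proving $W = V_{\alpha(W)}$ amounts to verifying the single identity $W \hk \omega = d(\alpha(W))$. The plan is to extract this identity directly from the hypothesis that $W$ preserves the tautological $1$-form, i.e. $\mathcal{L}_W \alpha = 0$, by expanding the Lie derivative with Cartan's magic formula. This is exactly the extra input beyond the bare symmetry condition $\mathcal{L}_W \omega = 0$ of Theorem 6.12, and it is what will make the primitive explicit and global.

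First I would write Cartan's formula $\mathcal{L}_W \alpha = d(W \hk \alpha) + W \hk d\alpha$. Here $W \hk \alpha = \alpha(W)$ is a smooth function, whose exterior derivative is $d(\alpha(W))$, and since $\omega = -d\alpha$ we have $W \hk d\alpha = -\,W \hk \omega$. Substituting these and imposing $\mathcal{L}_W \alpha = 0$ gives $0 = d(\alpha(W)) - W \hk \omega$, that is, $W \hk \omega = d(\alpha(W))$. This is precisely the defining equation of the Hamiltonian vector field attached to $\alpha(W)$, so $W = V_{\alpha(W)}$. Since both $\alpha$ and $W$ are globally defined on $T^\ast M$, the function $\alpha(W)$ is global, which is exactly what upgrades the merely local conclusion of Theorem 6.12 to a globally defined conserved quantity.

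It remains to confirm that $\alpha(W)$ really is a conserved quantity, i.e. $\{\alpha(W), H\} = 0$. This is immediate once $W = V_{\alpha(W)}$ is established: because $W$ is a continuous symmetry we have $\mathcal{L}_W H = 0$, and by the computation recorded just after Definition 6.11 one has $\{H, \alpha(W)\} = V_{\alpha(W)} H = W H = \mathcal{L}_W H = 0$, whence $\{\alpha(W), H\} = 0$ by antisymmetry of the Poisson bracket. I do not expect any genuine obstacle: the whole argument is a one-line application of Cartan's formula, and the only point worth emphasizing is that it is the preservation of $\alpha$ — not merely of $\omega$ — that produces the explicit global primitive $\alpha(W)$ for $W \hk \omega$.
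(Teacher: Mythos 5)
Your proof is correct and follows exactly the paper's argument: the single application of Cartan's magic formula to $\mathcal{L}_W\alpha=0$, giving $d(\alpha(W))=-W\hk d\alpha=W\hk\omega$, is precisely the paper's entire proof. Your additional closing paragraph verifying $\{\alpha(W),H\}=0$ via $\mathcal{L}_W H=0$ is a harmless (and slightly more complete) elaboration of what the paper leaves implicit from Theorem 6.12.
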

 
 \begin{proof}
 Suppose that $W\in\Gamma(T(T^\ast M))$ is a continuous symmetry such that $\mathcal L_W\alpha=0$. By Cartan's magic formula this means that \[d(W\hk\alpha)=-W\hk d\alpha=W\hk\omega.\]
 \end{proof}
 
 \del
 {
  That is, $W=V_\alpha(W)$. Conversely, suppose that $f\in C^\infty(T^\ast M)$ is a conserved quantity of the form $f=\alpha(Y)$ for some $Y\in\Gamma(T(T^\ast M))$. Then the Hamiltonian vector field corresponding to $f$ is \[V_f=V_{\alpha(Y)}.\]By  Noether's theorem we have that $V_{\alpha(Y)}$ is a continuous symmetry. Furthermore, \[\mathcal L_{V_{\alpha(Y)}}\alpha=d(V_{\alpha(Y)}\hk\alpha)-(V_{\alpha(Y)}\hk \omega)=d(V_{\alpha(Y)}\hk\alpha)-d(\alpha(Y))=0.\]
 
}

\subsection{Noether's Theorem Under the Legendre Transform}

This subsection gives another example of how the Legendre transform translates statements between Lagrangian and Hamiltonian mechanics. That is, we show how the Legendre transform translates the statements of Noether's theorem.

\begin{theorem}Let $(M,L)$ be a Lagrangian system where $L\in C^\infty(TM)$ is strongly convex. Suppose that $\{\theta_s:M\to M, s\in\R\}$ is a continuous symmetry so that the corresponding conserved quantity is $F:=\left(\frac{\pd}{\pd v^i}L\right)\left(\frac{d}{ds}\circ(\theta_{s,\ast}^i)\right)$. This continuous symmetry generates a vector field $W\in\Gamma(TM)$. The claim is that the vector field $W_\sharp\in\Gamma(T(T^\ast M))$, as defined in Lemma 2.24, is a continuous symmetry in the induced Hamiltonian system $(T^\ast M,\omega, H=L^\ast)$ and the corresponding conserved quantity is $F\circ\Phi_L^{-1}=F\circ\Phi_{H}$. 
\vspace{0.1cm}

Conversely, given an arbitrary manifold $M$, consider the Hamiltonian system $(T^\ast M, \omega=-d\alpha, H)$ for some arbitrary strongly convex Hamiltonian $H\in C^\infty (T^\ast M)$. Suppose that $W\in\Gamma(T(T^\ast M))$ is a continuous symmetry which preserves $\alpha$. Let $G$ denote the corresponding conserved quantity. If $\{\psi_s:T^\ast M\to T^\ast M, s\in\R\}$ is the flow of $W$ then by Theorem 2.25 there exists a family of diffeomorphisms $\{\theta_s:M\to M\}$ such that $\theta_s^\sharp=\psi_s$. The claim is that $\{\theta_s:M\to M, s\in\R\}$ is a continuous symmetry in $(M,L:=H^\ast)$ and that the corresponding conserved quantity is $G\circ\Phi_H^{-1}=G\circ\Phi_L$.
\end{theorem}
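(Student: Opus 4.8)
The plan is to reduce both directions to a single fibrewise fact about Legendre duality: if $A:V\to V'$ is a linear isomorphism and $g\in C^\infty(V')$, then the dual functions satisfy $(g\circ A)^\ast=g^\ast\circ(A^\ast)^{-1}$, which is immediate from the definition of the dual function by the substitution $w=Av$ in the supremum. Applying this fibrewise with $A=(\theta_s)_{\ast,p}:T_pM\to T_{\theta_s(p)}M$, whose inverse–adjoint $(A^\ast)^{-1}=((\theta_s)^\ast)^{-1}$ is exactly the fibre map of the cotangent lift $\theta_{s,\sharp}$, gives the key equivalence: $L$ is invariant under the tangent lift, $L\circ\widetilde{\theta_s}=L$ (i.e. $L_p=L_{\theta_s(p)}\circ(\theta_s)_{\ast,p}$ fibrewise), if and only if $H=L^\ast$ is invariant under the cotangent lift, $H\circ\theta_{s,\sharp}=H$. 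Here the reverse implication uses the involutivity $H^\ast=L$ from Corollary 5.13 together with $(A^\ast)^\ast=A$. This single statement carries the invariance of the Hamiltonian back and forth, and is where essentially all the content lies.

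For the forward direction, I would first check that $W_\sharp$ is a continuous symmetry of $(T^\ast M,\omega,H)$. By Lemma 2.24 the flow of $W_\sharp$ is $\theta_{s,\sharp}$, which is a symplectomorphism by Corollary 2.23, so $\theta_{s,\sharp}^\ast\omega=\omega$ and hence $\mathcal L_{W_\sharp}\omega=0$; moreover Proposition 2.22 gives $\theta_{s,\sharp}^\ast\alpha=\alpha$, so $\mathcal L_{W_\sharp}\alpha=0$ as well. The hypothesis $L\circ\widetilde{\theta_s}=L$ feeds into the duality fact above to yield $H\circ\theta_{s,\sharp}=H$, whence $\mathcal L_{W_\sharp}H=0$, so $W_\sharp$ is a continuous symmetry preserving $\alpha$. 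Proposition 6.12 then identifies its conserved quantity as $\alpha(W_\sharp)$. To evaluate it I would use that $\theta_{s,\sharp}$ covers $\theta_s$ (diagram $(\dagger)$), so $\pi\circ\theta_{s,\sharp}=\theta_s\circ\pi$ and differentiating at $s=0$ gives $\pi_\ast(W_\sharp)_{(p,\xi)}=W_p$; then by the definition of the tautological form $\alpha(W_\sharp)_{(p,\xi)}=\xi(\pi_\ast W_\sharp)=\xi(W_p)$. Comparing with the coordinate-free form $F(p,v)=\Phi_{L_p}(v)\cdot W_p$ of Remark 6.4 and using $\Phi_L^{-1}=\Phi_H$ (Theorem 5.9), this is exactly $\alpha(W_\sharp)=F\circ\Phi_L^{-1}=F\circ\Phi_H$.

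For the converse, Theorem 2.25 applied to each $\psi_s$ (a symplectomorphism preserving $\alpha$, since $\mathcal L_W\alpha=0$) produces $\theta_s:M\to M$ with $\theta_{s,\sharp}=\psi_s$; the uniqueness in Theorem 2.25, via $\theta_s(p)=\pi(\psi_s(p,0_p))$, together with functoriality of the lift makes $\{\theta_s\}$ a one-parameter family that is smooth in $s$. Since $W$ is a continuous symmetry we have $H\circ\theta_{s,\sharp}=H$, and the duality fact (now in the reverse implication, using $L=H^\ast$) gives $L\circ\widetilde{\theta_s}=L$, so $\{\theta_s\}$ is a continuous symmetry of $(M,L=H^\ast)$. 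For the conserved quantity, Proposition 6.12 gives $G=\alpha(W)$; writing $W=(W^M)_\sharp$ for the base generator $W^M$ of $\{\theta_s\}$, the same tautological-form computation yields $G(p,\xi)=\xi(W^M_p)$. The Noether quantity attached to $\{\theta_s\}$ by Theorem 6.3 is $F(p,v)=\Phi_{L_p}(v)\cdot W^M_p$, so $F=G\circ\Phi_L=G\circ\Phi_H^{-1}$, as claimed.

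The main obstacle is the duality equivalence $L\circ\widetilde{\theta_s}=L\iff H\circ\theta_{s,\sharp}=H$: one must carefully match the fibre map of the cotangent lift with the inverse–adjoint $(A^\ast)^{-1}$ appearing in $(g\circ A)^\ast=g^\ast\circ(A^\ast)^{-1}$, handle the fact that $L^\ast$ is a priori defined only on $\Phi_L(TM)$ (so that the relevant suprema are attained, via strong convexity and Proposition 5.8), and invoke involutivity for the reverse implication. The remaining points — the Lie-derivative computations, that $\theta_{s,\sharp}$ covers $\theta_s$, and the evaluation of $\alpha(W_\sharp)$ — are routine given the results already proved.
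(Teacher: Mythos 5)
Your proposal is correct, but the route you take through the central step is genuinely different from the paper's. Where you transfer the invariance of the Lagrangian to the invariance of the Hamiltonian (and back) by the single fibrewise identity $(g\circ A)^\ast = g^\ast\circ (A^\ast)^{-1}$ — so that $L\circ\widetilde\theta_s = L$ is equivalent to $H\circ\theta_{s,\sharp}=H$ by pure Legendre duality plus involutivity — the paper never proves this equivalence directly. For the forward direction it instead argues dynamically: having identified $\alpha(W_\sharp)=F\circ\Phi_H$ and hence $W_\sharp = V_{F\circ\Phi_H}$, it invokes the correspondence between motions of $(M,L)$ and integral curves of $V_H$ (Theorems 5.18--5.19) to conclude that $F\circ\Phi_H$ is constant along the flow of $V_H$, so $\{F\circ\Phi_H,H\}=0$ and therefore $\mathcal L_{W_\sharp}H=0$. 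For the converse it uses a different trick again: writing $L = Y(H)-H$ with $Y=\xi_i\frac{\pd}{\pd\xi_i}$ (minus the symplectic dual of $\alpha$), and using that $\Psi_s$ preserves both $Y$ (Claim 2.26, since it preserves $\alpha$) and $H$. Your approach buys symmetry and economy — one algebraic lemma handles both directions, and you avoid both the motion/integral-curve correspondence and the Euler vector field computation — at the price of the adjoint bookkeeping and the domain issues for $L^\ast$ that you correctly flag (and which the paper's own treatment of the dual function also glosses over). The paper's forward argument buys a proof that leans only on results already established about how the Legendre transform carries motions to integral curves, at the price of being less direct. The remaining ingredients of your proof — the Lie-derivative computations via Proposition 2.22 and Corollary 2.23, the identification of the conserved quantity as $\alpha(W_\sharp)$ via the proposition on symmetries preserving $\alpha$ (Proposition 6.14 in the paper's numbering, not 6.12), and the evaluation $\alpha(W_\sharp)_{(p,\xi)}=\xi(W_p)=F\circ\Phi_H(p,\xi)$ — coincide with the paper's.
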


\begin{proof} Given a continuous symmetry $\{\theta_s:M\to M \ , \ s\in\R\}$ in $(M,L)$, let $F=\left(\frac{\pd L}{\pd v^i}\right)\left(\frac{d}{ds}\circ\theta_{s,\ast}^i\right)$ be the corresponding conserved quantity. Let $W$ be the infinitesimal generator of $\{\theta_s\}$ and $W_\sharp$ denote its lift. We are trying to show that $W_\sharp$ is a continuous symmetry in $(T^\ast M,\omega,H)$. By Proposition 2.22 it follows \[\mathcal L_{W_\sharp}\alpha=0 \ \ \text{ and } \ \ \mathcal L_{W_\sharp}\omega=0\]It remains to show that $\mathcal L_{W_\sharp}H=0$. By Theorem 6.14 we know that $W_\sharp=V_{\alpha(W_\sharp)}$. However, $W_\sharp\hk\alpha=F\circ\Phi_L^{-1}=F\circ\Phi_H$ since for arbitrary $(p,\xi_p)\in T^\ast M$ by definition 
\begin{align*}
F\circ\Phi_L^{-1}(p,\xi_p)&=\left(\Phi_L(\Phi_L^{-1}(\xi_p))\right)\left(\theta^{(p)}_\ast\left(\left.\frac{d}{ds}\right|_{s=0}\right)\right)\\
&=\xi_p\left(W_p\right)\\
&=\xi_p\left(\pi_\ast((W_\sharp)_{(p,\xi_p)})\right)\\
&=\alpha(W_\sharp)(p,\xi_p)
\end{align*}
Thus, showing that $\mathcal L_{W_\sharp}H=0$ is equivalent to showing, by Proposition 4.12, that for any integral curve $\Psi$ of $V_H$ we have $\dt (F\circ\Phi_L^{-1})(\Psi(t))=0$. But by Theorem 5.17, any integral curve of $V_H$ is of the form $(\Phi_L\circ\widetilde\alpha)(t)$ for some motion $\alpha(t)$ in $(M,L)$. But then \[\dt F\circ\Phi_L^{-1}(\Phi_L(\alpha(t)))=\dt F(\alpha(t))=0\]since $F$ is a conserved quantity. All of this shows that $W_\sharp$ is a conserved quantity in $(T^\ast M,\omega, L^\ast)$ with globally defined conserved quantity $F\circ \Phi_{H}$.\\



Conversely, consider a Hamiltonian system of the form $(T^\ast M, \omega =-d\alpha, H)$ for some strongly convex $H\in C^\infty(T^\ast M)$. Let $W\in\Gamma(T(T^\ast M))$ be a continuous symmetry whose flow preserves $\alpha$. Then by Theorem 6.14 we have that the corresponding conserved quantity is $G=W\hk\alpha$. 
Furthermore, by Theorem 2.25 , if $\{\Psi_s:T^\ast M\to T^\ast M, \ s\in\R\}$ is the flow of $W$ then there exists a family $\{\theta_s:M\to M,\ s\in\R\}$ such that $\Psi_s=\theta_s^\sharp$. That is, each $\Psi_s$ is defined by \[\Psi_s:T^\ast M\to T^\ast M \ \ \ \ \ \ \ (p,\xi)\mapsto (\theta_s(p),(\theta_s^\ast)^{-1}(\xi)).\]\del{By hypothesis, we have that for all $s\in\R$ and $(x,\xi)\in T^\ast M$ \[H(x,\xi)=(\Psi_s^\ast H)(x,\xi)=H(\theta_s(x),(\theta_s^\ast)^{-1}(\xi)).\] But since this holds for all $s\in\R$ it follows that $\theta_s^\ast H=H$.} We want to show that the family $\{\theta_s:M\to M,\ s\in\R\}$ is a continuous symmetry in the induced Lagrangian setting $(M,L=H^\ast)$. Consider the vector field \[Y=\xi_i\frac{\pd}{\pd\xi_i}\in\Gamma(T(T^\ast M)),\] which is negative the symplectic dual of $\alpha$. By Claim 2.26 we have that \[\Psi_{s,\ast}Y=Y,\] while by hypothesis \[\Psi_s^\ast H=H.\]By definition
\begin{align*}L(x,v)&=H^\ast(x,v)\\
&=\xi_i\frac{\pd H}{\pd\xi_i}-H(x,\xi)\\
&=Y(H)-H(x,\xi).
\end{align*}Hence
\begin{align*}
(\Psi_s^\ast L)(x,v)&=\Psi_s^\ast(Y(H))-\Psi_s^\ast(H(x,\xi))\\
&=\Psi_s^\ast(Y(H))-H(x,\xi)
\end{align*}Thus it suffices to show that $\Psi_s^\ast(Y(H))=Y(H)$. Indeed, for arbitrary $p\in M$ we have that
\begin{align*}
(\Psi_s^\ast(Y(H)))_p&=((Y(H))\circ \Psi_s)(p)\\
&=Y_{\Psi_s(p)}H\\
&=(dH)_{\Psi_s(p)}(Y_{\Psi_s(p)})\\
&=(dH)_{\Psi_s(p)}((\Psi_{s,\ast}Y)_p)&\text{since $\Psi_{s,\ast}Y=Y$}\\
&=(\Psi_s^\ast(dH))_p(Y_p)\\
&=(d(\Psi_s^\ast H))_p(Y_p)\\
&=(dH)_p(Y_p)&\text{since $\Psi_s^\ast H=H$}\\
&=Y_p(H)
\end{align*}

\del
{ 
where this Lagrangian system is equipped with the induced coordinates $x^1,\dots,x^n,v^1,\dots,v^n$ where $v^i=\frac{\pd H}{\pd\xi_i}$. Notice that since $\theta_s:M\to M$ is a diffeomorphism, if we set $\widetilde x=\theta_s(x)$ and $\widetilde \xi=(\theta_{s,x}^\ast)^{-1}(\xi)$ we have that $(\widetilde x,\widetilde \xi)$ is another coordinate system on $T^\ast M$ and the chain rule shows that  \[\widetilde\xi_j=\frac{\pd x^i}{\pd \widetilde x^j}\xi_i\]and \[\frac{\pd \xi_j}{\pd\widetilde\xi_i}=\frac{\pd\widetilde x^i}{\pd x^j}.\] It follows that \[\xi_j\frac{\pd H}{\pd\xi_j}=\widetilde\xi_i\frac{\pd H}{\pd\widetilde\xi_i}.\]

}

But by hypothesis, $\Psi_s$ is the inverse of the pullback of $\theta_s$.  Thus we have shown that $(\widetilde\theta_s)^\ast L=L$ for all $s\in\R$. That is, $\{\theta_s:M\to M, \ s\in\R\}$ is a continuous symmetry in the Lagrangian system $(M,L)$. Noether's theorem shows that the corresponding conserved quantity is $F=\left(\frac{\pd L}{\pd v^i}\right)\left(\frac{d}{ds}\circ\theta_{s,\ast}^i\right)$. As in the proof of the converse, we have that $F\circ\Phi_H=\alpha(X)$. That is \[F=\alpha(X)\circ \Phi_L.\]

\del{

For $\{\theta_s\}$ to be a continuous symmetry in $(M,H^\ast)$ it needs to be shown that for a motion $\gamma:\R\to M$ \[L(\theta_{s,\ast}(\gamma(t),\gamma^\prime(t)))=L((\gamma(t),\gamma^\prime(t)))\]But since $\gamma$ is a motion the theorem shows that $\Phi_L\circ\widetilde\gamma$ is an integral curve for $X_H$. But by hypothesis $\mathcal L_XH=0$ so that each $\theta_s^\sharp=\Psi_s$ preserves integral curves. So in particular\[\theta_s^\sharp\left(\Phi_{L_{\gamma(t)}}(\gamma(t),\gamma^\prime(t))\right)=\Phi_{L_{\gamma(t)}}(\gamma(t),\gamma^\prime(t))\]We then get the following chain of implications 

\begin{align*}
\theta_s^\sharp\left(\Phi_{L_{\gamma(t)}}(\gamma(t),\gamma^\prime(t))\right)=\Phi_{L_{\gamma(t)}}(\gamma(t),\gamma^\prime(t))&\Longrightarrow \pi\circ\theta_s^\sharp\left(\Phi_{L_{\gamma(t)}}(\gamma(t),\gamma^\prime(t))\right)=\pi\circ\Phi_{L_{\gamma(t)}}(\gamma(t),\gamma^\prime(t))\\
&\Longrightarrow \pi\circ\theta_s^\sharp\left(\Phi_{L_{\gamma(t)}}(\gamma(t),\gamma^\prime(t))\right)=\gamma(t)\\
&\Longrightarrow \theta_s\circ\pi\left(\Phi_{L_{\gamma(t)}}(\gamma(t),\gamma^\prime(t))\right)=\gamma(t)\\
&\Longrightarrow\theta_s(\gamma(t))=\gamma(t)\\
&\Longrightarrow\theta_{s,\ast}(\gamma^\prime(t))=\gamma^\prime(t)
\end{align*}It thus follows that $\{\theta_s:M\to M,s\in\R\}$ is a continuous symmetry in $(M,H^\ast)$. Noether's theorem says that the corresponding conserved quantity is $F:=\left(\frac{\pd L}{\pd y^i}\right)\left(\frac{d}{ds}\theta_s^i\right)$. But from the proof of the converse, we know that $F\circ\Phi_L^{-1}=G$ and so $F=G\circ\Phi_L=G\circ\Phi_{H^\ast}=G\circ\Phi_H^{-1}$. That is, the corresponding conserved quantity in $(M,H^\ast)$ is $G\circ\Phi_H^{-1}$.}

\end{proof}

\begin{remark}
In summary we have shown that, given a continuous symmetry in $(M,L)$ with corresponding conserved quantity $F$, there is a corresponding continuous symmetry in the Hamiltonian system $(T^\ast M,\omega,H=L^\ast)$ which preserves $\alpha$ and has conserved quantity $F\circ\Phi_H$. Conversely, given a continuous symmetry in a Hamiltonian system of the form $(T^\ast M,\omega,H)$ which preserves $\alpha$ and has corresponding conserved quantity $G$, there is a a corresponding continuous symmetry in $(M,L=H^\ast)$ with conserved quantity $G\circ \Phi_L$. In subsection 6.5 we will consider what happens when we relax the definitions of continuous symmetry and conserved quantities.
\end{remark}




\del{

\begin{remark} This correspondence of Noether's theorem in Hamiltonian and Lagrangian mechanics is indeed one-to-one. That is \[\{\theta_s:M\to M,s\in\R\}\mapsto X\mapsto X_\sharp\mapsto\{\theta_s^\sharp:T^\ast M\to T^\ast M\}\mapsto\{\theta_s:M\to M,s\in\R\}\]and\[F=\left(\frac{\pd L}{\pd y^i}\right)\left(\frac{d}{ds}\theta_s^i\right)\mapsto F\circ\Phi_L^{-1}\mapsto F\circ\Phi_L^{-1}\circ\Phi_H^{-1}=F\circ\Phi_L^{-1}\circ\Phi_{H^\ast}=F\circ\Phi_L^{-1}\circ\Phi_L=F\]\\

\noindent On the other hand \[\Gamma(T(T^\ast M))\owns Y\mapsto \{\Psi_s:T^\ast M\to T^\ast M\}\mapsto \{\theta_s:M\to M ; \text{ such that } \theta_s^\sharp=\Psi_s\}\mapsto\{\theta_s^\sharp:T^\ast M\to T^\ast M\}\mapsto Y\]and\[G:=Y\hk\alpha\mapsto G\circ\Phi_H^{-1}\mapsto G\circ\Phi_H^{-1}\circ\Phi_L^{-1}=G\circ\Phi_H^{-1}\circ\Phi_{L^\ast}=G\circ\Phi_H^{-1}\circ\Phi_H=G\]
\end{remark}


\begin{example}\bf{(Rotational Invariance and Conservation of Angular Momentum)}\rm
\vspace{0.1cm}

Should be easy, TOO LAZY ASK CAM!!

\end{example}
}

\begin{example}\bf{(Translational Invariance and Conservation of Momentum)}\rm
\vspace{0.1cm}

In example 6.8 we considered the natural Lagrangian system $(\R^3,K-U)$, where $U$ was assumed to be independent of $x^1$, with the continuous symmetry \[\{\theta_s:\R^3\to\R^3 \ , \ (x^1,x^2,x^3)\mapsto (x^1+s,x^2,x^3)\}.\]We saw that as a consequence of this symmetry we got conservation of momentum. We can also see this by using Theorem 6.15, and converting to the Hamiltonian setting. Indeed our continuous symmetry generates the vector field $\frac{\pd}{\pd x^1}\in\Gamma(T\R^3)$. Since the Jacobian of $\theta_s$ is the identity matrix we have that $\left(\frac{\pd}{\pd x^1}\right)_\sharp$ is just  $\frac{\pd}{\pd x^1}\in\Gamma(T(T^\ast\R^3))$. By Theorem 6.15,, it follows that $\left(\frac{\pd}{\pd x^1}\right)_\sharp$ is a continuous symmetry with conserved quantity $\alpha\left(\left(\frac{\pd}{\pd x^1}\right)_\sharp\right)=\xi_1$. But $\xi_1$ is just $mv^1$, which when applied to a motion is the momentum in the $x^1$-direction. 
 \end{example}

\subsection{The Converse of Noether's Theorem in the Lagrangian Setting}

In subsection 6.1 we showed that given a Lagrangian system $(M,L)$ and a continuous symmetry $\{\theta_s:M\to M, \ s\in\R\}$, Noether's theorem gave the corresponding conserved quantity \[\left(\frac{\pd L}{\pd v^i}\right)\cdot\left(\frac{d}{ds}\circ\theta_{s,\ast}^i\right).\]With the results from the previous subsection we can now show that given a conserved quantity of this form, the corresponding family $\{\theta_s:M\to M, \ s\in\R\}$ is a continuous symmetry in $(M,L)$. Indeed, suppose that we have a family of diffeomorphisms $\{\theta_s:M\to M, \ s\in\R\}$ such that \[\dt\left(\left(\frac{\pd L}{\pd v^i}\right)\cdot\left(\frac{d}{ds}\circ\theta_{s,\ast}^i\right)\right)=0.\] This family of diffeomorphisms generates a vector field $Y\in\Gamma(TM)$. In Theorem 6.15, we showed that the vector field $Y_\sharp\in\Gamma(T(T^\ast M))$ is a continuous symmetry in $(T^\ast M,\omega, H=L^\ast)$ which preserves $\alpha$ and has conserved quantity $\alpha(Y_\sharp)$. However, by Lemma 2.24 the flow of $Y_\sharp$ is $\theta_{s,\sharp}$. It follows from the proof of Theorem 6.15 that $\{\theta_s:M\to M, \ s\in\R\}$ is a continuous symmetry in $(M,H^\ast)=(M,L)$. Hence, we have the following diagram, where within each brace we are considering the induced Hamiltonian or Lagrangian system:

\[
\begin{tikzpicture}
\node (x1) at (0,0) 
{$\left\{\begin{array}{c}\text{Conserved quantities}\\ \text{of the form }\left(\frac{\pd L}{\pd v^i}\right)\left(\frac{d}{ds}\circ\theta_{s,\ast}^i\right)\\ \text{for some family of maps} \\\{\theta_s:M\to M\}\end{array}\right\}$};
\node (m1) at (0,5) 
 {$\left\{\begin{array}{c}\text{Families of diffeomorphisms}\\ \{\theta_s:M\to M \ , \ s\in\R\} \\ \text{ such that }(\widetilde\theta_s)^\ast L=L\end{array}\right\}$};
\node (x2) at (8,0) 
{$\left\{\begin{array}{c}f\in C^\infty(T^\ast M)\text{ with }f=\alpha(W)\\ \text{ for some } W\in\Gamma(T(T^\ast M))\\ \text{such that } \mathcal L_W\alpha=0\text{ and }\{f,H\}=0\end{array}\right\}$};
\node (m2) at (8,5) 
{$\left\{ \begin{array}{c}X\in\Gamma(T(T^\ast M))\\ \text{ such that}\\ 0=\mathcal L_X\omega=\mathcal L_X\alpha=\mathcal L_XH \end{array}\right\}$};
\draw[->] (m1) to node[left] {}  (x1);
\draw[<->] (m1) to node[above] {}  (m2);
\draw[<->] (x1) to node[below] {}  (x2);
\draw[<->] (m2) to node[right] {}  (x2);
\end{tikzpicture}
\]

This diagram proves the converse of Noether's theorem in the Lagrangian setting when the continuous symmetries are restricted to be of the above form.
Using this diagram we can exhibit a conserved quantity in a Lagrangian system which does not arise via Noether's theorem from a continuous symmetry.

\begin{example}\bf{(The Laplace-Runge-Lenz Vector)}\rm
\vspace{0.1cm}

Consider a particle of mass $m$ moving under a central force field in $\R^3$. Let $\vec r$ denote the position vector of this particle. For simplicity, endow $\R^3$ with the standard metric and let $(x^1,x^2,x^3)$ denote the standard coordinates on $\R^3$ and $(x^1,x^2,x^3,v^1,v^2,v^3)$ and $(x^1,x^2,x^3,\xi_1,\xi_2,\xi_3)$ the induced coordinates on $T\R^3$ and $T^\ast\R^3$ respectively. Let $L=K-U$ be the natural Lagrangian. As is the case for the gravitational and electrostatic forces, we assume our potential energy is of the form $U=-\frac{k}{\vec r}$ where $k$ is some constant. As in the proof of Proposition 5.14 we have that locally $\xi_i=mv^i$, the momentum in the $i$th direction of the particle. That is, $\xi_i=p_i$ as functions on $\R^6=T^\ast\R^3=T\R^3$. In these coordinates we have that the tautological one form is $\alpha=p_idx^i$. The \bf{Laplace-Runge-Lenz} \rm vector is defined to be \[\vec A=\vec p\times \vec L-\frac{mk\vec r}{\|\vec r\|}\]where $\vec p$ is the particle's momentum and $\vec L$ is the particle's angular momentum. By Proposition $3.11$ we have that the $\dt \vec L=0$. Identifying the vector field $\frac{\pd}{\pd r}$ with $\frac{\vec r}{\|\vec r\|}$ we have that \[ma=m\ddot{\vec r}=F=-\nabla U=-\frac{\pd U}{\pd r}\frac{\pd}{\pd r}=-\frac{k}{\|r\|^3}\vec r.\]It follows that

\begin{align*}
\frac{d\vec A}{dt}&=\dot{\vec p}\times \vec L+\vec p\times \dot{\vec L}-\dt\left(\frac{mk}{\|\vec r\|}\vec r\right)\\
&=\dot{\vec p}\times \vec L-\dt\left(\frac{mk}{\|\vec r\|}\vec r\right)\\
&=-\frac{mk}{\|\vec r\|^3}\left(\vec r\times (\vec r\times \dot{\vec r})\right)-\dt\left(\frac{mk}{\|\vec r\|}\vec r\right)\\
&=-\frac{mk}{\|\vec r\|^3}\left((\vec r\cdot\dot{\vec r})\vec r-\|\vec r\|^2\dot{\vec r}\right) -\dt\left(\frac{mk}{\|\vec r\|}\vec r\right)\\
&=-\frac{mk}{\|\vec r\|^3}\left(\frac{1}{2}\dt\left(\vec r\cdot\vec r\right)\vec r-\|\vec r\|^2\dot{\vec r}\right) -\dt\left(\frac{mk}{\|\vec r\|}\vec r\right)\\
&=-\frac{mk}{\|\vec r\|^3}\left(\frac{1}{2}\dt\left(\|\vec r\|^2\right)\vec r-\|\vec r\|^2\dot{\vec r}\right) -\dt\left(\frac{mk}{\|\vec r\|}\vec r\right)\\
&=-\frac{mk}{\|\vec r\|^3}\left(\|\vec r\|\left(\dt\|\vec r\|\right)\vec r-\|\vec r\|^2\dot{\vec r}\right) -\dt\left(\frac{mk}{\|\vec r\|}\vec r\right)\\
&=mk\left(\frac{\dot{\vec r}}{\|\vec r\|}-\frac{\left(\dt\|\vec r\|\right)}{\|\vec r\|^2}\vec r\right)-\dt\left(\frac{mk}{\|\vec r\|}\vec r\right)\\
&=\dt\left(\frac{mk}{\|\vec r\|}\vec r\right)-\dt\left(\frac{mk}{\|\vec r\|}\vec r\right)\\
&=0
\end{align*}
Hence the first component of $\vec A$, which is \[A^1=(\vec p\times \vec L)^1-\frac{mk x^1}{\|\vec r\|}=p^2L^3-p^3L^2-\frac{mkx^1}{\|r\|}\]is a conserved quantity. That is, $A^1\in C^\infty(\R^6)$ is such that $\dt A^1=0$ on motions in the Hamiltonian system. It follows, from Theorem 6.15 that $A^1\circ \Phi_L$ is a conserved quantity in the Lagrangian system $(\R^3,K-U)$. In order to show that this conserved quantity does not arise from a continuous symmetry, by the above diagram we need to show that the induced continuous symmetry in the Hamiltonian system $(T^\ast\R^3, K+U)$ corresponding to the conserved quantity $A^1\circ\Phi_L\circ\Phi_H=A^1$ has flow which is not the lift of curves on the base manifold.
\vspace{0.1cm}

By definition, our Hamiltonian system is $(T^\ast \R^3, dx^i\wedge dp_i, L^\ast)$, where by the argument in the proof of Proposition 5.15 we have that \[K+U=H:=L^\ast=\frac{\|p\|^2}{2m}-\frac{k}{\|r\|}.\]Note that by definition 
\begin{align*}
X_{A^1}&=\frac{\pd A^1}{\pd p_i}\frac{\pd}{\pd x^i}-\frac{\pd A^1}{\pd x^i}\frac{\pd}{\pd p_i}\\
&=L^3\frac{\pd}{\pd x^2}-L^2\frac{\pd}{\pd x^3}-\frac{mk}{r}\frac{\pd}{\pd p_1}+\frac{mkx^1}{\|\vec r\|^2}\frac{\pd r}{\pd x^i}\frac{\pd}{\pd p_i}\\
&=L^3\frac{\pd}{\pd x^2}-L^2\frac{\pd}{\pd x^3}+\frac{mk}{\|\vec r\|^3}(-(x^2)^2-(x^3)^2)\frac{\pd}{\pd p_1}+\frac{mk}{\|\vec r\|^3}x^1x^2\frac{\pd}{\pd p_2}+\frac{mk}{\|\vec r\|^3}x^1x^3\frac{\pd}{\pd p_3}
\end{align*}
A straightforward calculation shows that $X_{A^1}H=0$ so that $A^1$ is a conserved quantity in the induced Hamiltonian setting.
\vspace{0.1cm}

\del{
A straightforward computation shows that \[X_{A^1}H=0.\]That is, $\{A^1,H\}=0$ so that $A^1$ is a conserved quantity in the Hamiltonian system $(T^\ast M,\omega,H)$.}

To find the flow, or integral curves $\gamma(t)=(x(t),\xi(t))$, of the continuous symmetry $X_{A^1}$ we need to solve the Hamilton equations:
\begin{align*}
(x^1)^\prime &= 0\\
(x^2)^\prime &= L^3\\
(x^3)^\prime&=-L^2\\
(p_1)^\prime&=-\frac{mk}{\|\vec r\|^3}(-(x^2)^2-(x^3)^2)\\
(p_2)^\prime&=\frac{mk}{\|\vec r\|^3} x^1 x^2\\
(p_3)^\prime&=\frac{mk}{\|\vec r\|^3}x^1x^3
\end{align*}

All we are trying to show is that the flow of $X_{A^1}$ is not the lift of a one parameter family of diffeomorphisms on $M$. But by definition, given a continuous symmetry $\theta_s:M\to M$, the lift $\theta_{s,\sharp}$ is equal to $(\theta_s^\ast)^{-1}$, which is a linear function on each fibre of the cotangent bundle . In particular then, if the flow of $X_{A^1}$ came from lifting curves on $M$ it would be that each $(p_k)^\prime$, for $k=1,2,3,$ were a linear function of $p_1,p_2$ and $p_3$. However, we can see immediately from the form of the above ODE's that this is not the case. Thus the flow of $X_{A^1}$ is not the lift of a continuous symmetry on $\R^3$. We thus have shown that $A^1\circ\Phi_L$ is a conserved quantity in the Lagrangian system $(\R^3,K-U)$ which does not come from a continuous symmetry.

\begin{remark}
An equivalent way to see this would be to show that $\mathcal L_{X_{A^1}}\alpha\not=0$. This can be done explicitly, but the calculation is quite lengthy. Note that although $\mathcal L_{X_{A^1}}\alpha\not=0$, we showed that $X_{A^1}$ is a continuous symmetry in $(T^\ast\R^3,\omega,K+U)$ so that $\mathcal L_{X_{A^1}}\omega=0$. That is, not all continuous symmetries in Hamiltonian systems of the form $(T^\ast M,\omega,H)$ need to preserve the tautological $1$-form.
\end{remark}

\del
{
To see this, we do not need to solve the ODE completely. Indeed, a computation shows that if $a=x^1(0)\ ,\  b=x^2(0)-L^3$ and $c=x^3(0)+L^2$ we get that \[p_1(t)= \frac{\log\left(\sqrt{a^2+b^2}\sqrt{t^2(a^2+b^2)+c^2}+a^2t+b^2t\right)}{\sqrt{a^2+b^2}}-\frac{t}{\sqrt{t^2(a^2+b^2)+c^2}}+p_1(0).\]It follows that if $\Psi_t$ is the flow of $X_{A^1}$ it cannot be that $\Psi_t=\theta_{t,\sharp}$ for some $\theta_t:M\to M$. Indeed, if it were a lift, then $(p_1)^\prime(t)$ would be a linear function of $p_1(t) \ , \ p_2(t)$ and $p_3(t)$ since the pullback of a diffeomorphism is linear. However, it is clear that this is not the case. It follows from the above diagram that $A^1$ cannot come from a continuous symmetry. Indeed, if the family $\{\theta_s:M\to M\}$ gave this corresponding symmetry, then letting $X\in\Gamma(TM)$ denote the vector field with flow $\theta_s$, it would be that $X_\sharp=X_{A^1}$ contradicting what we just argued.

}

\subsection{Relaxing the Definitions of Symmetries and Conserved Quantities}
\vspace{0.1cm}

With the Laplace-Runge-Lenz vector in mind, it is an interesting question to consider what would happen if we relaxed the definitions of continuous symmetry and conserved quantity. In particular, is there a way to make the following diagram traceable both clockwise and counterclockwise?
\[
\begin{tikzpicture}
\node (x1) at (0,0) 
{$\left\{\begin{array}{c}\text{Conserved quantities in a}\\ \text{Lagrangian system}\end{array}\right\}$};
\node (m1) at (0,4) 
 {$\left\{\begin{array}{c}\text{Continuous symmetries in a }\\ \text{ Lagrangian setting }\end{array}\right\}$};
\node (x2) at (8,0) 
{$\left\{\begin{array}{c}\text{Conserved quantities in a}\\ \text{Hamiltonian system}\end{array}\right\}$};
\node (m2) at (8,4) 
{$\left\{ \begin{array}{c}\text{Continuous symmetries in a }\\ \text{ Hamiltonian setting }\end{array}\right\}$};
\draw[-] (m1) to node[left] {}  (x1);
\draw[-] (m1) to node[above] {}  (m2);
\draw[-] (x1) to node[below] {}  (x2);
\draw[-] (m2) to node[right] {}  (x2);
\end{tikzpicture}
\]
We can see right away that with our definitions this is impossible. Indeed, any continuous symmetry in a Lagrangian system $(M,L)$ gives a continuous symmetry in $(T^\ast M,\omega,L)$ which preserves $\alpha$. However, we showed above that the Laplace-Runge-Lenz vector is a continuous symmetry which does not preserve $\alpha$. \\

Recall that the original definition of a continuous symmetry in the Hamiltonian setting does not include the requirement of preserving the tautological $1$-form. However, even without this requirement we saw that, in the Hamiltonian setting, continuous symmetries and locally defined conserved quantities are in one-to-one correspondence. If we return to the Lagrangian setting with the original definitions of symmetries and conserved quantity, we see that the two notions are not in one-to-one correspondence. Indeed, we showed in the previous subsection that the Laplace-Runge-Lenz vector is a conserved quantity, not of the form $\left(\frac{\pd L}{\pd v^i}\right)\left(\frac{d}{ds}\theta_{s,\ast}^i\right)$, which does not come from a continuous symmetry.\\

However, notice that the one-to-one correspondence of conserved quantities in the Lagrangian and Hamiltonian setting still holds. This is because if $F\in C^\infty(TM)$ is constant on the motions in $(M,L)$ then we have, by Theorem 5.19, that \[\dt (F\circ\Phi_H)(\Psi(t))=0\] for all motions $\Psi:\R\to T^\ast M$. That is, by Proposition 4.12, $F\circ\Phi_H$ is a conserved quantity in $(T^\ast M,\omega,H=L^\ast)$. Conversely, if $G\in C^\infty(T^\ast M)$ is such that $\{G,H\}=0$ then $\dt\left(G(\Psi(t))\right)=0$ for all integral curves $\Psi$ of $X_H$, then by Theorem 5.20 it follows that \[\dt\left(G\circ\Phi_L(\widetilde\gamma(t))\right)=0\] for all motions $\gamma:\R\to M$ in $(M,L)$. That is we have the one-to-one correspondence
\[
\begin{tikzpicture}
\node (x1) at (0,0)
{$\left\{\begin{array}{c}\text{Conserved quantities in a }\\ \text{ Lagrangian setting }\end{array}\right\}$};
\node(m2) at (9,0)
{$\left\{ \begin{array}{c}\text{Conserved quantities in a }\\ \text{ Hamiltonian setting }\end{array}\right\}.$};
\draw[<->] (x1) to node[left]{} (m2);
\end{tikzpicture}
\]

However, in order to reconcile the one-to-one correspondence between symmetries one needs to study the notion of `generalized symmetries'. A thorough treatment of this topic can be found in chapter 5 of \cite{Olver}. Roughly speaking,  these are a one-parameter family of `Lagrangian preserving' maps in a Lagrangian system $(M,L)$ which do not necessarily arise from lifting curves on $M$. That is, a generalized symmetry can be thought of as a one-parameter family $\{\Upsilon_s:TM\to TM \ , \ s\in\R\}$ such that $\Upsilon_s^\ast L=L$. Notice that with this new definition, Noether's theorem still holds as stated in the Lagrangian setting. The proof of Theorem 6.3 did not use the fact that we were lifting each $\theta_s$ to the tangent bundle. If one replaces $\widetilde\theta_s$ with $\Upsilon_s$, the proof of Theorem 6.3 is unchanged. It is likely that with the notion of generalized symmetries, one can show that all arrows in the following diagram go both ways. We have filled in the correspondences discussed in this paper.

\[
\begin{tikzpicture}
\node (x1) at (0,0) 
{$\left\{\begin{array}{c}\text{Conserved quantities}\\ F\in C^\infty (TM)\\  \text{ such that } \dt F=0\end{array}\right\}$};
\node (m1) at (0,5) 
 {$\left\{\begin{array}{c}\text{Families of diffeomorphisms}\\ \{\Upsilon_s:TM\to TM \ , \ s\in\R\} \\ \text{ such that }(\Upsilon_s)^\ast L=L\end{array}\right\}$};
\node (x2) at (8,0) 
{$\left\{\begin{array}{c}f\in C^\infty(T^\ast M)\\ \text{ such that }\\ \text{locally } \{f,H\}=0 \end{array}\right\}$};
\node (m2) at (8,5) 
{$\left\{ \begin{array}{c}X\in\Gamma(T(T^\ast M))\\ \text{ such that}\\ \mathcal L_X\omega=0=\mathcal L_XH \end{array}\right\}$};
\draw[->] (m1) to node[left] {}  (x1);
\draw[-] (m1) to node[above] {}  (m2);
\draw[<->] (x1) to node[below] {}  (x2);
\draw[<->] (m2) to node[right] {}  (x2);
\end{tikzpicture}
\]

\end{example}

\del{

The converse of Noether's theorem in the Lagrangian setting is not always true. That is, for a conserved quantity $F\in C^\infty (TM)$ there does not always exist a continuous symmetry $\{\theta_s:M\to M, s\in\R\}$.\\

FOR EXAMPLE!!\\

Let $F\in C^\infty(TM)$ be a conserved quantity and consider $F\circ \Phi_L^{-1}=F\circ\Phi_{L^\ast}$. Consider the Hamiltonian system $(T^\ast M,\omega, H:=L^\ast)$. By the theorem, for any integral curve $\Psi$ of $X_H$, we have that \[\dt F\circ\Phi_{L^\ast}(\Psi(t))=0\]But this means that $\left\{F\circ\Phi_{L^\ast},\Phi_{L^\ast}\right\}=0$ showing that $F\circ\Phi_{L^\ast}$ is a conserved quantity. Let $X\in\Gamma(T(T^\ast M))$ denote the induced vector field. Since $F\circ\Phi_{L^\ast}=X\hk\alpha$...\\

 DOES IT??  NO, THIS ONLY HOLDS IF $\alpha$ PRESERVED. CAN'T CONCLUDE  FROM THIS THAT FLOW OF $X$ IS THE LIFT OF CURVES, AND THAT THESE CURVES ARE A SYMMETRY.

}

\newpage


\begin{thebibliography}{99}
\vspace{0.2cm}

\bibitem{Da Silva} Ana Cannas da Silva.
{\em Lectures on Symplectic Geometry, } Springer-Verlag, 2001.

\bibitem{Arnold} V.I. Arnold.
{\em Mathematical Methods of Classical Mechanics, } 2nd edition, Springer-Verlag, 1989

\bibitem{Goldstein}Herbert Goldstein. 
{\em Classical Mechanics, } 2nd edition, Addison-Wesley Publishing Company, 1981.

\bibitem{Lee} John M. Lee.
{\em Introduction to Smooth Manifolds, } 2nd edition, Springer, 2000.

\bibitem{Olver} Peter J. Olver.
{\em Applications of Lie Groups to Differential Equations, }2nd edition, Springer-Verlag, 1986.

\bibitem{Taylor} John R. Taylor.
{\em Classical Mechanics,} University Science Books, 2005.

\bibitem{Tu} Loring W. Tu.
{\em An Introduction to Manifolds, } 2nd edition, Springer, 2011.

\bibitem{Abraham} Ralph Abraham. 
{\em Foundations of Mechanics, } W.A. Benjamin, 1967.

\bibitem{Olver}Peter J. Olver.
{\em Applications of Lie Groups to Differential Equations, }2nd edition, Springer-Verlag, 1986.

\bibitem{Lee2}John M. Lee.
{\em Riemannian Manifolds: An Introduction to Curvature, } Springer-Verlag, 1997.

\bibitem{Tong} David Tong.
{\em Lectures on Classical Dynamics, } http://www.damtp.cam.ac.uk/user/tong/dynamics.htm, University of Cambridge, 2012.

\bibitem{Butterfield} Jeremy Butterfield.
{\em On Symmetry and Conserved Quantities in Classical Mechanics, }University of Western Ontario Series in Philosophy of Science, 2006

\bibitem{Bell}Jordan Bell.
{\em The Legendre Transform, }http://individual.utoronto.ca/jordanbell/notes/legendre.pdf, University of Toronto, 2014

\bibitem{Linear Algebra} Stephen H. Friedberg, Arnold J. Insel, Lawrence E. Spence. 
{\em Linear Algebra, } 4th edition, Prentice Hall, 2003.



\end{thebibliography}
\end{document}